\renewcommand{\epsilon}{\varepsilon}
\DeclareMathOperator{\E}{\ensuremath{\normalfont \textbf{E}}}
\newcommand{\eps}{\epsilon}
\newcommand{\poly}{\textnormal{poly}}
\renewcommand{\P}{\mathcal{P}}
\newcommand{\ph}{\hat{p}}
\newcommand{\Eh}{\hat{E}}
\newcommand{\Tt}{\tilde{T}}
\newcommand{\Th}{\hat{T}}
\newcommand{\Et}{\tilde{E}}
\renewcommand{\Eh}{\hat{E}}
\newcommand{\Ft}{\tilde{F}}
\newcommand{\wt}{\tilde{w}}
\newcommand{\MST}{\textnormal{MST}}
\newcommand{\TSP}{\textnormal{TSP}}
\newcommand{\C}{\mathcal{C}}
\newcommand{\D}{\mathcal{D}}
\newcommand{\A}{\mathcal{A}}
\renewcommand{\emptyset}{\varnothing}
\newcommand{\Boruvka}{Bor\r{u}vka}
\renewcommand{\H}{\mathcal{H}}
\renewcommand{\P}{\mathcal{P}}
\newcommand{\Ph}{\widehat{\P}}
\let\LL\L
\renewcommand{\L}{\mathcal{L}}
\newcommand{\sP}{\P}
\newcommand{\sPh}{\hat{\sP}}
\newcommand{\sC}{\C}
\newcommand{\mc}[1]{\mathcal{#1}}
\newcommand{\MPX}{{\textnormal{\tiny MPX}}}
\newcommand{\onetwocycle}[0]{{\normalfont 1vs2\textsc{-Cycle}}}
\newcommand{\ceil}[1]{{\left\lceil{#1}\right\rceil}}
\newcommand{\card}[1]{\lvert#1\rvert}
\DeclareMathOperator*{\Prob}{\ensuremath{\textnormal{Pr}}}
\renewcommand{\Pr}{\Prob}
\DeclareMathOperator*{\argmin}{arg\,min}
\newcommand{\Ot}{\ensuremath{\widetilde{O}}}
\crefname{lemma}{Lemma}{Lemmas}
\crefname{theorem}{Theorem}{Theorems}
\crefname{property}{Property}{Properties}
\crefname{claim}{Claim}{Claims}
\crefname{result}{Result}{Results}
\crefname{conj}{Conjecture}{Conjectures}
\crefname{definition}{Definition}{Definitions}
\crefname{observation}{Observation}{Observations}
\crefname{proposition}{Proposition}{Propositions}
\crefname{assumption}{Assumption}{Assumptions}
\crefname{line}{Line}{Lines}
\crefname{figure}{Figure}{Figures}
\crefname{equation}{}{}
\crefname{section}{Section}{Sections}
\crefname{appendix}{Appendix}{Appendices}
\crefname{problem}{Problem}{Problems}
\crefname{algCounter}{Algorithm}{Algorithms}
\Crefname{algCounter}{Algorithm}{Algorithms}
\newtheorem{problem}{Problem}
\newtheorem{theorem}{Theorem}
\newtheorem{lemma}{Lemma}[section]
\newtheorem{proposition}[lemma]{Proposition}
\newtheorem{conj}{Conjecture}
\newtheorem{definition}[lemma]{Definition}
\newtheorem{claim}[lemma]{Claim}
\newtheorem{remark}{Remark}
\newtheorem*{remark*}{Remark}
\definecolor{mylightgray}{RGB}{230,230,230}
\algnewcommand{\IIf}[2]{\textbf{if} #1 \textbf{then} #2}
\algnewcommand{\EndIIf}{\unskip\ \algorithmicend\ \algorithmicif}
\newenvironment{graytbox}{
\par\addvspace{0.1cm}
\begin{tcolorbox}[width=\textwidth,
                  boxsep=5pt,
                  left=1pt,
                  right=1pt,
                  top=2pt,
                  bottom=2pt,
                  boxrule=0pt,
                  arc=0pt,
                  colback=mylightgray,
                  colframe=black,
                  ]
}{
\end{tcolorbox}
}
\newenvironment{whitetbox}{
\par\addvspace{0.1cm}
\begin{tcolorbox}[width=\textwidth,
                  boxsep=5pt,
                  left=1pt,
                  right=1pt,
                  top=2pt,
                  bottom=2pt,
                  boxrule=1pt,
                  arc=0pt,
                  colframe=black,
                  colback=white
                  ]
}{
\end{tcolorbox}
}
\newcounter{algCounter}
\newcounter{myalgcounter}
\crefname{myalgcounter}{Algorithm}{Algorithms}
\Crefname{myalgcounter}{Algorithm}{Algorithms}
\definecolor{mygreen}{RGB}{10,150,110}
\definecolor{myred}{RGB}{150,10,20}
\renewcommand{\paragraph}{%
  \@startsection{paragraph}{4}%
  {\z@}{10pt}{-1em}%
  {\normalfont\normalsize\bfseries}%
}
\title{Massively Parallel Minimum Spanning Tree \\ in General Metric Spaces}
\author{
Amir Azarmehr\\{\em Northeastern University} \and 
Soheil Behnezhad \\{\em Northeastern University} \and
Rajesh Jayaram \\{\em Google Research}
\and
Jakub \LL{}\k{a}cki \\{\em Google Research}
\and
Vahab Mirrokni \\{\em Google Research}
\and
Peilin Zhong \\{\em Google Research}
}
\date{}
\begin{document}

\maketitle

\thispagestyle{empty}
\begin{abstract}
\parskip=5pt

    We study the minimum spanning tree (MST) problem in the massively parallel computation (MPC) model. Our focus is particularly on the {\em strictly sublinear} regime of MPC where the space per machine is $O(n^\delta)$. Here $n$ is the number of vertices and constant $\delta \in (0, 1)$ can be made arbitrarily small. The MST problem admits a simple and folklore $O(\log n)$-round algorithm in the MPC model. When the weights can be arbitrary, this matches a conditional lower bound of $\Omega(\log n)$ which follows from a well-known \onetwocycle{} conjecture. As such, much of the literature focuses on breaking the logarithmic barrier in more structured variants of the problem, such as when the vertices correspond to points in low- \cite[STOC'14]{AndoniNikolov} or high-dimensional Euclidean spaces \cite[SODA'24]{jayaram2024massively}.
    
    In this work, we focus more generally on metric spaces. Namely, all pairwise weights are provided and guaranteed to satisfy the triangle inequality, but are otherwise unconstrained. We show that for any $\varepsilon > 0$, a $(1+\varepsilon)$-approximate MST can be found in $O(\log \frac{1}{\varepsilon} + \log \log n)$ rounds, which is the first $o(\log n)$-round algorithm for finding any constant approximation in this setting. Other than being applicable to more general weight functions, our algorithm also slightly improves the $O(\log \log n \cdot \log \log \log n)$ round-complexity of \cite[SODA'24]{jayaram2024massively} and significantly improves its approximation from a large constant to $1+\varepsilon$.
    
    On the lower bound side, we prove that under the \onetwocycle{} conjecture, $\Omega(\log \frac{1}{\varepsilon})$ rounds are needed for finding a $(1+\varepsilon)$-approximate MST in general metrics. This implies that $(i)$ the \mbox{$\varepsilon$-dependency} of our algorithm is optimal, $(ii)$ it is necessary to {\em approximate} MST in order to beat $\Omega(\log n)$ rounds in the metric case, and $(iii)$ computing metric MST is strictly harder than computing MST in low-dimensional Euclidean spaces.

    It is also worth noting that while many existing lower bounds in the MPC model under the \onetwocycle{} conjecture only hold against ``component-stable'' algorithms, our lower bound applies to {\em all} algorithms. Indeed, a conceptual contribution of our paper is to provide a natural way of lifting the component stability assumption which we hope to have other applications.
\end{abstract}

{
\clearpage
\tableofcontents{}
\vspace{1cm}
\renewcommand{\baselinestretch}{1}
\setcounter{tocdepth}{2}
\thispagestyle{empty}
\clearpage
}

\setcounter{page}{1}

\clearpage

\section{Introduction}

The minimum spanning tree (MST) problem is one of the most 
fundamental problems in combinatorial optimization, the 
algorithmic study of which dates back to the work of \Boruvka{} in 1926~\cite{boruuvka1926jistem}. Given a set of points and distances between the points, the goal is to compute a  minimum-weight tree over the points. The MST problem has received a tremendous amount of attention from the algorithm design community, and has been studied in a wide variety of computational models, 
leading to a large toolbox of methods for the problem~\cite{Charikar:2002,IT03,indyk2004algorithms,10.1145/1064092.1064116,har2012approximate,AndoniNikolov,andoni2016sketching,bateni2017affinity,czumaj2009estimating,CzumajEFMNRS05,CzumajS04,chazelle00,chazellerubinfeld}.


To deal with the sheer size of these modern embedding datasets, the typical approach is to implement algorithms in massively parallel computation systems such as MapReduce~\cite{dean2004mapreduce,dean2008mapreduce}, Spark~\cite{zaharia2010spark}, Hadoop~\cite{white2012hadoop}, Dryad~\cite{isard2007dryad} and others. The \emph{Massively Parallel Computation (MPC)} model~\cite{karloff2010model,goodrich2011sorting,beame2017communication,AndoniNikolov} is a computational model for these systems that balances accurate modeling with theoretical elegance.

\paragraph{The MPC Model:} The input, which in the case of the MST problem, is the edge set of a weighted graph $G=(V, E)$ with $n$ vertices and $m$ edges, is initially distributed across $M$ machines. Each machine has space $S = n^\delta$ words, where constant $\delta \in (0, 1)$ can be made arbitrarily small.\footnote{This is known as the {\em strictly sublinear} variant of the MPC model. The easier regime of MPC where the space per machine is $S=O(n)$ or slightly super linear in $n$ has also been studied extensively in the literature.} Ideally, we set $M = \widetilde{O}(m/S)$ so that the total space across all machines (i.e., $M \cdot S$) is linear in the input size $O(m)$, ensuring that there is just enough space to store the entire input. Computation is performed in synchronous {\em rounds}. During each round, each machine can perform any computation on its local data and send messages to other machines. Messages are delivered at the beginning of the next round. A crucial constraint is that the total size of messages sent and received by each machine in a round must be at most $S$. The primary goal is to minimize the number of rounds, which is the main bottleneck in practice.

The MST problem, in particular, has been extensively studied in the MPC model \cite{AndoniNikolov,bateni2017affinity,karloff2010model,andoni2018parallel,lattanzi2011filtering,Nowicki-STOC21,ahanchi2023massively,chen2022new,jayaram2024massively}. For arbitrary edge-weights, algorithms that obtain an exact MST in $O(\log n)$ rounds are known using connected components algorithms \cite{karloff2010model,andoni2018parallel,behnezhad2019near,coy2022deterministic}. While an $O(\log n)$-round algorithm is often considered too slow in the MPC model (see e.g. \cite{andoni2018parallel,behnezhad2019near} and the references therein), improving the round complexity to sublogarithmic is unlikely. In particular, it is not hard to see that such an algorithm would refute the following widely believed \onetwocycle{} conjecture {\cite{yaroslavtsev2018massively,roughgarden2018shuffles,andoni2018parallel,behnezhad2019near,assadi2019massively}}:

\begin{conj}[The \onetwocycle{} Conjecture]
    There is no MPC algorithm with $S=n^{1-\Omega(1)}$ local space and $M=\poly(n)$ machines that can distinguish whether the input is a cycle on $2n$ vertices or two cycles on $n$ vertices each in $o(\log n)$ rounds.
\end{conj}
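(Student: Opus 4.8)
This is a \emph{conjecture}, not a theorem: a complete proof would be a landmark result, and none is known. What I can offer is the intuition that is believed to underlie such a proof, together with the precise point at which all current techniques stall.

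The morally ``right'' argument is an information-locality bound. The two input ensembles --- a single $2n$-cycle versus two vertex-disjoint $n$-cycles --- are \emph{locally} indistinguishable: every vertex has degree exactly $2$, and every ball of radius $o(n)$ around any vertex is a simple path in both cases; the ensembles differ only in one global bit (whether following a cycle for $n$ steps returns to the start). Hence any successful algorithm must, informally, collect the input edges along some connected stretch of more than half a cycle, i.e.\ $\Omega(n)$ consecutive edges. To turn this into a round lower bound one wants a ``doubling lemma'': for each machine $M$ and round $r$ define an information set $I_M^{(r)}$, the set of input edges whose values can have influenced $M$'s state after round $r$, and prove $|I_M^{(r)}|$ grows by at most a constant factor per round, so $|I_M^{(r)}| \le S\cdot c^{r}$ for some constant $c$. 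Since distinguishing the two cases forces some machine's information set to span $\Omega(n)$ consecutive edges of a cycle while $S = n^{1-\Omega(1)}$, this would give $r = \Omega(\log n)$. Concretely the plan is: (i) set up $I_M^{(r)}$ and a coupling argument showing that if no machine's information set ``wraps'' a half-cycle, then the output distribution is identical in the two cases, so the error is $\ge 1/2 - o(1)$; (ii) prove the doubling lemma; (iii) combine (i) and (ii).

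The obstacle --- and the reason the conjecture is open --- is step (ii), which is simply \emph{false} in the naive form. MPC communication is all-to-all and adaptive: in round $r+1$ a machine may receive, and relay, data originating at an arbitrary machine, so $I_M^{(r+1)}$ is not a local expansion of $I_M^{(r)}$; information can jump to a distant part of the graph in a single round. A correct argument must control information spread \emph{globally}, exploiting that the total space $M\cdot S = \poly(n)$ cannot simultaneously sustain many machines each holding a large, far-flung information set --- an entropy/compression or potential argument against arbitrary randomized protocols that nobody knows how to carry out. Moreover there is a hard barrier: by the transformation of \cite{roughgarden2018shuffles}, an unconditional $\omega(1)$-round MPC lower bound against \emph{general} algorithms for a problem in $\mathsf{P}$ (cycle-detection is such a problem) would imply $\mathsf{NC}^1 \subsetneq \mathsf{P}$, a circuit separation far beyond current techniques. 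For this reason, the realistic instantiations of the above plan only establish the statement against restricted algorithm classes --- most notably ``component-stable'' algorithms, where the doubling intuition \emph{can} be made rigorous --- which is precisely why the present paper treats the \onetwocycle{} conjecture as a hypothesis and derives conditional lower bounds from it rather than attempting to prove it outright.
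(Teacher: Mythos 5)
You have correctly identified that this statement is a conjecture, not a theorem: the paper offers no proof of it and uses it purely as a hardness hypothesis from which its conditional lower bounds (e.g.\ \cref{thm:lb}) are derived. Your discussion of the locality intuition, the failure of the naive doubling lemma against adaptive all-to-all communication, and the circuit-complexity barrier is sound, and declining to attempt a proof is exactly the right call.
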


Because of this conditional lower bound, much of the work in the literature of massively parallel MST algorithms has been on more structured classes of graphs, particularly metric spaces.

\paragraph{Massively Parallel MST in Metric Spaces:} A decade ago, \citet{AndoniNikolov} showed that if the vertices correspond to points in a $d$-dimensional Euclidean space for $d = o(\log n)$, then a $(1+o(1))$-approximate MST can be found in $O(1)$ rounds. For high-dimensional Euclidean spaces, where $d = \Omega(\log n)$, \citet{jayaram2024massively} showed recently that an $O(1)$-approximate MST can be found in $O\big(\log \log n \cdot \log \log \log n \big)$ rounds, breaking the logarithmic-round barrier.

Unfortunately, the progress on the Euclidean version of the problem does not aid in the construction of algorithms for more general metric spaces where the input specifies the pairwise weights $w: \binom{n}{2} \to \mathbb{R}_{+}$ and these weights are only guaranteed to satisfy the triangle inequality. In particular, both the algorithms of \cite{jayaram2024massively}  and \cite{AndoniNikolov} crucially rely on the construction of space partitions of Euclidean space, via $\eps$-nets for low-dimensional space, and hypergrid-decompositions for high-dimensional space. These partitions can be efficiently constructed in the MPC model, and are crucial to avoiding $O(\log n)$ round algorithms, as they allow one to bypass running complete connected component algorithms by exploiting the geometry of the space. However, in the general metric case, it is not clear how to exploit the geometry, as we have neither bounded size $\eps$-nets or natural counterparts to hypergrid partitions. Thus it is not clear whether the hardness of the MST problem extends to metric inputs, or whether, as for the Euclidean case, the metric MST problem admits sublogarithmic round algorithms.  
Specifically,  we address the following question:

\begin{quote}
	\begin{center}
		{\it  Do there exist $o(\log n)$ round MPC algorithms for MST in general metrics?}
	\end{center}
\end{quote}

\subsection{Our Contribution}

In this work, we address the above question affirmatively, by proving the following theorem:

\begin{graytbox}
\begin{theorem} \label{thm:main}
	Given a metric, for any fixed $\delta \in (0, 1)$ and any $\epsilon > 0$, a $(1 + \epsilon)$-approximate MST can be computed in $O\left(\log \frac{1}{\epsilon} + \log \log n\right)$ rounds of the MPC model, with $O(n^\delta)$ space per machine and $\widetilde{\Theta}(n^2)$ total space which is near-linear in the input size.
\end{theorem}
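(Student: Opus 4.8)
The plan is to build the tree with a \Boruvka{}-type algorithm that sweeps a geometric sequence of distance scales, uses the triangle inequality to coarsen the point set at each scale so that no individual scale is expensive, and groups the scales so that we do not pay a round per scale. First I would normalize: rescale so the minimum pairwise distance is $1$ and let $D=\max_{u,v}d(u,v)$ be the diameter. The tree path between the two farthest points already has weight $\ge D$ by the triangle inequality, so $\MST\ge D$; and since the tree has only $n-1$ edges, all pairwise distances below $\epsilon D/n$ together contribute at most $\epsilon D\le\epsilon\cdot\MST$, so I may round every such distance up to $\epsilon D/n$, which preserves the metric and perturbs the optimum by at most a $(1+\epsilon)$ factor. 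Afterwards the aspect ratio is $\poly(n/\epsilon)$, so it suffices to work with the $O(\log(n/\epsilon))$ scales $2^0,2^1,\dots$ (or the scales $(1+\epsilon)^i$ if one wants the scale framework itself to be $(1+\epsilon)$-accurate).

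Next, the clustering framework. Maintain a partition of the points into clusters, starting from singletons; in scale step $i$, merge all clusters linked by an edge of weight $<2^{i+1}$ and add a spanning forest of the resulting scale-$i$ cluster graph to the output. A Kruskal-style exchange argument shows the output weighs $O(1)\cdot\MST$ (and $(1+\epsilon)\cdot\MST$ with $(1+\epsilon)$-scales): if $c_i$ is the number of connected components of the subgraph $G_{<2^i}$ on edges of weight $<2^i$, then after step $i$ the clusters are exactly those of $G_{<2^{i+1}}$, so step $i$ adds $c_i-c_{i+1}$ forest edges of weight $<2^{i+1}$ each, whereas Kruskal must pay at least $2^i$ for each of its $c_i-c_{i+1}$ edges in that weight band. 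Crucially this decouples the process: the clustering after step $i$ is a function of the metric and the scale only, so the real tasks are (a) computing the connected components of the graphs $G_{<r}$ fast, and (b) realizing them by a cheap forest.

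For speed, the metric is what we exploit. Before running connectivity at scale $r$, I would contract a low-diameter decomposition of the $r$-close graph -- a Miller--Peng--Xu-style low-diameter clustering is the natural tool -- so that, up to a $(1+\epsilon)$ perturbation of weights, the instance on which connectivity is actually computed has polylogarithmic diameter; keeping the full $n\times n$ distance matrix inside the $\widetilde\Theta(n^2)$ total space is what lets both the decomposition and the ensuing connected-components computation run in $O(\log\log n)$ rounds (at total space $n^2$ the usual $\log\log_{m/n}n$ overhead of MPC connectivity vanishes). To avoid a round per scale I would process the $O(\log(n/\epsilon))$ scales in a small number of bands whose widths grow doubly exponentially, using that once the lower bands are processed the clusters are coarse enough that an entire band collapses cheaply. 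The additive $O(\log\tfrac1\epsilon)$ term is the one place where genuine $\epsilon$-granularity is unavoidable: resolving the internal structure of an $\epsilon$-scale neighborhood -- a path/cycle-like piece of $\Theta(1/\epsilon)$ clusters -- takes $\Theta(\log\tfrac1\epsilon)$ rounds of \Boruvka/pointer-jumping, run in parallel over all such pieces.

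The main obstacle, I expect, is the bookkeeping that stitches these pieces together: showing that the coarsening costs only a global $(1+\epsilon)$ factor with errors charged to disjoint sets of MST edges so that they do not compound across the bands; that the contracted instances are simultaneously small-diameter in every band; and that the whole schedule still fits in $\widetilde\Theta(n^2)$ total space and $O(n^\delta)$ space per machine while keeping the round count at a clean $O(\log\log n+\log\tfrac1\epsilon)$ rather than picking up extra $\log\log\log n$-type factors as in prior work. That last point -- precisely how the abundant total space is spent -- is where I expect the bulk of the technical effort to lie.
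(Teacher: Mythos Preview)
Your proposal assembles the right ingredients---the threshold-graph framework, MPX low-diameter decompositions in place of Euclidean grids, and a \Boruvka-type merge---but the mechanism you propose for avoiding a per-scale round cost is the one the paper deliberately abandons. Processing scales ``in bands whose widths grow doubly exponentially,'' with lower bands feeding into higher ones, is essentially the phase-2 recursion of \cite{jayaram2024massively}; that recursion has depth $\Theta(\log\log\log n)$ and is exactly the source of the extra factor you worry about in your last paragraph. You are right to flag it as the obstacle, but the proposal does not say how to remove it, and your side claim that MPX contraction alone makes full connectivity run in $O(\log\log n)$ rounds is not justified: the contracted instance at a single scale can still have large hop diameter.

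The paper's move is different and, in a sense, simpler: it takes the level ratio $\alpha=\Theta((\log^2 n)/\epsilon)$, builds the intersected MPX hierarchy $\sP^\MPX$ once, and then processes \emph{all levels $t=\alpha^k$ in parallel and independently}---no bands, no recursion, no level feeding into another. At each level it runs a fixed $r=\Theta(\log\tfrac1\epsilon+\log\log n)$ rounds of leader compression on $\P^\MPX_t$ inside each cluster of $\P^\MPX_{\alpha t}$ (to build $\widehat{\P}_t$), and then the same number of rounds of a coin-flip \Boruvka{} variant on $\widehat{\P}_{t/\alpha}$ inside each cluster of $\widehat{\P}_t$ (to build the tree). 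Neither routine is run to completion: after $r$ rounds the number of unmerged components inside each cluster has shrunk by a $2^{-\Theta(r)}=\poly(\epsilon/\alpha)$ factor, and the residual components are joined \emph{arbitrarily} by edges of weight at most the cluster diameter $\alpha t$. The whole analysis is then a charging argument bounding the cost of those arbitrary joins against $\MST(G)$, using the MPX crossing probability $O(w\log n/(\alpha t))$ summed over $L=\log_\alpha W$ levels; the choice $\alpha\ge L\log n/\epsilon$ makes this sum $O(\epsilon)\cdot\MST(G)$. In particular, the $\log\log n$ and $\log\tfrac1\epsilon$ terms arise from the \emph{same} place---the truncation length $r$---not from separate mechanisms as in your sketch.
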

\end{graytbox}

Other than being applicable more generally to metric spaces, the algorithm of \cref{thm:main} also improves over the algorithm of \cite{jayaram2024massively} by shaving off a $(\log \log \log n)$ term from its round-complexity and improving its approximation from a large constant to $(1+\epsilon)$.
Also, we note that \cref{thm:main}, when combined with existing techniques, readily leads to an algorithm for the $(2 + \epsilon)$-approximation of TSP with the same round and space complexity (see \cref{sec:tsp}). This is possible due to the hierarchical structure of the MST computed by the algorithm.

\begin{remark}
    When $d \ll n$, our \cref{thm:main} uses more total space than the $d$-dimensional Euclidean space algorithms of \cite{jayaram2024massively,AndoniNikolov}. However, since the input has size $\Theta(n^2)$ for general metrics (as opposed to $O(nd)$ for $d$-dimensional Euclidean spaces) this is unavoidable. One may wonder if a more succinct representation of distances, say by taking shortest path distances over a sparse base graph, can reduce total space while keeping the round-complexity sublogarithmic. This is, unfortunately, not possible since the \onetwocycle{} instance as the base graph implies that $\Omega(\log n)$ rounds are necessary to tell if the MST cost is $2n-1$ or $\infty$.
\end{remark}

Our next contribution is to prove the following lower bound for computing approximate MSTs in general metrics.
\begin{graytbox}
\begin{restatable}{theorem}{thmlb}\label{thm:dynamic-final}
    \label{thm:lb}
	Under the \onetwocycle{}  conjecture, 
	any MPC algorithm with $n^{1-\Omega(1)}$  space per machine and $\poly(n)$ total space requires $\Omega\left(\log \frac{1}{\epsilon}\right)$ rounds to compute a $(1 + \epsilon)$-approximate MST for any $1/n \leq \epsilon \leq 1$, even in $(1, 2)$-metrics where all distances are either 1 or 2.
\end{restatable}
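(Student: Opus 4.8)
The plan is to reduce the \onetwocycle{} problem to $(1+\epsilon)$-approximate metric MST, working throughout in $(1,2)$-metrics. The crucial structural observation is that in a $(1,2)$-metric on $n$ points, if $G_1$ is the graph whose edges are exactly the distance-$1$ pairs and $G_1$ has $c$ connected components, then the minimum spanning tree has cost exactly $n + c - 2$: span each component with weight-$1$ edges (costing $n-c$) and join the $c$ components with $c-1$ weight-$2$ edges; optimality is immediate since any spanning tree uses at least $c-1$ edges between distinct components of $G_1$, each of weight $2$. Consequently a $(1+\epsilon)$-approximate MST determines $c$ to within $\pm O(\epsilon n)$, so it suffices to construct a family of graphs $G_1$ — encoded as $(1,2)$-metrics — that are disjoint unions of short cycles, for which distinguishing two values of $c$ differing by $\Theta(\epsilon n)$ is as hard as \onetwocycle{} at scale $\Theta(1/\epsilon)$.

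Concretely, assume $\epsilon \le 1/2$ (for $\epsilon > 1/2$ the claimed bound is $\Omega(1)$, which is trivial), set $q = \Theta(1/\epsilon)$ and $K = \Theta(\epsilon n)$ with $Kq = \Theta(n)$, and take $K$ independent \onetwocycle{} instances on $q$ vertices each — each being either a single cycle $C_q$ or a disjoint union $C_{q/2}\sqcup C_{q/2}$. Let $G_1$ be their disjoint union, and consider the associated $(1,2)$-metric on all $Kq$ points. A single-cycle instance contributes one component and a two-cycle instance contributes two, so $c = K + (\text{number of two-cycle instances}) \in [K, 2K]$ and the MST cost lies in $[\,n+K-2,\; n+2K-2\,]$, a multiplicative window of width $1 + \Theta(K/n) = 1 + \Theta(\epsilon)$. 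Tuning the hidden constants (e.g. $q \approx \lfloor 1/(2\epsilon)\rfloor$) so that this window strictly exceeds $1+\epsilon$, a $(1+\epsilon)$-approximate MST distinguishes ``all instances are single cycles'' from ``all are two-cycle.'' Importantly, the $n$ point labels are assigned by a uniformly random permutation, so the input reveals neither which points belong to which instance nor the cycle adjacencies in advance; recovering that structure is already a \onetwocycle{}-type task.

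It remains to prove that this distinguishing problem requires $\Omega(\log q) = \Omega(\log\frac{1}{\epsilon})$ rounds. Against \emph{component-stable} algorithms this is the familiar argument: take a single \onetwocycle{} instance on $\Theta(q)$ vertices, replicate it $K$ times (in $O(1)$ rounds via broadcast), run the assumed MST algorithm, and read the instance type off the MST cost; since a component-stable algorithm's verdict on a copy depends only on that copy, an $o(\log q)$-round MST algorithm would separate $C_q$ from $C_{q/2}\sqcup C_{q/2}$ in $o(\log q)$ rounds, contradicting the conjecture. This reduction in fact already works against \emph{all} algorithms in the regime where $1/\epsilon$ is not too small relative to the space per machine, i.e. a whole instance does not fit in one machine: then the space and machine budgets transfer cleanly to a genuine scale-$\Theta(1/\epsilon)$ \onetwocycle{} instance. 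The delicate regime — and the source of the main difficulty — is when $1/\epsilon \ll n^{1-\Omega(1)}$, so a single small instance fits on one machine; one cannot then invoke the conjecture per instance, and must rule out algorithms that exploit global structure of the combined $n$-point input.

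Handling this regime against all algorithms — the ``natural way of lifting component stability'' promised in the abstract — is the conceptual crux, and I expect the proof to exploit the many-copies design together with the hidden random partition. The intuition is that any $r$-round MPC algorithm exchanges only $O(n)$ words per round across all machines; since the $K$ instances sit in random positions, a typical instance communicates with the rest of the computation through only $O(q)$ words per round, so conditioning on this narrow interface, the algorithm's behavior on a typical instance approximates that of a component-stable algorithm seeing only that instance. Because correctly reporting the global verdict forces correctness on a constant fraction of the instances, an averaging argument should then extract an (essentially) component-stable $r$-round \onetwocycle{} algorithm at scale $q$ — possibly after first boosting the conjecture to an ``all-same promise over $K$ independent instances'' form via a direct-product-style step — whereupon the conjecture applies. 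The hard part is making this copy-localization quantitative: bounding the error from the $O(q)$ words of per-round cross-instance leakage, coping with adaptive routing of information, and, most delicately, ensuring that the extracted small-scale algorithm respects the $q^{1-\Omega(1)}$ per-machine space bound the conjecture demands rather than the ambient $n^{1-\Omega(1)}$ one. I expect the bulk of the technical work to go into this localization argument.
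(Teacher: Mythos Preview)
Your reduction idea is natural but leaves the central step as a hope rather than an argument, and the paper's actual proof takes a completely different route that sidesteps the obstacle you flag.

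Your gap is precisely the one you name: in the regime $1/\epsilon \ll n^{1-\Omega(1)}$, a single scale-$q$ instance fits on one machine, so the \onetwocycle{} conjecture at scale $q$ says nothing about algorithms with $n^{1-\Omega(1)}$ space. Your ``copy-localization'' sketch --- arguing that because the $K$ instances are randomly placed, a typical instance sees only $O(q)$ words of cross-talk per round and hence the algorithm behaves approximately component-stably on it --- is not a proof, and making it one would amount to a direct-product theorem for MPC round complexity. No such theorem is known, and the difficulty you yourself point to (getting the extracted small-scale algorithm to live within $q^{1-\Omega(1)}$ space rather than $n^{1-\Omega(1)}$) is exactly the barrier that forced prior work to assume component stability in the first place. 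So as written, your proposal establishes only the component-stable lower bound --- which the paper also proves, but relegates to an appendix.

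The paper's unconditional proof does \emph{not} embed many small instances. Instead it works with a \emph{single} full-scale \onetwocycle{} instance on $n$ vertices and uses the purported $R$-round MST algorithm as a black-box subroutine for \emph{neighborhood detection}. The key device is a random permutation of vertex IDs: after relabeling, any two isomorphic edges must be removed by the MST algorithm with identical probability (this is automatic from symmetry, not an assumption). One then performs a \emph{two-switch} on a pair of edges $(u_1,v_1),(u_2,v_2)$: if $u_1,u_2$ are within $\Theta(1/\epsilon)$ hops on the cycle, the switch creates a short cycle of length $\le 1/(20\epsilon)$, whose edges must be removed with probability $\ge 10\epsilon$ (since at least one must go), whereas edges elsewhere are removed with probability $\le 6\epsilon$ (by the approximation guarantee). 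Estimating these probabilities by $\widetilde{O}(1/\epsilon^2)$ parallel repetitions distinguishes the cases and hence detects $\Theta(1/\epsilon)$-neighborhoods in $O(R)$ rounds. With neighborhoods in hand, one rewires the length-$n$ cycle into $1/\epsilon$ cycles of length $\epsilon n$ (connect each vertex to those exactly $1/\epsilon$ hops away), and recurse. After $O(\log_{1/\epsilon} n)$ iterations the cycle is fully ordered, solving \onetwocycle{} in $O(R\cdot \log_{1/\epsilon} n)$ rounds; hence $R=\Omega(\log\frac{1}{\epsilon})$ under the conjecture. The point is that the reduction always operates at scale $n$, so the space mismatch you worried about never arises.
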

\end{graytbox}

Assuming the \onetwocycle{} conjecture, \cref{thm:lb} has the following implications: 

\begin{itemize}
    \item It shows that the $\epsilon$-dependency of our \cref{thm:main} is optimal. In particular, for $\epsilon \leq 1/\log n$, the upper and lower bounds of \cref{thm:main,thm:lb} match.
    \item It shows that a $(1+o(1))$-approximate MST of a general metric cannot be computed in $O(1)$ rounds (under \onetwocycle{}). Note that this is unlike the algorithm of \cite{AndoniNikolov} which obtains $(1+o(1))$-approximation of a $d$-dimensional Euclidean space in $O(1)$ rounds whenever $d = o(\log n)$. Therefore, computing MST in general metrics is strictly harder than in low-dimensional Euclidean spaces.
    \item It shows that {\em approximation} is necessary in order to achieve a sub-logarithmic round algorithm for MST in general metrics.
\end{itemize}

Finally, we believe that our techniques in proving \cref{thm:lb} might be of interest for other MPC lower bounds and particularly getting rid of a common ``component stability'' assumption. We discuss this in more detail in the following \cref{sec:component-stable-intro}.

\subsection{MPC Lower Bounds: Beyond Component Stability}\label{sec:component-stable-intro}

A  powerful approach for proving conditional MPC lower bounds for various graph problems is to combine {\em distributed LOCAL} lower bounds with the \onetwocycle{} conjecture \cite{GhaffariKU19,CzumajDP21a}. On the one hand, an $\Omega(t)$ round distributed LOCAL lower bound for a graph problem implies that vertices need to see $\Omega(t)$-hops away to decide on their output for that problem. On the other hand, the \onetwocycle{} conjecture essentially implies that to discover vertices that are $\Omega(t)$-hops away, one needs $\Omega(\log t)$ rounds in the sublinear MPC model. Thus, intuitively, an $\Omega(t)$ distributed LOCAL lower bound for a problem $\mc{P}$ should imply an $\Omega(\log t)$ round lower bound for $\mc{P}$ in MPC under the \onetwocycle{} conjecture.  

To make the connection above formal, existing methods \cite{GhaffariKU19,CzumajDP21a} require an additional assumption that the output of the MPC algorithm on one connected component should be independent of its output on other connected components---a property that is referred to as {\em component stability} in the literature. While many MPC algorithms are indeed component-stable, there are natural ones that are not---we refer interested readers to the paper of \citet{CzumajDP21a} for examples of such algorithms and other limitations of component stability. So, ideally, we would like to get rid of the component stability assumption. Unfortunately, there is still a huge gap between conditional lower bounds against component-stable and non-stable algorithms. We believe that our techniques in proving \cref{thm:lb} are a step forward towards proving MPC lower bounds without making the component stability assumption.

We show in \cref{subsec:lb-local-random} that any distributed local algorithm needs at least $\Omega(1/\epsilon)$ rounds for computing a $(1+\epsilon)$-approximate maximum matching in  $(1, 2)$-metrics (when the communication network is the one induced on the weight-1 edges). Using the framework of \cite{CzumajDP21a}, one can lift this to an MPC lower bound, showing that any \underline{component-stable} MPC algorithm needs $\Omega(\log 1/\epsilon)$ rounds to find a $(1+\epsilon)$-approximate MST in $(1, 2)$-metrics (see \cref{sec:lift-to-MPC}). But note that our \cref{thm:lb} applies to all MPC algorithms, not just component-stable ones. Instead of component stability, our proof of \cref{thm:lb} heavily relies on the fact that if two vertices $u$ and $v$ are {\em isomorphic} to one another, an MPC algorithm should have the same output distribution for both. Note that unlike component stability, this assumption comes without loss of generality as one can randomly relabel the vertex IDs and the algorithm cannot distinguish $u$ from $v$. We hope that our techniques for proving \cref{thm:lb} can pave the way for proving conditional (under \onetwocycle{} conjecture) MPC lower bounds against all MPC algorithms, not just component-stable ones.

\section{Technical Overview}

We present informal overviews of our algorithms and lower bounds of \cref{thm:main,thm:lb} in \cref{sec:tech-ub,sec:tech-lb}. Since the techniques in our upper and lower bounds are completely different, these two sections are independent and can be read in any order.

\subsection{Technical Overview of \cref{thm:main}}\label{sec:tech-ub}

We start with a by-now-standard framework for computing MST based on threshold graphs \cite{chazellerubinfeld,CzumajS04,Nowicki-STOC21,jayaram2024massively} and then discuss the challenges that arise in using it in the MPC model.

Given a metric space over $n$ vertices $V$ where the distance between $u, v\in V$ is denoted as $w(u,v)\in[1, W]$ ($W=\poly(n)$), the MST of $V$ can be obtained by the following simple variant of the Kruskal's algorithm. Let $G_{\leq t}$ be a threshold graph over $V$ where there is an edge between vertices $u, v \in V$ iff $w(u, v)\leq t$. 
Let $\P_t$ be the partition of $V$ where each vertex set in $\P_t$ is a connected component of $G_{\leq t}$.
Then $\P_1, \P_2,\P_4,\ldots,\P_{W}$ form a hierarchy of partitions for $V$, i.e., if $u$ and $v$ are in the same set of $\P_t$, they are also in the same set of $\P_{t'}$ for all $t'\geq t$, and $\P_{W}$ contains only one vertex set containing all the vertices.

For each $t\geq 1$, consider a set $C$ of $\P_t$.
Suppose $C$ is the union of sets $C_1,C_2,\ldots, C_r$ from $\P_{t/2}$ (define $\P_{1/2}$ as the trivial partition where each set is a singleton).
We regard each $C_i$ as a \emph{super-node} and compute an MST over $C_1,C_2,\ldots, C_r$.
Then, the union of the edges found by the above process for all $t$ and $C$ forms an MST for the whole metric.
Note that every edge between $C_i$ and $C_j$ in $G_{\leq t}$ must be in the range $[t/2, t]$.
This implies that if we compute any arbitrary spanning tree over $C_1,C_2,\ldots, C_r$ in $G_{\leq t}$ (instead of an MST), and take the union of all edges found, it gives a $2$-approximate MST. The approximation can be further improved to $(1+\epsilon)$ by setting the ratio between the thresholds to $(1+\epsilon)$ instead of $2$.

The problem with the framework above is that under the \onetwocycle{} conjecture, it is actually impossible to construct the connected components of each threshold graph in $o(\log n)$ rounds exactly (the graph $G_{\leq 1}$ may itself be the \onetwocycle{} instance). We first overview the approach of \cite{jayaram2024massively} which essentially approximates these components in the Euclidean space via \emph{shifted grids}. We then show how we go beyond shifted grids (which only exist for Euclidean spaces and not general metrics) and solve the problem in general metrics.

\vspace{-0.2cm}
\subsubsection*{MST in Euclidean Spaces via Shifted Grids: An Overview of \cite{jayaram2024massively}}

Motivated by the above MST approximation algorithm, \cite{jayaram2024massively} made the following observation: 
Suppose $\Ph_1,\Ph_2,\Ph_4,\ldots,\Ph_{W}$ is a randomized hierarchical partitioning of $V$ such that $(1)$ each vertex set in $\Ph_t$ is a subset of some connected component of $G_{\leq t}$, and $(2)$ for all $u,v\in V,t\geq 1$, the probability that $u$ and $v$ are in different sets of $\Ph_t$ is at most $w(u,v)/(t\cdot \poly(\log n))$. 
They proved that $\Ph_t$ is a suitable replacement for the set of connected components of $G_{\leq t}$.
That is, if we consider each set $C$ of $\Ph_t$ for $t\geq 1$, where $C$ is composed of sets $C_1,C_2,\ldots,C_r$ of $\Ph_{t/2}$, and find an arbitrary spanning tree in $G_{\leq t}$ connecting the super-nodes $C_1,\ldots,C_r$, then the union of all edges found by the above process gives an $O(1)$-approximate MST. 

The relaxed partitioning $\Ph_t$ does not increase the MST cost too much because the probability of cutting an MST edge $(u,v)$ at $\Ph_t$ is at most $w(u,v)/(t\cdot \poly(\log n))$, and cutting $(u,v)$ at $\Ph_t$ will introduce an additional cost of at most $t$.
Therefore, the overall expected overhead introduced by $(u,v)$ across all levels is at most $\sum_{t=1,2,4,\ldots,W} t\cdot w(u,v)/(t\cdot \poly(\log n)) = w(u,v)/ \poly(\log n)$. 
Therefore the problem becomes finding such a hierarchy of partitions $\Ph_{1/2},\Ph_1,\Ph_2,\ldots,\Ph_{W}$.

To find the hierarchical partitioning, \cite{jayaram2024massively} proposed the following 2-phase framework:
\begin{enumerate}
\item \textbf{Construct $\hat{P}_t$ for pivots $t=1/\alpha,1,\alpha,\alpha^2,\ldots,W$:} Here $\alpha$ is a sufficiently large $\poly(\log n)$. 
Firstly, an auxiliary hierarchy of partitions $\P'_{1/\alpha}, \P'_{2/\alpha}, \P'_{4/\alpha}\ldots, \P'_{W}$ is computed using randomly shifted hypergrids in Euclidean space. The diameter of each vertex set of $\P'_{t}$ is at most $t$, and the probability that two vertices $u, v$ are not in the same set of $\P'_{t}$ is at most $w(u,v)\cdot \poly(\log n)/t$.

Then, for each $t= 1,\alpha,\alpha^2,\ldots,W$, the partition $\Ph_t$ is obtained by computing the connected components of $G_{\leq t}$ after removing the edges that cross $\P'_{\alpha t}$.
To compute such connected components, one can start with contracting the vertices in each set of $\P'_t$ as a super-node since each set of $\P'_t$ has diameter at most $t$, and then compute connected components on the edge-removed $G_{\leq t}$.
The probability that an edge $(u,v)$ crosses $\Ph_t$ is equal to the probability that $(u,v)$ crosses by $\P'_{\alpha t}$ which is at most $w(u,v)\cdot \poly(\log n)/(\alpha t) = w(u,v)/(t\cdot \poly(\log n))$. This property is crucial to the analysis of the approximation ratio.
\item \textbf{In parallel for all $t=1/\alpha, 1, \alpha, \alpha^2,\ldots,W$, construct $\Ph_{2 t}, \Ph_{4 t}, \Ph_{8 t},\ldots, \Ph_{\alpha t / 2}$:}
The problem is solved in a divide-and-conquer manner.
In the first iteration, $\Ph_t$ and $\Ph_{\alpha t}$ are given and the goal is to compute $\Ph_{\sqrt{\alpha}  t}$.
One can start with contracting each vertex set in the auxiliary partition $\P'_{\sqrt{\alpha} t}$ as a super-node and compute connected components on the graph $G_{\leq \sqrt{\alpha}  t}$ after removing the edges that cross $\Ph_{\alpha t}$.
$\Ph_{\sqrt{\alpha}  t}$ is composed of the connected components found by the above process.
Then the problem requires recursively computing: $(1)$ $\Ph_{2 t}, \Ph_{4 t}, \ldots, \Ph_{\sqrt{\alpha}  t / 2}$ and $(2)$ $\Ph_{2\sqrt{\alpha} t}, \Ph_{4\sqrt{\alpha} t}, \ldots, \Ph_{\alpha t / 2}$.
\end{enumerate}

Two major challenges arise when implementing the above algorithm in the MPC model: (1)~computing the auxiliary hierarchy $\{\P'_i\}$, (2)~computing the connected components.
For the former, \cite{jayaram2024massively} solved it by applying randomly shifted grids over Euclidean points.
For the latter, they observed that each connected component subroutine starts with super-nodes (incomplete connected components) obtained by merging vertex sets of $\P'_t$ for some $t$, and all points within the same connected component are within distance $t\cdot \poly(\log n)$.
In this case, they showed that one only needs to reduce the number of incomplete connected components by a factor of $\poly(\log n)$, and connecting the remaining incomplete connected components in an arbitrary way that does not increase the final MST cost by too much.
By applying standard leader compression based connected components algorithms~\cite{andoni2018parallel,behnezhad2019near,liu2020connected,karloff2010model,reif1984optimal}, each connected component subroutine only takes $O(\log \log n)$ rounds.
Since the second phase has $O(\log\log \log n)$ levels of recursions, the overall number of rounds is $O(\log\log(n)\cdot \log\log\log(n))$.

\vspace{-0.2cm}
\subsubsection*{Beyond Shifted Grids: Our Algorithm for General Metrics}

A major challenge for general metrics is to compute the auxiliary hierarchy of partitions. While \cite{jayaram2024massively} use randomly shifted hypergrids in the Euclidean space, there is no natural data-oblivious hierarchical partitioning of general metrics. To get around this, we carefully implement a low-diameter decomposition of \citet*{MillerPX13} (henceforth MPX) in our setting. 

Another issue with the algorithm in \cite{jayaram2024massively} is the second phase. It is inefficient in terms of both the round-complexity and the approximation ratio. That is, it incurs an additional $\log \log \log n$ factor on the running-time, and it results in a large constant approximation instead of our desired $1+\epsilon$ approximation ratio. In the following, we first explain how to handle the first phase for when the input is a general metric, and then describe our novel second-phase algorithm.

Let us consider the complete metric graph $G$ over $V$, i.e., each edge $(u,v)$ has weight $w(u,v)$.
To start, we utilize the low-diameter decomposition~\cite{MillerPX13} of $G$.
For a given parameter $t$, it can be computed as follows: 
(1) Add an additional source node, and add an edge from the source to each vertex in $G$. The weight of each edge is drawn from the exponential distribution with mean $t / \log n$. 
(2)~Compute a single-source shortest-path tree from the source node.
The vertex sets in the decomposition are the subtrees directly below the root.
Since the weights in $G$ come from a metric, the depth of this tree is always at most $2$, and thus it can be easily computed in $O(1)$ rounds of the MPC model (e.g.\ using the Bellman-Ford algorithm).

We compute an MPX decomposition for each $t = \alpha^k$, where $\alpha = \Theta \left(\frac{\log^2 n}{\epsilon}\right)$, up to $t = W$.
As a result, we get a sequence of partitions $\{\P^0_{\alpha^k}\}_k$ with similar guarantees to those of the shifted grids. That is, on level $t$, each vertex set has diameter at most $t$, and the crossing probability of an edge $(u,v)$ is $w(u,v)\cdot \poly(\log n)/ t$.

We need to make sure the partitions form a hierarchy.
To achieve this, we construct 
$\P^\MPX=\{\P^\MPX_{\alpha^k}\}_k$ as follows. For each level $t = \alpha^k$, the partition $\P^\MPX_t$ is obtained by intersecting all the decompositions $\P^0_{t'}$ for $t'\geq t$, i.e., $u$ and $v$ are in the same set of $\P^\MPX_t$ iff $u$ and $v$ are in the same set of $\P^0_{t'}$ for all $t'\geq t$.
Notice that this does not increase the diameter of each set.
In addition, the crossing probability of an edge $(u,v)$ is still bounded by the geometric sum $\sum_{t'\geq t} w(u,v)\cdot \poly(\log n) /t'$ which is at most $\frac{1}{1-\frac{1}{\alpha}} \leq 2$ times the original probability.
We need to further process $\P^\MPX$ in order to lower the crossing probabilities by a factor of $\alpha$, and obtain the final hierarchy $\Ph = \{\Ph_{\alpha^k}\}_{k}$.

\begin{wrapfigure}[12]{r}{0.4\textwidth}
    \vspace{-20pt}
    \centering
    \resizebox{0.4\textwidth}{!}{%
    \begin{tikzpicture}
    \node[draw, circle, minimum size=8cm, label=above:$\mathcal{P}^{\text{MPX}}_{\alpha t}$] (bigcluster) {};
    
    \path (bigcluster) ++(180:3cm) coordinate (P1);
    \path (bigcluster) ++(108:3cm) coordinate (P2);
    \path (bigcluster) ++(36:3cm) coordinate (P3);
    \path (bigcluster) ++(-36:3cm) coordinate (P4);
    \path (bigcluster) ++(-108:3cm) coordinate (P5);
    
    \fill[cyan, fill opacity=0.2] plot [smooth cycle, tension=1] coordinates {($(P1) + (-0.8,0)$) ($(P2) + (0.5,0.6)$) ($(P5) + (0.5,-0.6)$)};
    \draw[dashed] plot [smooth cycle, tension=1] coordinates {($(P1) + (-0.8,0)$) ($(P2) + (0.5,0.6)$) ($(P5) + (0.5,-0.6)$)};
    
    \fill[cyan, fill opacity=0.2] plot [smooth cycle, tension=1] coordinates {($(P3) + (0.9,0.1)$) ($(P3) + (-0.9,0.1)$) ($(P4) + (-0.9,-0.1)$) ($(P4) + (0.9,-0.1)$)};
    \draw[dashed] plot [smooth cycle, tension=1] coordinates {($(P3) + (0.9,0.1)$) ($(P3) + (-0.9,0.1)$) ($(P4) + (-0.9,-0.1)$) ($(P4) + (0.9,-0.1)$)};

    \draw (P2) -- (P1) node[midway, above, sloped] {$\leq t$};
    \draw (P2) -- (P1); 
    \draw (P1) -- (P5);
    \draw (P3) -- (P4);

    \draw (P2) -- ($(P2) + (-1.6,1.5)$);

    \draw (P1) -- ($(P2) + (-4,-2.5)$);
    
    \draw (P3) -- ($(P3) + (+2.3,1.3)$);
    
    \node[draw, circle, minimum size=1cm, fill=white] at (P1) {$\mathcal{P}^{\text{MPX}}_{t}$};
    \node[draw, circle, minimum size=1cm, fill=white] at (P2) {$\mathcal{P}^{\text{MPX}}_{t}$};
    \node[draw, circle, minimum size=1cm, fill=white] at (P3) {$\mathcal{P}^{\text{MPX}}_{t}$};
    \node[draw, circle, minimum size=1cm, fill=white] at (P4) {$\mathcal{P}^{\text{MPX}}_{t}$};
    \node[draw, circle, minimum size=1cm, fill=white] at (P5) {$\mathcal{P}^{\text{MPX}}_{t}$};

    \node at ($(P1) + (2.8,0)$) {$\widehat{\mathcal{P}}_t$};
    
    \node at ($(P1) + (6.2,0)$) {$\widehat{\mathcal{P}}_t$};

\end{tikzpicture}
    }
\end{wrapfigure}

The final hierarchy $\Ph = \{\Ph_{\alpha^k}\}_k$ is obtained by extending the partitions of $\P^\MPX$.
The partition $\Ph_t$ is meant to approximate the set of connected components with respect to the edges of weight at most $t$ that do not cross $\P^\MPX_{\alpha t}$, call these edges $G_{\leq t}[\P^\MPX_{\alpha t}]$ (see the figure on the right).
For every level $t = \alpha^k$, $\Ph_t$ is obtained by executing $\Theta\left(\log \frac{1}{\epsilon} + \log \log n\right)$ rounds of leader compression on $\P^\MPX_t$, using the edges of $G_{\leq t}[\P^\MPX_{\alpha t}]$.
After the leader compression,
the \enquote{incomplete} components within each vertex set of $\P^\MPX_{\alpha t}$ are joined together, where we say a component is incomplete when it still has outgoing edges that would have been used if we had not stopped the leader compression early.

For the second phase, we develop a completely new algorithm.
The original phase-2 algorithm aimed to generate intermediary partitions $\Ph_{2\cdot t},\Ph_{4\cdot t},\Ph_{8\cdot t},\ldots,\Ph_{\alpha\cdot t /2}$, between $\Ph_t$ and $\Ph_{\alpha t}$.
The final approximate MST was obtained by computing an arbitrary spanning tree for each level.
There are two major disadvantages to this: (1) Generating all intermediate partitions $\Ph_{t'} $ introduces an additional $O(\log\log\log n)$-round overhead.
(2) Computing an arbitrary spanning tree for each level introduces a factor of $2$ in the approximation ratio.
To address these two issues, we forgo generating the intermediary partitions.
Instead, we use a novel variation \Boruvka's algorithm to build the approximate MST.

For each level $t = \alpha^k$, we regard each vertex set of $\Ph_{t/\alpha}$ as a super-node and try to find a minimum spanning forest over the edges of weight at most $\alpha t$ that do not cross the next level partition $\Ph_{t}$, call these edges $G_{\leq \alpha t}[\Ph_t]$.
That is, the forest connects all the super-nodes inside a set of $\Ph_t$.
This way, we avoid computing the intermediate levels, and it saves us the $O(\log\log\log n)$ factor in the number of rounds.
We propose the following variant of \Boruvka's algorithm:
\begin{enumerate}
\item For each super-node in the graph, we sample it as a leader with probability $1/2$.
\item For each non-leader super-node $u$, we look at its nearest neighbor super-node $v$ in $G_{\leq \alpha t}[\Ph_t]$.
If $v$ is a leader, we merge $u$ into $v$, i.e., a new super-node includes both members of super-nodes $u$ and $v$.
\item Repeat the above steps for $T$ rounds.
\end{enumerate}

Had we run this algorithm for $\Theta(\log n)$ rounds, the exact minimum forest with respect to $G_{\leq \alpha t}[\Ph_t]$ would have been found.
As it turns out, using many fewer rounds, i.e.\ $T = \Theta\left(\log \frac{\alpha}{\epsilon}\right)$ computes most of the minimum spanning forest, and hence suffices for a $(1 + \epsilon)$-approximation.
Consider level $t=\alpha^k$. Let $C$ be an arbitrary vertex set of $\Ph_{t}$ that is composed of sets $C_1,C_2,\ldots,C_r$ of $\Ph_{t/\alpha}$.
Let us apply the above algorithm to super-nodes $C_1,\ldots C_r$ over the edges $G_{\leq \alpha t}[\Ph_t]$.
At any point, unless $C_1,\ldots,C_r$ are merged into one super-node, during the next round, each super-node joins another one with a constant probability.
Therefore, after $T$ rounds, we expect the number of unmerged super-nodes in $C$ to be (roughly) $r/2^{\Theta(T)}$.
That is, only a $\left(\poly\left(\frac{\epsilon}{\alpha}\right)\right)$-fraction of the minimum spanning forest edges remain undiscovered after $T$ rounds.
We connect the remaining super-nodes in $C$ using arbitrary edges of weight at most $\alpha t$. This is fine, as we have assured there are not many super-nodes left inside $C$.

Observe that the approximation ratio is controlled by $\alpha$.
A larger $\alpha$ improves the approximation ratio by introducing lower crossing probabilities for the edges,
whereas a smaller $\alpha$ means more parallelization.
Specifically, (approximately) computing the minimum spanning forest for $\Ph_{t/\alpha}$ requires $\Omega\left(\log \frac{\alpha}{\epsilon}\right)$ rounds.
To achieve a $(1 + \epsilon)$-approximation ratio, we need to set $\alpha = \Theta\left(\frac{\log^2 n}{\epsilon}\right)$, which results in an $O\left(\log \frac{1}{\epsilon} + \log \log n\right)$-round algorithm.
\subsection{Technical Overview of \cref{thm:lb}}\label{sec:tech-lb}

One of our main contributions is a $\Omega\left(\log \frac{1}{\epsilon}\right)$-round lower bound for computing a $(1 + \epsilon)$-approximate of the MST in $(1, 2)$-metrics (\cref{thm:lb}).
In fact, our lower bound even applies to the $(1, 2)$-metrics where the weight-$1$ edges form disjoint cycles. We consider only such metrics for this discussion.
An approximate MST excludes at least one weight-$1$ edge from each cycle, and (on average) not many more. 
Specifically, if the weight-$1$ edges form two cycles, one of length $\frac{1}{10\epsilon}$, and one of length $n - \frac{1}{10\epsilon}$,
then a $(1 + \epsilon)$-approximate MST excludes at least one edge from each cycle, and excludes at most a total of $2\epsilon n$ edges.

Intuitively, to find a $(1+\epsilon)$-approximation, every vertex needs to know its $\Omega\left(\frac{1}{\epsilon}\right)$-neighborhood, i.e.\ the vertices that are within $\frac{1}{\epsilon}$ hops with respect to weight-$1$ edges. 
This is because (roughly) at most one in every $\frac{1}{\epsilon}$ edges can be excluded, but at the same time we need to make sure at least one edge from each cycle is removed.
Specifically, in the described example, knowledge of the $\frac{1}{10\epsilon}$-neighborhood is required to detect whether the vertex is in the smaller cycle.
$(1 + \epsilon)$-approximation is not possible otherwise since the edges in the small cycle must be excluded with a high rate of at least $10\epsilon$, whereas the rest of the edges must be excluded with a rate of at most $2\epsilon$.

This intuition can be formalized to obtain a $\Omega\left(\frac{1}{\epsilon}\right)$ lower bound in the distributed LOCAL model where the nodes are allowed to communicate over the weight-$1$ edges of the metric (see \cref{subsec:lb-local-random}).
Then, using the framework of \cite{GhaffariKU19} and \cite{CzumajDP21a}
it can be lifted to obtain a (conditional) $\Omega\left(\log \frac{1}{\epsilon}\right)$ lower bound for \emph{component-stable} MPC algorithms, where the components are with respect to weight-$1$ edges (see \cref{sec:lift-to-MPC}).
That is, the output of the MPC algorithm on a connected component of the weight-$1$ edges is required to be independent of the rest of the graph.
While many MPC algorithms are component-stable, many natural ones are not, as discussed in \cite{CzumajDP21a}.
We show that the assumption on component stability can be dropped. Specifically, in \cref{thm:lb},
we give a lower bound assuming solely the \onetwocycle{} conjecture.

To prove the lower bound, we show a reduction that given
an $R$-round algorithm for MST approximation, produces an
$O(\log_{1/\epsilon}n \cdot R)$-round algorithm for solving the \onetwocycle{} problem. 
This implies that a $(1 + \epsilon)$-approximation of MST in $(1, 2)$-metrics is not possible in $o\left(\log \frac{1}{\epsilon}\right)$ rounds under the \onetwocycle{} conjecture.
Given the input, the reduction repeatedly decreases the length of the cycles by a factor of $\epsilon$ to verify if it is a cycle of length $n$.

To remove the requirement of component stability, we present a way to transform MPC algorithms using a random reordering of the vertex IDs.
That is, given an MPC algorithm $\A$, and some input graph $G$, we apply a random permutation to the vertex IDs of $G$ and feed the result to $\A$.
The resulting algorithm $\A'$ has a key property.
For any two isomorphic vertices/edges,
the output of $\A'$ for them has the same distribution.

By applying this transformation to an approximate MST algorithm for $(1, 2)$-metrics where the weight-$1$ edges form cycles, we get an algorithm such that for any two weight-$1$ edges that are in cycles of the same length, the probability of being excluded is the same. We use these probabilities in a subroutine to detect the $\Theta\left(\frac{1}{\epsilon}\right)$-neighborhood of a vertex.
The subroutine is then utilized in the reduction that implies the lower bound.
That is, given a cycle, the subroutine is used to decrease the length by a factor of $\epsilon$.

The $\frac{1}{10\epsilon}$-neighborhood of a vertex is detected as follows.
Assume for now that the input graph is a cycle of length $n$.
Given two vertices $u_1$ and $u_2$, the goal is to assert whether they are at most $\frac{1}{10\epsilon}$ hops away in the cycle.
We start by picking one adjacent edge for each of $u_1$ and $u_2$, and doing a two-switch (see \cref{fig:two-switch-simple}).
If $u_1$ and $u_2$ are close, and the appropriate edges are picked,
this results in the cycle being broken down into two cycles: one of length at most $\frac{1}{10\epsilon}$ containing $u_1$ and $u_2$, and another of length at least $n - \frac{1}{10\epsilon}$ containing.
We consider the $(1, 2)$-metric where the weight-$1$ edges are these two cycles, and study the probabilities of each edge being excluded when $\A'$ is run on the metric.
It can be shown that the edges in the smaller cycle are excluded with probability at least $10\epsilon$, and the edges in the larger cycle with probability at most $2\epsilon$.
Therefore, they can be identified, and we can assert that $u_1$ and $u_2$ are close.

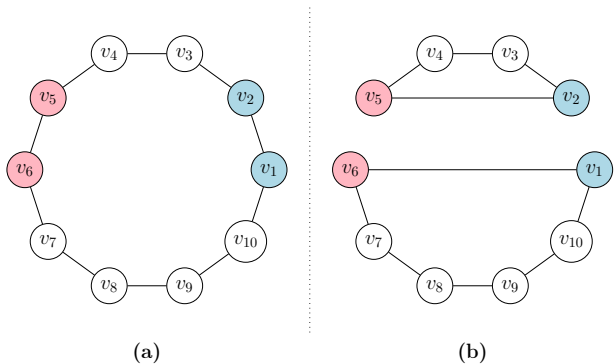
\begin{figure}[H]
    \centering
    \resizebox{0.5\textwidth}{!}{%
    \begin{tikzpicture}[scale=0.72, transform shape, every node/.style={font=\Large}]
    \definecolor{lightblue}{RGB}{173, 216, 230}
    \definecolor{lightred}{RGB}{255, 182, 193}

    \foreach \i in {1,...,10}
    {
        \ifthenelse{\i=1 \OR \i=2}
            { \ifthenelse{\i=2}{
            \node[draw, circle, fill=lightblue, minimum size=1.75em] (v\i) at ({360/10 * (\i - 1)}:3) {\Large $u'$};
            }{
            \node[draw, circle, fill=lightblue, minimum size=1.75em] (v\i) at ({360/10 * (\i - 1)}:3) {\Large $v'$};
            }
            }
            {\ifthenelse{\i=5 \OR \i=6}
                {
                \ifthenelse{\i=5}{
                \node[draw, circle, fill=lightred, minimum size=1.75em] (v\i) at ({360/10 * (\i - 1)}:3) {\Large $u$};
                }{
                \node[draw, circle, fill=lightred, minimum size=1.75em] (v\i) at ({360/10 * (\i - 1)}:3) {\Large $v$};
                }
                }
                {\node[draw, circle, minimum size=1.75em] (v\i) at ({360/10 * (\i - 1)}:3) {};}
            }
    }
    
    \foreach \i in {1,...,9}
    {
        \pgfmathtruncatemacro\next{\i + 1}
        \draw (v\i) -- (v\next);
    }
    \draw (v10) -- (v1);
    
    \node at (0, -4.5) {\Large \textbf{(a)}};
    
    \draw[dotted] (4, -4) -- (4, 4);
    
    \begin{scope}[xshift=8cm]
        \foreach \i in {1,...,10}
        {
            \ifthenelse{\i=1 \OR \i=2}
            { \ifthenelse{\i=2}{
            \node[draw, circle, fill=lightblue, minimum size=1.75em] (v\i) at ({360/10 * (\i - 1)}:3) {\Large $u'$};
            }{
            \node[draw, circle, fill=lightblue, minimum size=1.75em] (v\i) at ({360/10 * (\i - 1)}:3) {\Large $v'$};
            }
            }
            {\ifthenelse{\i=5 \OR \i=6}
                {
                \ifthenelse{\i=5}{
                \node[draw, circle, fill=lightred, minimum size=1.75em] (v\i) at ({360/10 * (\i - 1)}:3) {\Large $u$};
                }{
                \node[draw, circle, fill=lightred, minimum size=1.75em] (v\i) at ({360/10 * (\i - 1)}:3) {\Large $v$};
                }
                }
                {\node[draw, circle, minimum size=1.75em] (v\i) at ({360/10 * (\i - 1)}:3) {};}
            }
        }
        
        \foreach \i in {1,...,9}
        {
            \pgfmathtruncatemacro\next{\i + 1}
            \ifthenelse{\i=1 \OR \i=5}
                {}
                {\draw (v\i) -- (v\next);}
        }
        \draw (v10) -- (v1);
        
        \draw (v1) -- (v6);
        \draw (v2) -- (v5);
        
        \node at (0, -4.5) {\Large \textbf{(b)}};
    \end{scope}
\end{tikzpicture}
    }
    \caption{A two-switch on edges $(u, v)$ and $(u', v')$, replaces them with $(u, u')$ and $(v, v')$.
    Here, the two-switch breaks the cycle into two cycles. See also \cref{def:two-switch} and \cref{clm:two-switch}.}
    \label{fig:two-switch-simple}
\end{figure}

Using the $\frac{1}{10\epsilon}$-neighborhoods, we aim to decrease the length of the cycle by a factor of $\epsilon$. After doing this $O(\log_{1/\epsilon} n)$ times, we are able to verify whether the original graph was a cycle of length $n$.
A potential approach is to select certain vertices as leaders and contract their $\frac{1}{10\epsilon}$ neighborhoods.
It is not immediately clear this is possible in $O(1)$ rounds,
since we need at most $O(\epsilon n)$ leaders such that each vertex joins a leader, and the vertices that join the same leader form a connected subgraph (i.e.\ a subpath) of the cycle.
Instead of contracting, we break down the cycle into $\frac{1}{\epsilon}$ smaller cycles by removing all the edges, and connecting each vertex to the two vertices that were exactly $\frac{1}{\epsilon}$ hops away in the cycle (\cref{fig:cycle-dividing}). This can be done repeatedly. However, the subroutine for computing the $\frac{1}{10\epsilon}$-neighborhoods becomes more complicated.

\section{Preliminaries}\label{sec:preliminaries}

\textbf{The MPC Model:}
In MPC model, an input of size $N$ is distributed among $M$ machines with local memory $S$ and infinite computing power.
The number of machines is ideally set to $M = O(N/S)$, or slightly higher otherwise.
The computation takes place in synchronous rounds.
Each round has a processing phase,
where each machine makes operations on its local data,
and a communication phase,
where each machine sends messages to some of the other machines.
The total size of the messages sent or received by each machine should not exceed $S$.

There are different regimes for the choice of $S$. We study the \emph{strictly sublinear} regime (a.k.a.\ fully scalable algorithms).
That is, given a graph of $n$ vertices, each machine has local space $n^\delta$, where $0 < \delta < 1$.

\textbf{The Input:}
The input of our algorithm is a metric over $n$ points, expressed as $n^2$ words. We refer to this metric as a complete graph $G$, where the vertices represent the points,
and the weight of the edge between two vertices $u$ and $v$ is equal to their distance in the metric.

\subsection*{Graph Notation}
Given a complete weighted graph $G$, we use $w(u, v)$ to denote the weight of the edge between $u$ and $v$, and $W = \poly(n)$ to denote the maximum weight.
We use $G_{\leq t}$ to denote the subgraph of $G$ consisting of all the edges of weight at most $t$, and use $\C_t$ to denote the set of connected components in $G_{\leq t}$. 
We use $\MST(G)$ to denote its minimum spanning tree, and $w(\MST(G))$ to denote its weight.
Throughout the paper, we break ties based on edge ID,
so that the minimum spanning tree is unique.

Given a graph $G$, a partition $\P$ of the vertices $V(G)$ is a family of vertex sets such that each vertex is a member of exactly one set.
A partition $\P_1$ \emph{refines} a partition $\P_2$ if for every set $A \in \P_1$, there exists a set $B \in \P_2$ such that $A \subseteq B$.
We say an edge $(u, v)$ crosses the partition $\P_1$ if $u$ and $v$ are in different sets of $\P_1$.
We say a partition $\P_1$ captures an edge $e$ if $e$ does not cross $\P_1$.
We use $G[\P_1]$ to denote the subgraph consisting of the edges that do not cross $\P_1$.

\begin{definition}
    Given a graph $G$, a \emph{hierarchy of vertex partitions} $\sP$ is a sequence of vertex partitions $\{\P_i\}_{i=0}^k$, such that 
    \begin{enumerate} 
        \item each partition is a refinement of the next, and
        \item the last partition groups all the vertices together, meaning $\P_k = \{ V(G) \}$.
    \end{enumerate}
\end{definition}

For our purposes, we index the hierarchy by a geometric series corresponding to the weights. The $(k+1)$-th partition is denoted by $\P_{\alpha^k}$, where $\alpha \geq 1$ is a parameter and $k \geq 0$. The last partition is $\P_W$. The hierarchy is denoted by $\sP=\{\P_{\alpha^k}\}_{k \geq 0}$.
We use $L = \log_\alpha W$ to denote the number of levels.
For a level $t = \alpha^k$, $\P_{\alpha t}$ is the next partition of $\P_t$, and $\P_{t/\alpha}$ is the previous partition.
If $t = 1$, we use $\P_{t/\alpha}$ to simply denote the trivial partition where each vertex is in its own set, i.e.\ each set is a singleton.

Given a partition $\P$ and a set of edges $E$, we use $\P \oplus E$ to denote the partition obtained from joining the sets of $\P$ using the edges in $E$.
That is, two vertices $s$ and $t$ are in the same set of $\P \oplus E$ if there exists a \enquote{path} of edges $(u_0, v_0), (u_1, v_1), \ldots, (u_k, v_k) \in E$, where $s$ and $u_0$ are in the same set of $\P$, $v_i$ and $u_{i+1}$ are in the same set of $\P$ for all $i$,  and $v_k$ and $t$ are in the same set of $\P$.
Equivalently, one could define $\P \oplus E$ as the finest partition that is refined by both $\P$ and the set of connected components of $E$.

\begin{definition} \label{def:respecting-tree}
    Given a weighted graph $G$, a hierarchy $\sP=\{\P_{\alpha^k}\}_{k \geq 0}$, and a sequence of edge sets $\{F_{\alpha^k}\}_k$, a spanning tree $T$ \emph{respects} $(F_{\alpha^k} \mid \P_{\alpha^k})$ if it can be expressed as the disjoint union of a sequence of edge sets $\{E_{\alpha^k}\}_{k \geq 0}$, such $E_{\alpha^k}\subseteq F_{\alpha^k}$, and $\P_{\alpha^k} = \P_{\alpha^{k-1}} \oplus E_{\alpha^k}$, i.e.\ the set of connected components with respect to $\cup_{i \leq k}E_{\alpha^i}$ is $\P_{\alpha^k}$.

    We use $\MST(F_{\alpha^k} \mid \P_{\alpha^k})$ to denote the minimum spanning tree that respects $(F_{\alpha^k} \mid \P_{\alpha^k})$.
\end{definition}

Intuitively, starting with the empty graph, for each level $k$, the tree $T$ uses some edges of $F_{\alpha^k}$ (i.e.\ $E_{\alpha^k}$) to join components of $\P_{\alpha^{k-1}}$, such that all the vertices within a set of $\P_{\alpha^k}$ are connected. 

\subsection*{MPX Decomposition}
A key ingredient of our algorithm is a low-diameter decomposition of graphs using random shifts due to \citet{MillerPX13}.
Given any weighted graph $G$, this is obtained by drawing, for each vertex $u$, a \emph{delay} $\delta_u$ from the exponential distribution with expectation $\Theta(\frac{t}{\log n})$, i.e.\ the distribution with density function $f(x) = c\frac{\log n}{t} \exp\left(-c\frac{\log n}{t} x\right)$.
Then, the center for $c(u)$ for a vertex $u$ is defined as the vertex that minimizes the \emph{shifted distance}. That is, $c(u) = \argmin_v d(u, v) - \delta(v)$, where $d(u, v)$ is the distance between $u$ and $v$ (which is equal to $w(u, v)$ in case of metrics). This results in a decomposition, hereafter referred to as an MPX decomposition with diameter $t$, comprised of sets of vertices with the same center.
In their paper, they show:
\begin{proposition}[\cite{MillerPX13}] \label{prp:mpx}
    Let $\P_t$ be an MPX decomposition with parameter $t$. It holds that:
    \begin{enumerate}
        \item any set in $\P_t$ has weighted diameter at most $t$, with high probability, and
        \item for any edge $(u, v)$, the probability that $u$ and $v$ are in different sets of $\P_t$ is $O\left(  \frac{ w(u, v) \log n}{t}\right)$.
    \end{enumerate}
\end{proposition}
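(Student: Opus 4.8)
The plan is to recover the two guarantees via the classical analysis of \cite{MillerPX13}. Write $\beta = c\frac{\log n}{t}$ for the rate of the exponential delays, so that $\delta_u$ has density $\beta e^{-\beta x}$ and $\E[\delta_u] = 1/\beta = \Theta(t/\log n)$. For a vertex $x$ and a candidate center $z$, call $\mathrm{arr}(x,z) := d(x,z) - \delta_z$ the \emph{arrival time} of $z$ at $x$; by construction $c(x)$ is the vertex minimizing $\mathrm{arr}(x,\cdot)$ (ties can be broken by ID, but the $\delta$'s are almost surely distinct, so this is moot). I would prove part~1 first and then part~2.

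For part~1 (diameter), the one-line observation is that if $c(u)=z$ then, plugging $z'=u$ into the optimality of $z$, $d(u,z)-\delta_z = \mathrm{arr}(u,z)\le \mathrm{arr}(u,u) = -\delta_u \le 0$, hence $d(u,z)\le \delta_z \le \Delta$ where $\Delta := \max_w \delta_w$. So every cluster lies within (weighted) distance $\Delta$ of its center, and therefore has weighted diameter at most $2\Delta$ by the triangle inequality. It remains to bound $\Delta$: since $\Pr[\delta_w > x] = e^{-\beta x}$, a union bound gives $\Pr[\Delta > t/2] \le n\, e^{-\beta t/2} = n^{\,1-c/2}$, which is $n^{-\Omega(1)}$ once the constant $c$ is chosen large enough (say $c \ge 4$). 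Thus w.h.p.\ every cluster has weighted diameter at most $t$.

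For part~2 (cut probability), fix an edge $(u,v)$ and set $s := 2\,w(u,v)$. I would argue in two steps. (a) If $c(u)\ne c(v)$, then the two smallest values of $\mathrm{arr}(u,\cdot)$ differ by at most $s$: writing $z_u=c(u)$ and $z_v=c(v)$, the triangle inequality gives $|d(u,\cdot)-d(v,\cdot)|\le w(u,v)$, so $\mathrm{arr}(u,z_v)-w(u,v)\le \mathrm{arr}(v,z_v)\le \mathrm{arr}(v,z_u)\le \mathrm{arr}(u,z_u)+w(u,v)$, whence $\mathrm{arr}(u,z_v)\le \mathrm{arr}(u,z_u)+2w(u,v)$; since $z_v\ne z_u$ is some non-winning center at $u$, the runner-up arrival time at $u$ is within $s$ of the winner's. (b) For every $s>0$, $\Pr[\text{the two smallest values of }\mathrm{arr}(u,\cdot)\text{ differ by}\le s]\le \beta s$. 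To see this, condition on all delays except $\delta_z$ and let $B=\min_{z'\ne z}\mathrm{arr}(u,z')$; then $z=c(u)$ iff $\delta_z > d(u,z)-B$, and on this event the winner-to-runner-up gap equals $\delta_z-(d(u,z)-B)$, so $\{z=c(u)\}\cap\{\text{gap}\le s\}$ is the event $\delta_z\in(d(u,z)-B,\, d(u,z)-B+s]$ (intersected with $\delta_z\ge 0$), which by memorylessness of the exponential has probability at most $(1-e^{-\beta s})\cdot\Pr[z=c(u)\mid\text{rest}]\le \beta s\cdot\Pr[z=c(u)\mid\text{rest}]$, \emph{independently of $B$ and $d(u,z)$}. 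Summing over $z$ and using that exactly one vertex wins, so $\sum_z\Pr[z=c(u)]=1$, gives $\Pr[\text{gap}\le s]\le \beta s$. Taking $s=2w(u,v)$ and $\beta=\Theta(\log n/t)$ and combining with (a) yields $\Pr[c(u)\ne c(v)]\le \Pr[\text{gap}\le 2w(u,v)] = O\!\left(\frac{w(u,v)\log n}{t}\right)$.

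The only genuinely delicate point is step (b): one must handle that the conditioning event $\{z=c(u)\}$ depends on $\delta_z$ itself, and the degenerate case $d(u,z)-B<0$ (where $z$ always beats the rest and the gap is at least $\delta_z$, so the bound $\beta s$ holds a fortiori). Everything else is a routine tail estimate. Note there is no tension in the choice of $c$: part~1 needs $c$ above a fixed threshold, while part~2 only needs $\beta = \Theta(\log n/t)$ and produces an $O(\cdot)$ constant proportional to $c$, so any sufficiently large constant $c$ serves both parts simultaneously.
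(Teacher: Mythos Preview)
The paper does not actually prove \cref{prp:mpx}; it is stated as a black-box citation to \cite{MillerPX13}, so there is no ``paper's own proof'' to compare against. That said, your argument is correct and is essentially the standard analysis from \cite{MillerPX13}: bounding the diameter via the maximum delay $\Delta$ and a union bound, and bounding the separation probability by reducing to the event that the smallest two arrival times at $u$ are within $2w(u,v)$ of each other and then exploiting memorylessness of the exponential. Your treatment of the degenerate case $d(u,z)-B<0$ is fine (indeed, in that regime $1-e^{-\beta(a+s)}\le 1-e^{-\beta s}$ since $a<0$), so step~(b) goes through uniformly.
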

\section{An $O\left(\log \frac{1}{\epsilon} + \log \log n\right)$ Algorithm for Metric MST}

\subsection{The Algorithm}
First, we give a brief description of \Cref{alg:main}.
Given a complete weighted graph $G$ such that the weights satisfy the triangle inequality, let $\alpha = \Theta\left(\frac{\log^2 n}{\epsilon}\right)$. To begin, the algorithm creates a hierarchy.
The levels in the hierarchies are always indexed by $\{1, \alpha, \alpha^2, \ldots, W\}$. Ideally, the algorithm would have used $\sC = \{\C_{\alpha^k}\}_{k\geq0}$ as its hierarchy, where $\C_{\alpha^k}$ is the set of connected components in $G_{\leq \alpha^k}$.
As that is not possible under the \onetwocycle{} conjecture, the algorithm uses an approximation of $\sC$.

Initially, a hierarchy $\sP^\MPX$ is obtained by intersecting MPX decompositions with different parameters.
This hierarchy satisfies the same diameter and crossing probability guarantees as in \cref{prp:mpx}.
Another hierarchy $\sP$ is defined by extending $\sP^\MPX$. For each level $t=\alpha^k$, the partition $\P_t$ is built by joining the sets in $\P^\MPX_t$ using edges of weight at most $t$ that do not cross $\P^\MPX_{\alpha t}$. In short, $\P_t$ is the set of connected components in $G_{\leq t}[\P^\MPX_{\alpha t}]$.
This hierarchy is a suitable approximation of $\sC$;
however, it cannot be computed efficiently either. 
Therefore, the algorithm uses a hierarchy $\sPh$ as an approximation to $\sP$.

For each level $t$, the partition $\Ph_t$ is constructed by running $\Theta\left(\log \frac{1}{\epsilon} + \log \log n\right)$ rounds of leader compression on $\P^\MPX_t$ using the edges of $G_{\leq t}[\P^\MPX_{\alpha t}]$.
After the leader compression subroutine has halted,
all the \emph{incomplete} components within each set of $\P^\MPX_{\alpha t}$ are joined together.
Here, incomplete means that the component still has outgoing edges of $G_{\leq t}[\P^\MPX_{\alpha t}]$,
and if the leader compression had not stopped early,
it would have been connected to some other component.
Intuitively, $r = \Theta(\log \frac{1}{\epsilon} + \log \log n)$ rounds of leader compression constructs $\P_t$ up to an \enquote{error} of only $2^{-\Omega(r)}=\poly\left(\frac{\epsilon}{\log n}\right)$.

Finally, the algorithm builds a tree using the hierarchy $\sPh$. For each level $t$, it runs \linebreak $\Theta\left(\log \frac{1}{\epsilon} + \log \log n\right)$ rounds of the modified \Boruvka{} algorithm on $\Ph_{t/\alpha}$ using the edges $G_{\leq \alpha t}[\Ph_{t}]$.\footnote{Instead, one could have used $G_{\leq t}[\Ph_t]$ here, and it may be more instructive to think of that algorithm. $G_{\leq \alpha t}[\Ph_t]$ is used for the simplicity of the proof.}
Upon the termination of this subroutine,
within each set of $\Ph_{t}$, all the remaining sets are connected together arbitrarily.
The modified \Boruvka{} algorithm is a combination of \Boruvka's algorithm and a coin-flipping scheme similar to the one in leader compression.

\begin{algorithm}[H]
\caption{An $O\left(\log \frac{1}{\epsilon} + \log \log n\right)$ Algorithm for Metric MST}
    \label{alg:main}

    \textbf{input: } A complete weighted graph $G$, and a parameter $\epsilon > 0$.

    Let $\alpha = \Theta\left(\frac{\log^2 n}{\epsilon}\right)$.

    Compute MPX decomposition $\P^0$ with diameter $t$, for each level $t = \alpha^k$.
    
    Derive a hierarchy of partitions $\sP^\MPX$, by taking intersections of $\P^0$'s, making them nested.

    Obtain the final hierarchy $\sPh$,
    where $\Ph_t$ is obtained by running $\Theta\left(\log \frac{1}{\epsilon} + \log \log n\right)$ rounds of leader compression on $\P^\MPX_t$ using the edges of $G_{\leq t}[\P^\MPX_{\alpha t}]$, and connecting the incomplete components within each set of $\P^\MPX_{\alpha t}$.
    \label{step:mst-leader-compression}

    \For{each level $t = \alpha^k$}{ \label{step:mst-loop-start}
        Run $\Theta\left(\log \frac{1}{\epsilon} + \log \log n\right)$ rounds of the modified \Boruvka{} algorithm (\Cref{alg:boruvka}), with initial components $\Ph_{t / \alpha}$,
        and edges $G_{\leq \alpha t}[\Ph_t]$, call the output edges $E^1_t$.
        \label{step:mst-boruvka}

        Connect any remaining components within $\Ph_t$ arbitrarily, call the used edges to $E^2_t$.
        \label{step:mst-join-arbitrarily}

        Let $E_t = E^1_t \cup E^2_t$.
    \label{step:mst-loop-end}
    }

    \Return{$T = \bigcup_k E_{\alpha^k}$.}
    
\end{algorithm}

Now we give a more detailed description of the algorithm.
The parameter $\alpha$ is chosen such that 
such that $\alpha \geq \frac{L \log n}{\epsilon}$ where $L = \log_\alpha n$ is the number of levels, e.g.\ take $\alpha = \Theta\left(\frac{\log^2 n}{\epsilon}\right)$. 
Creating the MPX decomposition $\{\P^0_t\}_{t = \alpha^k}$ can be done independently for each level in $O(1)$ rounds,
where each set in $\P^0_t$ has diameter $t$ with high probability, and an edge of weight $w$ crosses $\P^0_t$ with probability $O\left(\frac{w \log n}{t}\right)$ (\cref{prp:mpx}).
Note that the sequence MPX decompositions $\sP^0 = \{\P^0_{\alpha^k}\}_{k \geq 0}$ is not a hierarchy, which is necessary for our application.

The MPX hierarchy $\sP^\MPX = \{\P^\MPX_{\alpha^k}\}_{k\geq 0}$ is built by intersecting the decompositions.
Formally, we let two vertices be in the same set of $\P^\MPX_{\alpha^k}$, if and only if they are in the same set of $\P^0_{\alpha^i}$ for all $i \geq k$. Observe, that by intersecting the partitions of the next levels, we ensure that each partition is a refinement of the next. It can be seen that the diameter and the crossing probability guarantees still hold for $\sP^\MPX$ (\cref{clm:Pmpx-hierarchy}).

The final hierarchy $\sPh$ is created as follows.
For each level $t = \alpha^k$, the partition $\Ph_t$
is obtained by running $\Theta\left(\log \frac{1}{\epsilon} + \log \log n\right)$ rounds of leader compression on $\P^\MPX_t$ using the edges of $G_{\leq t}[\P^\MPX_{\alpha t}]$.
In each round,
every component $C$ flips a fair coin $x_C \in \{0, 1\}$ independently.
If $x_C = 0$ and $C$ has an edge (in $G_{\leq t}[\P^\MPX_{\alpha t}]$) to a component $C'$ with $x_{C'} = 1$, then $C$ joins $C'$ (choosing arbitrarily if more than one option is available).
After the leader compression,
all the \emph{incomplete} components within each set of $\P^\MPX_{\alpha t}$ are joined together.
Where a component is said to be {incomplete} if it still has outgoing edges of $G_{\leq t}[\P^\MPX_{\alpha t}]$.
Each round of leader compression, and the final joining, can be implemented in $O(1)$ rounds of MPC.
Note that this is done for each $t$ in parallel independently.

To construct the tree, 
the algorithm runs $r = \Theta\left(\log \frac{1}{\epsilon} + \log \log n\right)$ rounds of the modified \Boruvka{} algorithm on $\Ph_{t/\alpha}$ using the edges of $G_{\leq \alpha t}[\Ph_{t}]$.
In each round, each component $C$ flips a fair coin $x_C \in \{0, 1\}$ and finds its minimum outgoing edge $e_C \in G_{\leq \alpha t}[\Ph_{t}]$ (breaking ties based on ID). Let $C'$ be the component at the other end of $e_C$. If $x_C = 0$ and $x_{C'} = 1$, then $e_C$ is added to $E^1_t$, and $C$ joins $C'$. The number of rounds $r$ is chosen such that $2^r$ dominates $\frac{\alpha^3}{\epsilon}$. 

\begin{algorithm}
\caption{Modified \Boruvka's Algorithm}
\label{alg:boruvka}
\textbf{Input:} A weighted graph $G$, an initial partition $\Ph_{t/\alpha}$, a set of edges $E$, and the number of rounds $r$.

Let the set of components $\C := \Ph_{t/\alpha}$

Let $F = \emptyset$

\For{$i \in \{1, 2, \ldots, r\}$}{
    Flip a fair coin $x_C \in \{0, 1\}$ for each component $C \in \C$.
    
    \For{each component $C \in \C$}{
        Let $e_C \in E$ be the shortest outgoing edge of $C$.
        
        Let $C'$ be the component on the other endpoint of $e_C$.
        
        \If{$x_C = 0$ and $x_{C'} = 1$}{
            Add $e_C$ to the output $F$.
            
            Join $C$ to $C'$.
        }
    }
}
\Return{$F$}
\end{algorithm}

To finalize, the algorithm makes sure $\Ph_t = \Ph_{t/\alpha} \oplus E_t$ by arbitrarily connecting any remaining components within a set of $\Ph_t$.
Formally, let $\mathfrak{C}$ be the set of components \Cref{alg:boruvka} halts on. For each set of vertices $A \in \Ph_t$, let $\mathfrak{C}_A$ be the components of $\mathfrak{C}$ inside $A$ (note that $\mathfrak{C}$ is a refinement of $\Ph_t$).
Take an arbitrary component $C^* \in \mathfrak{C}_A$, and for any other component $C \in \mathfrak{C}_A$, add one of the edges between $C^*$ and $C$ to $E^2_t$.

The remainder of this section is devoted to the proof of \cref{thm:main}. First, we analyze the approximation ratio (\cref{lem:approx}).
Then, we go over the runtime analysis and the implementation details (\cref{lem:implementation}).

\subsection{Approximation Ratio}

We prove the following:

\begin{lemma} \label{lem:approx}
    \Cref{alg:main} computes a $(1 + \epsilon)$-approximation of the minimum spanning tree, in expectation.
\end{lemma}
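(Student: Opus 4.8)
\textbf{Proof proposal for \cref{lem:approx}.}

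The plan is to track the cost of the tree $T$ returned by \cref{alg:main} through three successive comparisons: from the true optimum $\MST(G)$ to the ideal hierarchy $\sC$, from $\sC$ to the computable-in-principle hierarchy $\sP$, and finally from $\sP$ to the algorithm's actual hierarchy $\sPh$ together with the early-stopped \Boruvka{} subroutine. The guiding principle throughout is the standard threshold-graph bookkeeping sketched in \cref{sec:tech-ub}: whenever an MST edge $(u,v)$ of weight $w(u,v)$ is ``cut'' at level $t$ (meaning $u$ and $v$ lie in different sets of the relevant level-$t$ partition but the same set of the level-$\alpha t$ partition), the algorithm pays for reconnecting at that level with an edge of weight at most $\alpha t$ (an edge between two sets that are merged one level up has weight at most $\alpha t$). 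So the expected overhead contributed by $(u,v)$ is at most $\sum_{k}\alpha \alpha^k \cdot \Pr[(u,v)\text{ cut at level }\alpha^k]$, and the whole argument reduces to showing each such cut probability is $O(w(u,v)/(\alpha^k \cdot \poly\log n))$, so that the geometric sum telescopes to $O(\epsilon\, w(u,v)/\log n \cdot L) = O(\epsilon\, w(u,v))$ after using $\alpha \ge L\log n/\epsilon$, and summing over the $n-1$ MST edges gives total overhead $O(\epsilon)\cdot w(\MST(G))$.

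Concretely I would proceed as follows. First, establish that $T$ is a genuine spanning tree: by \cref{def:respecting-tree} and the final arbitrary-connection step (Line~\ref{step:mst-join-arbitrarily}), $\P_{\alpha^k}$-analogue $\Ph_{\alpha^k} = \Ph_{\alpha^{k-1}} \oplus E_{\alpha^k}$ at every level and $\Ph_W = \{V\}$, so $\bigcup_k E_{\alpha^k}$ connects everything and, being built level-by-level joining components, is acyclic. Second, bound $w(T)$ against $w(\MST(\sP))$, i.e. the minimum tree respecting $(G_{\leq \alpha t}[\Ph_t] \mid \Ph_t)$: here the two sources of extra cost are (a) the early termination of the modified \Boruvka{} run (\cref{alg:boruvka}) after $r = \Theta(\log\frac1\epsilon + \log\log n)$ rounds, which leaves a $2^{-\Omega(r)} = \poly(\epsilon/\log n)$-fraction of components within each set of $\Ph_t$ unmerged — these get connected by arbitrary weight-$\le \alpha t$ edges — and (b) the fact that \Boruvka{} with coin-flips finds a tree respecting the partition rather than the literal minimum forest, but every edge it uses has weight $\le \alpha t$ anyway, so the relevant quantity is just the number of ``bad'' reconnections times $\alpha t$. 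I would argue that conditioned on being inside a set $C$ of $\Ph_t$ composed of $r_C$ subsets of $\Ph_{t/\alpha}$, as long as $C$ is not yet a single component each component merges with constant probability per round, so after $r$ rounds the expected number of surviving components is $\le r_C \cdot 2^{-\Omega(r)}$; summing $\alpha t$ over these and over all levels, and noting $\sum_t \sum_C r_C \le$ (number of level-$(t/\alpha)$ sets summed over $t$) $= O(nL)$, bounds the total bad cost by $O(nL) \cdot \alpha W \cdot 2^{-\Omega(r)}$, which is $O(\epsilon)\cdot w(\MST(G))$ once $2^r$ dominates $\alpha^3/\epsilon$ and we use $w(\MST(G)) \ge n-1$ and $W = \poly(n)$. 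Third, bound $w(\MST(\sP))$ against $w(\MST(G))$ using the cut-probability calculation above, where the cut probability of $(u,v)$ at level $\Ph_t$ is controlled by the probability it crosses $\P^\MPX_{\alpha t}$, which by \cref{clm:Pmpx-hierarchy} (the hierarchy version of \cref{prp:mpx}) is $O(w(u,v)\log n/(\alpha t))$ — the extra $\alpha$ factor in the denominator is exactly what makes the geometric sum small.

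The main obstacle I expect is the second step — relating the early-stopped, coin-flipped \Boruvka{} output to a clean cost bound. The subtlety is that \cref{alg:boruvka} operates on the approximate hierarchy $\Ph$ and the edge set $G_{\leq \alpha t}[\Ph_t]$ (not $G_{\leq t}[\Ph_t]$, per the footnote), so I must be careful that (i) every reconnection edge really has weight at most $\alpha t$ — true because both endpoints' components lie in the same set of $\Ph_t$, which is contained in a set of $\P^\MPX_{\alpha t}$ of diameter $\le \alpha t$ by construction, so the triangle inequality gives weight $\le \alpha t$; and (ii) the ``number of surviving components'' martingale/expectation argument is valid even though the coin flips in consecutive rounds interact with which edge is the minimum outgoing edge. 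For (ii) the key observation is that the coin flip is independent of everything and, conditioned on there being $\ge 2$ components in $C$, any fixed non-leader component has its minimum-outgoing-edge's other endpoint be a leader with probability $\ge 1/4$, independent across that component's own coin being $0$, so it merges with probability $\ge 1/8$; hence $\Exp[\#\text{surviving}] \le r_C(7/8)^r$ by a round-by-round conditional expectation. I would also need the small technical point that the arbitrary reconnections at Line~\ref{step:mst-join-arbitrarily} and within \cref{alg:main}'s leader-compression step never create cycles, which follows because they only ever join distinct components. Everything else is routine geometric-series arithmetic with the parameter choice $\alpha = \Theta(\log^2 n/\epsilon)$, $r = \Theta(\log\frac1\epsilon + \log\log n)$.
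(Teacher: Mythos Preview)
Your overall structure matches the paper's --- compare $\MST(G)$ to the minimum tree respecting $\sPh$, then bound the extra cost from the early-stopped \Boruvka{} --- but your second step contains a fatal arithmetic gap. You bound the total cost of arbitrary reconnections by $O(nL)\cdot \alpha W \cdot 2^{-\Omega(r)}$ and claim this is $O(\epsilon)\cdot w(\MST(G))$ once $2^r$ dominates $\alpha^3/\epsilon$. But $W = \poly(n)$ while $2^{-\Omega(r)} = \poly(\epsilon/\log n)$ for $r = \Theta(\log\tfrac1\epsilon + \log\log n)$, so the product is still polynomial in $n$, not $O(\epsilon n)$. Upper-bounding every reconnection by $\alpha W$ and the number of level-$(t/\alpha)$ sets by $n$ at each of the $L$ levels is far too crude; you would need $r = \Omega(\log n)$ to absorb that $W$. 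The paper's fix is an amortization you are missing: the right quantity is $\sum_t \alpha t\,\bigl(|\Ph_{t/\alpha}| - |\Ph_t|\bigr)$, and this is shown to be $O(\alpha^3)\cdot w(\MST(G))$ with no $n$ or $W$ appearing (\cref{clm:P-apx-size,clm:Ph-apx-size}). The argument goes through $\sC$: one has $\sum_t t(|\C_{t/\alpha}|-|\C_t|) \le \alpha\cdot w(\MST(G))$ since each MST edge of weight in $(t/\alpha,t]$ is charged at most $\alpha$ times its weight; then $|\P_{t/\alpha}|-|\C_{t/\alpha}|$ is controlled by the number of MST edges cut by $\P_{t/\alpha}$, and $|\P_t|-|\Ph_t|$ by the leader-compression analysis. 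The surviving-component count after \Boruvka{} at level $t$ is then $\le 2^{-\Omega(r)}(|\Ph_{t/\alpha}|-|\Ph_t|)$, giving arbitrary-reconnection cost $2^{-\Omega(r)}\cdot O(\alpha^3)\cdot w(\MST(G))$, which the choice of $r$ handles.

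A secondary gap: you also need that the edges $E^1_t$ picked by modified \Boruvka{} are \emph{actually in} $\Th = \MST(G_{\le\alpha t}\mid \Ph_t)$, not merely of weight $\le \alpha t$. Otherwise even the ``good'' edges would only be bounded by $\sum_t \alpha t(|\Ph_{t/\alpha}|-|\Ph_t|) = O(\alpha^3)\cdot w(\MST(G))$, which is not a $(1+\epsilon)$-approximation. The paper proves $E^1_t \subseteq \Eh_t$ via the standard cut property (each \Boruvka{} edge is the minimum edge leaving some component, hence in the respecting-MST), so that $\sum_t w(E^1_t) \le w(\Th)$, and then invokes \cref{clm:Ph-apx-mst} to get $\Exp[w(\Th)] \le (1+\epsilon)\,w(\MST(G))$. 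Your sketch conflates these two pieces.
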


In essence, the proof consists of three parts: $(i)$ the hierarchy $\sP$ is a good approximation for $\sC$, $(ii)$ the hierarchy $\sPh$ is a good approximation for $\sP$ and hence, also for $\sC$, and $(iii)$ the algorithm finds a $(1+\epsilon)$-approximate MST with respect to $\sPh$.
First, we prove that $\sP^\MPX$ is a hierarchy and examine its properties.

\begin{claim} \label{clm:Pmpx-hierarchy}
        The sequence of partitions $\sP^\MPX = \{\P^\MPX_{\alpha^k}\}_{k \geq 0}$ is a hierarchy, and the guarantees of \cref{prp:mpx} still hold. That is, each set in $\P^\MPX_t$ has diameter at most $t$ w.h.p.\ and any edge of weight $w$ crosses $\P^\MPX_t$ with probability at most $O\left(\frac{w \log n}{t}\right)$.
    \end{claim}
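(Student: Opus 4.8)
The plan is to verify the three asserted properties — nestedness (which gives the hierarchy structure), the diameter bound, and the crossing-probability bound — directly from the definition that $u$ and $v$ lie in the same set of $\P^\MPX_{\alpha^k}$ if and only if they lie in the same set of $\P^0_{\alpha^i}$ for every $i \geq k$, invoking \cref{prp:mpx} for the individual MPX decompositions $\P^0_{\alpha^i}$. For nestedness: if $u$ and $v$ are in the same set of $\P^\MPX_{\alpha^k}$, then by definition they are in the same set of $\P^0_{\alpha^i}$ for every $i \geq k$, and in particular for every $i \geq k+1$, so they are in the same set of $\P^\MPX_{\alpha^{k+1}}$; hence each equivalence class of $\P^\MPX_{\alpha^k}$ is contained in one of $\P^\MPX_{\alpha^{k+1}}$, i.e.\ $\P^\MPX_{\alpha^k}$ refines $\P^\MPX_{\alpha^{k+1}}$. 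For the top level $t = W = \alpha^L$ we adopt the convention $\P^\MPX_W := \{V(G)\}$ (equivalently, we append one trivial partition), which satisfies condition~2 of the definition of a hierarchy at a cost of only an additive $O(1)$ in the number of levels and without affecting the bounds below.

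For the diameter bound, fix a level $t = \alpha^k$ and a set $A \in \P^\MPX_t$. Taking $i = k$ in the definition, $A$ is contained in a single set $A' \in \P^0_t$. Since $G$ is a metric, the weighted diameter of $A$ equals $\max_{u,v \in A} w(u,v) \leq \max_{u,v \in A'} w(u,v) \leq t$, where the last inequality holds with high probability by \cref{prp:mpx}; a union bound over the $L = O(\log n)$ levels yields the bound for all levels simultaneously w.h.p. For the crossing probability, fix an edge $(u,v)$ of weight $w$ and a level $t = \alpha^k$. The edge crosses $\P^\MPX_t$ exactly when $u$ and $v$ are separated by $\P^0_{\alpha^i}$ for some $i$ with $k \leq i \leq L$, so by a union bound and \cref{prp:mpx},
\[
\Pr\bigl[(u,v)\text{ crosses }\P^\MPX_t\bigr] \;\leq\; \sum_{i=k}^{L} O\!\left(\frac{w \log n}{\alpha^i}\right) \;\leq\; O\!\left(\frac{w \log n}{\alpha^k}\right)\sum_{j \geq 0}\alpha^{-j} \;=\; O\!\left(\frac{w\log n}{t}\right)\cdot\frac{1}{1-1/\alpha}.
\]
Because $\alpha = \Theta(\log^2 n/\epsilon) \geq 2$, the factor $\tfrac{1}{1-1/\alpha} \leq 2$, giving the claimed $O(w\log n/t)$. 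Note that no independence among the decompositions $\{\P^0_{\alpha^i}\}_i$ is used; all estimates go through union bounds.

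I do not expect a genuine obstacle here: the claim is essentially a bookkeeping step that repackages \cref{prp:mpx} into the nested form required by \cref{alg:main}. The only mild points requiring care are the treatment of the final partition (handled by the convention $\P^\MPX_W = \{V(G)\}$) and checking that the geometric series in the crossing bound contributes only a constant factor, which is immediate given that $\alpha$ is a large polylogarithm.
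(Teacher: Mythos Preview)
Your proof is correct and follows essentially the same approach as the paper: both verify nestedness directly from the definition, bound the diameter by noting $\P^\MPX_t$ refines $\P^0_t$ and invoking \cref{prp:mpx}, and bound the crossing probability by a union bound over $i \geq k$ summed as a geometric series with ratio $1/\alpha$. Your added remarks about the top-level convention and the non-use of independence are minor clarifications that do not change the argument.
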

    \begin{proof}
        First, note that $\sP^\MPX$ is a hierarchy.
        Take a set $A$ in the partition $\P^\MPX_{\alpha^k}$.
        By definition, any two vertices in $A$ are in the same set of $\P^0_{\alpha^i}$ for $i \geq k$.
        Observe that this immediately implies the same guarantee for $i \geq k + 1$.
        Therefore, all the vertices of $A$ must be in the same set of $\P^\MPX_{\alpha^{k+1}}$, i.e.\ there is a set $B \in \P^\MPX_{\alpha^{k+1}}$ such that $A \subseteq B$.

        The diameter of any set in $\P^\MPX_t$ is at most $t$ with high probability.
        Partition $\P^\MPX_t$ refines $\P^{(0)}_t$.
        Therefore, any set $A \in \P^\MPX_t$ is the subset of some set $B \in \P^0_t$.
        By \cref{prp:mpx}, the diameter of $B$ is at most $t$ with high probability.
        Therefore, since $G$ is a complete graph that satisfies the triangle inequality,
        the diameter of $A \subseteq B$ is also at most $t$.

        The crossing probability of an edge is at most doubled for $\sP^\MPX_t$ due to taking intersections.
        Take an edge $(u, v)$.
        The probability that it crosses $\P^0_{\alpha^i}$ is at most $O\left(\frac{(\log n) w(u, v)}{\alpha^i}\right)$, by \cref{prp:mpx}.
        Also, $(u, v)$ crosses $\P^\MPX_{\alpha^k}$ 
        only if it crosses some $\P^0_{\alpha^i}$ for $i \geq k$.
        Therefore, by taking the union bound over $i$,
        the probability that $(u, v)$ crosses $\P^\MPX_{\alpha^k}$ is at most:
        \begin{align*}
        \sum_{i \geq k} O\left(\frac{w(u, v) \log n}{\alpha^i}\right)
        &\leq O\left(\frac{w(u, v) \log n}{\alpha^k}\right) \cdot \sum_{i \geq 0} \frac{1}{\alpha^i} \\
        &\leq O\left(\frac{w(u, v) \log n}{\alpha^k}\right) \cdot \frac{1}{1 - \frac{1}{\alpha}} \\
        &\leq O\left(\frac{w(u, v) \log n}{\alpha^k}\right),
        \end{align*}
        where the last inequality follows form $\frac{1}{1 - \frac{1}{\alpha}} \leq 2$.
    \end{proof}

    Now, we prove some properties of the hierarchy $\sP$. Recall the definition.
    For each level $t = \alpha^k$, $\P_t$ is obtained by connecting the sets in $\P^\MPX_t$ using edges of weight at most $t$ that do not cross $\P^\MPX_{\alpha t}$.
    That is, $\P_t$ is the set of connected components in $G_{\leq t}[\P^\MPX_{\alpha t}]$.
    
    \begin{claim} \label{clm:P-hierarchy}
        The sequence of decompositions $\sP = \{\P_{\alpha^k}\}_{k \geq 0}$ is a hierarchy, and for all levels $t = \alpha^k$ it holds that
        \begin{enumerate}
            \item any set in $\P_t$ is connected in $G_{\leq t}$,
            \item any set in $\P_t$ has diameter at most $\alpha t$, and
            \item any edge of weight $w \leq t$ crosses $\P_t$ with probability at most $O\left(\frac{w \log n}{\alpha t}\right)$.
        \end{enumerate}
    \end{claim}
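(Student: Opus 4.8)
The plan is to check the hierarchy property together with the three numbered guarantees one item at a time, each reducing to a short structural observation plus an appeal to \cref{clm:Pmpx-hierarchy} and to the fact that $G$ is a complete graph satisfying the triangle inequality. Throughout I would work with the description of $\P_t$ as the partition of $V$ into connected components of $G_{\leq t}[\P^\MPX_{\alpha t}]$ --- the graph on $V$ whose edges are the weight-at-most-$t$ edges that do not cross $\P^\MPX_{\alpha t}$.

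For the hierarchy property, fix a level $t=\alpha^k$ and a set $A \in \P_t$. Being a connected component of $G_{\leq t}[\P^\MPX_{\alpha t}]$, $A$ admits a spanning tree whose edges all have weight at most $t \le \alpha t$ and do not cross $\P^\MPX_{\alpha t}$. Since $\P^\MPX_{\alpha t}$ refines $\P^\MPX_{\alpha^2 t}$ by \cref{clm:Pmpx-hierarchy}, those edges do not cross $\P^\MPX_{\alpha^2 t}$ either, so they all lie in $G_{\leq \alpha t}[\P^\MPX_{\alpha^2 t}]$; hence $A$ is contained in a single connected component of that graph, i.e.\ in one set of $\P_{\alpha t}$. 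This is exactly the assertion that $\P_t$ refines $\P_{\alpha t}$, and with the convention $\P^\MPX_{\alpha W}=\{V\}$ the top partition $\P_W$ is the connected-component partition of the whole metric graph, namely $\{V\}$.

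The three guarantees follow in the same spirit. Part (1) is immediate, since a set $A \in \P_t$ is connected inside $G_{\leq t}[\P^\MPX_{\alpha t}]$, a subgraph of $G_{\leq t}$. For part (2), the edges witnessing connectivity of $A$ do not cross $\P^\MPX_{\alpha t}$, so $A$ sits inside a single set $B \in \P^\MPX_{\alpha t}$; by \cref{clm:Pmpx-hierarchy}, $B$ has diameter at most $\alpha t$ with high probability, and since $G$ is complete and metric, the diameter of $A \subseteq B$ --- which equals its largest pairwise weight --- is at most $\alpha t$ as well. For part (3), fix an edge $(u,v)$ of weight $w := w(u,v) \le t$: if $(u,v)$ does not cross $\P^\MPX_{\alpha t}$, then because $w \le t$ the edge itself belongs to $G_{\leq t}[\P^\MPX_{\alpha t}]$, so $u$ and $v$ lie in the same component and $(u,v)$ does not cross $\P_t$; contrapositively, $(u,v)$ crosses $\P_t$ only if it crosses $\P^\MPX_{\alpha t}$, which by \cref{clm:Pmpx-hierarchy} happens with probability at most $O\!\left(\frac{w \log n}{\alpha t}\right)$.

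I do not expect a genuine obstacle here; the proof is essentially bookkeeping. The one thing that needs care is keeping straight which MPX partition is in play at each step --- $\P^\MPX_{\alpha t}$ when one defines $\P_t$, but $\P^\MPX_{\alpha^2 t}$ when one compares $\P_t$ with $\P_{\alpha t}$ --- together with retaining the ``with high probability'' qualifier on the diameter statements and observing that the hypothesis $w \le t$ in (3) is used in an essential way, since for heavier edges the edge $(u,v)$ need not be present in $G_{\leq t}$.
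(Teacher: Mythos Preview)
Your proposal is correct and follows essentially the same route as the paper's proof: both establish the hierarchy via the subgraph inclusion $G_{\leq t}[\P^\MPX_{\alpha t}] \subseteq G_{\leq \alpha t}[\P^\MPX_{\alpha^2 t}]$ (which you spell out slightly more explicitly via a spanning tree of $A$), and handle parts (1)--(3) by the same three observations appealing to \cref{clm:Pmpx-hierarchy}. Your extra remarks about which MPX level is in play at each step and about the role of the hypothesis $w \le t$ are accurate and helpful, but add nothing beyond what the paper's proof implicitly uses.
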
    

    \begin{proof}
        Observe that $G_{\leq t}[\P^\MPX_{\alpha t}]$ is a subgraph of $G_{\leq \alpha t}[\P^\MPX_{\alpha^2 t}]$.
        Therefore, $\P_t$ is a refinement of $\P_{\alpha t}$, and $\sP$ is a hierarchy.
        Take a set $A \in \P_t$. By definition $A$ is a connected component of $G_{\leq t}[\P^\MPX_{\alpha t}]$.
        Because the edges of $G_{\leq t}[\P^\MPX_{\alpha t}]$ do not cross $\P^\MPX_{\alpha t}$, $A$ is fully contained in a set $B \in \P^\MPX_{\alpha t}.$
        By \cref{clm:Pmpx-hierarchy}, $B$ has diameter at most $\alpha t$.
        Therefore $A$ also has diameter at most $\alpha t$.
        Finally, for an edge with weight $w \leq t$ to cross $\P_t$,
        it must also cross $\P^\MPX_{\alpha t}$,
        which happens with probability at most $O\left(\frac{w \log n}{\alpha t}\right)$ (\cref{clm:Pmpx-hierarchy}).
    \end{proof}

\begin{claim} \label{clm:P-apx-size}
Let $\P_t$ be the set of connected components in $G_{\leq t}[\P^\MPX_{\alpha t}]$. Then, it holds that
$$
\E\left[\sum_t t \left(\card{\P_{t/\alpha}} - \card{\P_t}\right)\right] \leq 2\alpha^2 \cdot \MST(G).
$$
\end{claim}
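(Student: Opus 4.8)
The plan is to bound the left-hand side by comparing the quantity $\sum_t t(\card{\P_{t/\alpha}} - \card{\P_t})$ against a telescoping-style sum that can be charged to MST edges. Here's the idea I want to develop.

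Let me reconsider what $\card{\P_{t/\alpha}} - \card{\P_t}$ counts. The partition $\P_t$ is the set of connected components of $G_{\leq t}[\P^\MPX_{\alpha t}]$, and $\P_{t/\alpha}$ is the set of connected components of $G_{\leq t/\alpha}[\P^\MPX_{t}]$. Since $\sP$ is a hierarchy (by Claim 2.9, i.e.\ \cref{clm:P-hierarchy}), $\P_{t/\alpha}$ refines $\P_t$, so $\card{\P_{t/\alpha}} - \card{\P_t}$ equals the number of "merges" that happen when going from level $t/\alpha$ to level $t$. Equivalently, if we fix any spanning forest of each $G_{\leq t}[\P^\MPX_{\alpha t}]$ that respects the hierarchy, then $\card{\P_{t/\alpha}} - \card{\P_t}$ is exactly the number of forest edges introduced at level $t$.

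So first I would pick, for each level $t$, a set $E_t$ of $\card{\P_{t/\alpha}} - \card{\P_t}$ edges from $G_{\leq t}[\P^\MPX_{\alpha t}]$ whose addition realizes the coarsening $\P_{t/\alpha}\to\P_t$ — e.g.\ take a spanning tree within each set of $\P_t$ computed on the super-nodes $\P_{t/\alpha}$. Each such edge has weight at most $t$. This gives a single spanning tree $T' = \bigcup_t E_t$ of $G$ (since $\P_W$ has one set), and $\sum_t t(\card{\P_{t/\alpha}} - \card{\P_t}) = \sum_t t\cdot\card{E_t} \geq \sum_t \sum_{e\in E_t} w(e) = w(T') \geq \MST(G)$ — but that's the wrong direction; I need an \emph{upper} bound, so the weight-rounding must go the other way: $t \leq \alpha \cdot w(e)$? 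No, that's false for small-weight edges. This is the main obstacle, and I think the right fix is the standard charging argument.

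**The charging argument.**

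The key step: compare to the true connected-components hierarchy $\sC = \{\C_{\alpha^k}\}$. For the \emph{exact} hierarchy one has the clean identity $\sum_t t(\card{\C_{t/\alpha}} - \card{\C_t}) \leq \alpha\cdot w(\MST(G))$ — this is the classic fact that Kruskal/Borůvka up to geometric thresholds gives an $\alpha$-approximation, since each merge at level $t$ can be charged to an MST edge of weight at least $t/\alpha$ (every edge spanning two distinct components of $G_{\leq t/\alpha}$ has weight $> t/\alpha$, and the number of merges telescopes along the MST). So I would first establish this baseline for $\sC$, then bound the "extra" merges caused by using $\P$ instead of $\C$. The difference is: an edge $(u,v)$ of the MST that would have been "internal" at level $t/\alpha$ under $\C$ (i.e.\ $u,v$ already connected in $G_{\leq t/\alpha}$) might be "cut" under $\P_{t/\alpha}$ because it crosses $\P^\MPX_t$; such a cut forces up to one extra merge at some level $\leq t'$ where the two pieces rejoin, costing at most $\alpha t$ (since diam of $\P_t$-sets is $\leq \alpha t$, rejoining happens by level $\approx \alpha t$... actually within the next level or two). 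By \cref{clm:P-hierarchy}(3), edge $(u,v)$ crosses $\P_{t'}$ with probability $O(w(u,v)\log n / (\alpha t'))$, so summing the extra cost $O(\alpha t')$ over levels $t'$ and over MST edges gives $\sum_{(u,v)\in\MST}\sum_{t'} O(\alpha t')\cdot O(w(u,v)\log n/(\alpha t')) = \sum_{(u,v)\in\MST} O(L\log n\cdot w(u,v)) \leq O(\alpha)\cdot w(\MST(G))$ by the choice $\alpha \geq L\log n/\epsilon$. Combining the $\alpha\cdot w(\MST)$ baseline with the $O(\alpha)\cdot w(\MST)$ extra term, and absorbing constants, yields the claimed $2\alpha^2\cdot\MST(G)$ bound (the extra factor of $\alpha$ presumably comes from a cruder but simpler accounting in the actual writeup — e.g.\ charging each level-$t$ merge a full weight of $t$ against an MST edge of weight $\geq t/\alpha^2$ after accounting for the $\P^\MPX$ slack of one level).

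**Where the difficulty lies.**

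The main obstacle is the probabilistic charging of the "extra" merges: I need to argue that a constant bound on the \emph{number of excess components} at each level can be obtained in expectation from the per-edge crossing probabilities of \cref{clm:P-hierarchy}(3), and that these excess merges are "paid for" only once each, at a level within a constant (in $\log_\alpha$) of where they were created. Concretely, I would set up an injection (or a fractional assignment) from excess merges at level $t$ into (MST edge, level) pairs, show each pair is used $O(1)$ times, and take expectations using linearity. The geometric decay in $t'$ of the crossing probability is what makes the sum over levels converge to $O(w(u,v)\cdot L\log n)$ per MST edge rather than blowing up, and the factor $\alpha^2$ (rather than $\alpha$) in the statement gives enough slack that I don't need tight constants anywhere — so I would deliberately keep the bookkeeping loose. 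I expect the cleanest writeup first reduces to "\Cref{alg:main}'s hierarchy $\sP$ satisfies $w(\MST(F_{\alpha^k}\mid\P_{\alpha^k})) \leq \alpha^2\cdot\MST(G)$-ish" via \cref{def:respecting-tree} and then reads off the component-count bound, since $t\cdot(\card{\P_{t/\alpha}}-\card{\P_t})$ overcounts the weight of the level-$t$ edges of that respecting tree by at most a factor $\alpha$ (each such edge has weight in $(t/\alpha, t]$... or just $\leq t$).
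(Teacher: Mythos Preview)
Your approach is correct and is essentially the paper's: split via the exact hierarchy $\sC$, use the baseline $\sum_t t(\card{\C_{t/\alpha}}-\card{\C_t})\leq\alpha\cdot w(\MST(G))$, and bound the excess by summing MST-edge crossing probabilities over levels. The clean device that makes the split precise (and which replaces your somewhat muddled ``rejoining'' narrative) is the one-line inequality
\[
\card{\P_{t/\alpha}}-\card{\P_t}\ \leq\ \bigl(\card{\P_{t/\alpha}}-\card{\C_{t/\alpha}}\bigr)+\bigl(\card{\C_{t/\alpha}}-\card{\C_t}\bigr),
\]
valid because $\P_t$ refines $\C_t$ so $\card{\C_t}\leq\card{\P_t}$; then $\card{\P_{t/\alpha}}-\card{\C_{t/\alpha}}$ is exactly the number of MST edges of weight $\leq t/\alpha$ that cross $\P_{t/\alpha}$, and your formula $\sum_e\sum_{t'}O(\alpha t')\cdot O(w(e)\log n/(\alpha t'))$ is precisely what drops out after reindexing $t'=t/\alpha$. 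Your calculation in fact yields $O(L\log n)\cdot w(\MST(G))\leq O(\alpha)\cdot w(\MST(G))$ for the excess term, tighter than the stated $\alpha^2$; the extra factor of $\alpha$ in the paper comes only from bucketing MST edges into weight classes $\Et_{t'}$ and crudely bounding $w(e)\leq t'$ before summing, not from any slack you are missing.
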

\begin{proof}
    For any $t = \alpha^k$, let $\C_t$ be the connected components of $G_{\leq t}$.
    Observe that the MST of $G$,
    includes exactly $\card{\C_{t/\alpha}} - \card{\C_t}$ edges of weight $w \in (t/\alpha, t]$.
    If we charge each of these edges $t$ units,
    the total overall charge will be $\alpha \cdot w(\MST(G))$ because each edge is being charged at most $\alpha$ times its weight. Hence, we have
    \begin{equation}
    \sum_t t\left(\card{\C_{t/\alpha}} - \card{\C_t} \right)\leq \alpha \cdot w(\MST(G)). \label{eq:alpha-mst}
    \end{equation}

    To prove the claim, we relate its left-hand side to \eqref{eq:alpha-mst} as follows:

    \begin{equation}
    \sum_t t \left(\card{\P_{t/\alpha}} - \card{\P_t}\right) \leq \sum_t t \left(\card{\P_{t/\alpha}} - \card{\C_{t/\alpha}}\right) + \sum_t t \left(\card{\C_{t/\alpha}} - \card{\C_{t}}\right). \label{eq:alpha-mst-expand}
    \end{equation}
    This is true because on the right-hand side, the terms involving $\C_{t/\alpha}$ cancel out, and $\card{\C_t} \leq \card{\P_t}$ since $\P_t$ refines $\C_t$.

    The second sum on the right-hand side of \eqref{eq:alpha-mst-expand} has already been bounded by $\alpha \cdot w(\MST(G))$ in \eqref{eq:alpha-mst}. It remains to bound the first part.
    Let $\Et_t$ be the edges of $\MST$ with weight $w \in (t/\alpha, t]$, i.e.\ $\C_t = \C_{t/\alpha}\oplus\Et_t$.
    The difference in the number of components
    $\card{\P_{t/\alpha}}-\card{\C_{t/\alpha}}$
    is equal to the number of edges of weight at most $t$ in the MST (i.e.\ the edges in $\bigcup_{t'\leq t/\alpha}\Et_{t'}$) that cross $\P_{t/\alpha}$.
    For an edge in $\Et_{t'}$ the probability of crossing $\P_{t/\alpha}$ is at most $O\left(\frac{t' \log n}{t}\right)$ (by \cref{clm:P-hierarchy}).
    Therefore, we have:
    $$
    \E\left[t \left(\card{\P_{t/\alpha}} - \card{\C_{t/\alpha}}\right) \right]
    \leq 
    t \sum_{t'\leq t/\alpha} \card{\Et_{t'}} \cdot O\left(\frac{t' \log n}{t}\right)
    \leq \sum_{t'\leq t/\alpha} \card{\Et_{t'}} \cdot O\left({t' \log n}\right).
    $$
    Let $L$ be the number of levels. Then, summing over $t$ gives
    \begin{align*}
    \E\left[\sum_t t \left(\card{\P_{t/\alpha}} - \card{\C_{t/\alpha}}\right) \right]
    &\leq
    \sum_{t'} \card{\Et_{t'}} \cdot O\left({t' \log n}\right) \cdot L  \\
    &\leq O\left({L \log n}\right) \cdot \sum_{t'} t' \card{\Et_{t'}} \\
    &\leq O\left({L \log n}\right) \cdot \sum_{t'} t' (\card{\C_{t'/\alpha}}-\card{\C_{t'}}) \\
    &\leq \alpha^2 w(\MST(G)),
    \end{align*}
    where the first inequality holds because each term involving $\card{\Et_{t'}}$
    appears at most $L$ times in the sum, once for each $t \geq t'$.
    The last inequality follows from the choice of $\alpha$ and \eqref{eq:alpha-mst}.

    Combining this with \eqref{eq:alpha-mst} and \eqref{eq:alpha-mst-expand}, we get
    \begin{equation*}
    \E\left[\sum_t t \left(\card{\P_{t/\alpha}} - \card{\P_t}\right)\right] \leq \alpha^2 w(\MST(G)) + \alpha w(\MST(G)) = 2\alpha^2 w(\MST(G)). \qedhere
    \end{equation*}
\end{proof}

\begin{claim} \label{clm:P-apx-mst}
    Let $\P_t$ be the set of connected components in $G_{\leq t}[\P^\MPX_{\alpha t}]$.
    Then it holds that:
    $$
    \E[w(\MST(G_{\leq t} \mid \P_t))] \leq (1 + \epsilon) w(\MST(G)).
    $$
\end{claim}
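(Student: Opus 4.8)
The plan is to exhibit one spanning tree $T$ that respects $(G_{\leq \alpha^k}\mid \P_{\alpha^k})$ with $\E[w(T)]\leq (1+\epsilon)\,w(\MST(G))$; since $\MST(G_{\leq t}\mid \P_t)$ is by definition the cheapest such tree, this proves the claim. I would build $T$ greedily, processing the levels $t=1,\alpha,\alpha^2,\dots,W$ in increasing order while maintaining the invariant that after level $t$ the connected components of the chosen edges are exactly $\P_t$, starting from the singleton partition $\P_{1/\alpha}$. At level $t$, first add a maximal acyclic set of \emph{reusable} $\MST$-edges, i.e.\ edges $e\in\MST(G)$ with $w(e)\leq t$ that do not cross $\P_t$ and currently join two distinct components; then, using \cref{clm:P-hierarchy}(1) (each set of $\P_t$ is connected in $G_{\leq t}$), add \emph{filler} edges of weight $\leq t$ inside the sets of $\P_t$ until the partition is exactly $\P_t$. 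Every edge used at level $t$ has weight $\leq t$ so $T$ respects the hierarchy, every step strictly decreases the number of components so $T$ is a spanning tree, and each $\MST$-edge is added at most once; hence $w(T)\leq w(\MST(G))+\sum_t t\cdot X_t$, where $X_t$ is the number of filler edges used at level $t$.

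The heart of the argument is the deterministic bound $X_t\leq Y_t$, where $Y_t$ is the number of $\MST$-edges of weight $\leq t$ that cross $\P_t$. Since the $\MST$-edges of weight $\leq t$ form a spanning forest of $G_{\leq t}$, there are exactly $n-\card{\C_t}$ of them, and deleting the $Y_t$ of them that cross $\P_t$ leaves a forest with $\card{\C_t}+Y_t$ components. Adding a maximal acyclic subset of the reusable $\MST$-edges yields the same components as adding all of them, and filler edges introduced at earlier levels $s<t$ live inside sets of $\P_s$, which refines $\P_t$, so they do not cross $\P_t$ either; consequently the partition just before the level-$t$ filler edges refines $\P_t$ and has at most $\card{\C_t}+Y_t$ sets. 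As $\P_t$ refines $\C_t$ it has at least $\card{\C_t}$ sets, and since each filler edge reduces the count by one, $X_t\leq(\card{\C_t}+Y_t)-\card{\C_t}=Y_t$.

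It then remains to take expectations of $\sum_t tX_t\leq\sum_t tY_t$. By \cref{clm:P-hierarchy}(3), an $\MST$-edge $e$ with $w(e)\leq t$ crosses $\P_t$ with probability $O\!\left(\frac{w(e)\log n}{\alpha t}\right)$, so with $L$ the number of levels,
\[
\E\Big[\sum_t t\,Y_t\Big]=\sum_t t\!\!\sum_{e\in\MST(G),\,w(e)\leq t}\!\!\Pr[e\text{ crosses }\P_t]\;\leq\;\sum_t\sum_{e,\,w(e)\leq t}O\!\left(\tfrac{w(e)\log n}{\alpha}\right)\;\leq\;O\!\left(\tfrac{L\log n}{\alpha}\right)w(\MST(G)),
\]
where the last step swaps the order of summation and uses that each edge is counted in at most $L$ levels. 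Since $\alpha=\Theta\!\left(\frac{\log^2 n}{\epsilon}\right)\geq \frac{C\,L\log n}{\epsilon}$ for the hidden constant $C$, this is at most $\epsilon\,w(\MST(G))$, giving $\E[w(\MST(G_{\leq t}\mid\P_t))]\leq \E[w(T)]\leq(1+\epsilon)\,w(\MST(G))$.

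I expect the main obstacle to be the bookkeeping in the middle paragraph: proving that the greedy "maximal acyclic set of reusable $\MST$-edges" realizes exactly the component structure of adding \emph{all} reusable $\MST$-edges, that no filler edge from a previous level ever crosses $\P_t$, and that the invariant "components after level $t$ equal $\P_t$" is preserved — these are precisely what licenses comparing the pre-filler component count with $\card{\C_t}$ and $\card{\P_t}$, and everything else (the counting identity $|\{e\in\MST(G):w(e)\leq t\}|=n-\card{\C_t}$ and the expectation computation) is routine.
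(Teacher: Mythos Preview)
Your proof is correct, and builds essentially the same tree $T$ as the paper: process levels in order, at each level first insert a maximal acyclic set of captured $\MST$-edges, then patch up with filler edges of weight $\leq t$ using \cref{clm:P-hierarchy}(1). The final expectation computation is also identical.

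Where you diverge from the paper is in the intermediate accounting. The paper bounds $w(T)$ via a charging scheme: it introduces auxiliary weights $\tilde w(e)$ on $\MST$-edges (``lifting'' $\tilde w(e)$ to $t$ whenever $e$ crosses $\P_t$) and an auxiliary forest $T'$ that it edits on the fly, so that every $\MST$-edge eventually charges its $\tilde w$-weight to a distinct edge of $T$ of no larger actual weight; this gives $w(T)\leq \tilde w(\MST(G))$, and then $\E[\tilde w(e)]\leq(1+\epsilon)w(e)$ is computed per edge. Your route is more direct: you bound the \emph{number} of filler edges at level $t$ by $X_t\leq Y_t$, the number of $\MST$-edges of weight $\leq t$ that cross $\P_t$, via the counting argument that $Q_t$ (the partition just before the level-$t$ fillers) equals $\P_{t/\alpha}\oplus R_t$ and hence has at most $|\C_t|+Y_t$ parts, while $|\P_t|\geq|\C_t|$. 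This sidesteps the auxiliary forest $T'$ and the edge-swapping bookkeeping entirely, and is arguably cleaner. The two arguments are really two ways of seeing the same inequality: summing your $tY_t$ over levels and then swapping the order of summation recovers exactly the paper's per-edge term $\sum_{t\geq w(e)}t\cdot\Pr[e\text{ crosses }\P_t]$.

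Your own assessment of the obstacle is accurate: the only subtle point is that the union of the $\MST$-edges selected across all levels $\leq t$ together with $R_t$ has the same component structure as $R_t$ alone (since $R_s\subseteq R_t$ for $s\leq t$ by the hierarchy property), and that earlier filler edges, living inside sets of $\P_s$, never cross $\P_t$. You handle both correctly.
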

\begin{proof}
    We construct a tree $T$ that respects $(G_{\leq t} \mid \P_t)$, such that $\E[w(T)] \leq (1 + \epsilon)\MST(G)$.
    Level by level, we add a set of edges $E_t \subseteq G_{\leq t}$ to $T$ such that $\P_t = \P_{t/\alpha} \oplus E_t$, while comparing the weights of the added edges to $w(\MST(G))$ using a charging argument.
    For the charging argument, we define a set of auxiliary weights $\wt$ over the edges of $\MST(G)$. Throughout the process, these weights are updated,
    and every edge $e \in \MST(G)$ charges its auxiliary weight to an edge $e' \in T$, such that $w(e') \leq \wt(e)$.
    Therefore, in the end, it holds that $w(T) \leq \wt(\MST(G))$,
    and we bound $\wt(\MST(G))$.
    
    Let $\Tt = \bigcup_t \Et_t$ be the MST of $G$, such that $\C_t = \C_{t/\alpha} \oplus \Et_t$ (recall $\C_t$ is the set of connected components of $G_{\leq t}$). That is, $\Et_t$ is the set of edges in $\Tt$ with weight $w \in (t/\alpha, t]$. Let $\Ft_t$ be the edges of $\Tt$ with weight at most $t$.
    We start with $T = \emptyset$ and go through the levels $1, \alpha, \alpha^2, \ldots, W$ sequentially.
    Initially, $\wt(e) = w(e)$ for all $e \in \Tt$.
    At any point, we say an edge $e \in \Tt$ is unaccounted for if it has not yet charged anything to any edge in $T$.
    At first, all the edges of $\Tt$ are unaccounted for.
    Each edge will be accounted for on the first level that it is captured.
    
    At level $t$, we \enquote{lift} all the edges of $\Ft_t$ that have not been captured by $\P_t$ to the minimum weight of the next level, meaning we let $\wt(e) = t$ for such an edge $e$. These edges remain unaccounted for.
    The rest of the edges in $\Ft_t$ will be accounted for by the time we are through with this level.
    We go over the unaccounted edges of $\Ft_t$ that have been captured by $\P_t$, and add an edge to $T$ if doing so does not create a cycle.
    In this case, we charge $\wt(e)$ to $e$ itself.
    Note that $\wt(e) \geq w(e)$ and $e$ is accounted for.
    After going through these edges,
    if a set in $\P_t$ is not fully connected by $T$,
    we make it connected by adding edges of weight at most $t$, call these \emph{extra edges}.
    This can be done since $\P_t$ is connected in $G_{\leq t}$ (by \cref{clm:P-hierarchy}).
    We let $E_t$ be the set of added edges for this level.

    Now, we make the charges for the edges of $\Ft_t$ that were captured by $\Ph_t$ but have not yet been accounted for. 
    These are the edges that were lifted from previous levels and captured by this level, yet they could not be added to $T$ because it would have created a cycle.
    Intuitively, an edge $e \in \Tt$ cannot be added because an edge $e' \notin \Tt$ was added in its stead in one of the earlier levels. We aim to charge $\wt(e)$ to this $e'$, which can be found in the cycle $e$ forms with $T$.

    More formally, for the charging argument, we also maintain an auxiliary forest $T'$, which changes with $T$.
    Throughout the process, $T'$ will have the same set of connected components as $T$.
    Whenever an edge is added to $T$, 
    we also add it to $T'$.
    At a level $t$, we modify $T'$ right before the extra edges are added.
    For each $e \in \Ft_t$ that is captured but not accounted for,
    we make the following charge.
    Since $e$ could not be added to $T$, it must form a cycle with $T$.
    Therefore, because $T'$ has the same set of connected components as $T$,
    $e$ also forms a cycle with $T'$.
    At least one edge $e'$ in this cycle is not a member of $\Tt$, because $\Tt$ is a tree.
    We charge $\wt(e)$ to $e'$, and let $T' \gets T' \setminus \{e'\} \cup \{e\}$.
    Note that $w(e') \leq t$ since it must have been added in the previous levels, and $\wt(e)\geq t \geq w(e')$.
    Furthermore, no other edges charge to $e'$ since it is removed from $T'$.
    
    To conclude the proof, note that any edge $e \in \Tt$ is at some point captured and hence accounted for, because the last level of the hierarchy groups all the vertices together.
    Also, when a vertex $e \in \Tt$ charges $\wt(e)$ to an edge $e' \in T$,
    it holds that $\wt(e) \geq w(e')$.
    Therefore, it holds that $w(T) \leq \wt(\Tt)$.
    It remains to analyze $\E[\wt(\Tt)]$.
    For an edge $e \in \Tt$ to be lifted at level $t$, it must have $w(e) \leq t$ and it must cross $\P_t$ which happens with probability $O\left(\frac{w(e) \log n}{\alpha t}\right)$.
    In case it is lifted, $\wt(e)$ is increased by at most $t - t / \alpha \leq t$.
    Therefore, it holds that
    $$
    \E[\wt(e)] \leq w(e) + \sum_t O\left(\frac{w(e) \log n}{\alpha t}\right) \cdot t
    \leq w(e)\left(1 + O\left(\frac{L \log n}{\alpha}\right)\right)
    \leq (1 + \epsilon)w(e),
    $$
    where recall $L$ is the number of levels.
    Summing over $e$, we get
    $$
    \E[\wt(\Tt)] \leq (1 + \epsilon)w(\Tt).
    $$
    Combining this with $w(T) \leq \wt(\Tt)$ gives the result:
    \begin{equation*}
    \E[w(\MST(G_{\leq t} \mid \P_t))]
    \leq \E[w(T)]
    \leq \E[\wt(\Tt)]
    \leq (1 + \epsilon) w(\MST(G)).
    \qedhere
    \end{equation*}
\end{proof}

Now, we move on to analyzing the hierarchy $\sPh$. We prove two claims, similar to those for $\sP$. 

\begin{claim} \label{clm:Ph-apx-size}
    Let the hierarchy $\sPh$ be obtained as in \Cref{alg:main}. Then, it holds that
    $$
    \E\left[\sum_t t \left(\card{\Ph_{t/\alpha}} - \card{\Ph_t}\right)\right] \leq  3\alpha^2 \cdot \MST(G).
    $$
\end{claim}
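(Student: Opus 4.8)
The plan is to deduce \cref{clm:Ph-apx-size} from \cref{clm:P-apx-size} by comparing the hierarchy $\sPh$ with $\sP$ level by level and showing that the extra cost of replacing $\sP$ by $\sPh$ is a lower-order term. The first step is the inequality $\card{\Ph_{t/\alpha}}\le\card{\P_{t/\alpha}}$, i.e.\ $\P_{t/\alpha}$ refines $\Ph_{t/\alpha}$: both partitions refine $\P^\MPX_t$, and inside any single cluster $B$ of $\P^\MPX_t$ the partition $\Ph_{t/\alpha}$ is obtained from the connected components of $G_{\leq t/\alpha}[\P^\MPX_t]$ inside $B$ (which is exactly $\P_{t/\alpha}$ restricted to $B$) only by merging together those components that the $\Theta(\log\tfrac1\epsilon+\log\log n)$ rounds of leader compression failed to contract into a single piece --- these all become one ``blob'' --- and never by splitting. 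Since $t>0$ this yields
\begin{equation*}
\sum_t t\paren{\card{\Ph_{t/\alpha}} - \card{\Ph_t}} \;\le\; \sum_t t\paren{\card{\P_{t/\alpha}} - \card{\P_t}} \;+\; \sum_t t\paren{\card{\P_t} - \card{\Ph_t}}.
\end{equation*}
The first sum is at most $2\alpha^2\,w(\MST(G))$ in expectation by \cref{clm:P-apx-size}, so it suffices to bound the ``error'' sum by $\alpha^2\,w(\MST(G))$ in expectation.

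For the error sum, first condition on the randomness of all the MPX decompositions. Inside a cluster $B$ of $\P^\MPX_{\alpha t}$, by the same reasoning as above $\Ph_t$ keeps every cluster $A$ of $\P_t$ that leader compression contracted into one piece and replaces the rest by a single blob, so $\card{\P_t}-\card{\Ph_t}$ is at most the number of clusters $A$ of $\P_t$ that are \emph{not} fully contracted after $r=\Theta(\log\tfrac1\epsilon+\log\log n)$ rounds. For a cluster $A$ that is the union of $p_0(A)$ clusters of $\P^\MPX_t$, I will invoke the standard shrinkage property of leader compression: in one round, each surviving non-leader piece has a leader neighbour (hence merges) with probability at least $\tfrac14$ --- rooting a spanning tree of the contracted graph and using linearity over the non-root pieces --- so the expected number of pieces inside $A$ drops from $p$ to at most $1+\tfrac34(p-1)$. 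Iterating, after $r$ rounds the expected number of surviving pieces in $A$ minus one is at most $(\tfrac34)^r(p_0(A)-1)$, whence $\Pr[A\text{ not fully contracted}]\le 2^{-\Omega(r)}(p_0(A)-1)$. Summing over all clusters $A$ of $\P_t$ and using $\sum_A(p_0(A)-1)=\card{\P^\MPX_t}-\card{\P_t}$ gives $\E\!\left[\card{\P_t}-\card{\Ph_t}\,\middle|\,\text{MPX}\right]\le 2^{-\Omega(r)}\paren{\card{\P^\MPX_t}-\card{\P_t}}$.

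Finally I bound $\sum_t t\,\E\!\left[\card{\P^\MPX_t}-\card{\P_t}\right]$. Every cluster of $\P_t$ is connected in $G_{\leq t}$ (\cref{clm:P-hierarchy}), so $\P_t$ refines $\C_t$ and $\card{\P^\MPX_t}-\card{\P_t}\le\card{\P^\MPX_t}-\card{\C_t}$, which is at most the number of MST edges of weight $\le t$ crossing $\P^\MPX_t$ (these edges, contracted onto the clusters of $\P^\MPX_t$, already bring the component count down to $\card{\C_t}$). By the crossing bound of \cref{clm:Pmpx-hierarchy} and the same telescoping as in the proof of \cref{clm:P-apx-size}, $\sum_t t\,\E\!\left[\card{\P^\MPX_t}-\card{\P_t}\right]\le O(L\log n)\cdot w(\MST(G))\le O(\alpha)\cdot w(\MST(G))$, using $\alpha\ge L\log n/\epsilon$. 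Since $r$ is chosen so that $2^r$ dominates $\alpha^3/\epsilon$, we have $2^{-\Omega(r)}\le\epsilon/\alpha^3$, so the error sum is at most $O(\epsilon/\alpha^2)\cdot w(\MST(G))\le\alpha^2\,w(\MST(G))$, which together with the first sum gives the claimed bound $3\alpha^2\,w(\MST(G))$.

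The step I expect to be most delicate is the structural claim that ``joining all incomplete components inside each cluster of $\P^\MPX_{\alpha t}$'' can only fuse clusters of $\P_t$ that leader compression left unfinished, so that simultaneously $\card{\Ph_{t/\alpha}}\le\card{\P_{t/\alpha}}$ and $\card{\P_t}-\card{\Ph_t}$ is controlled by the count of un-contracted components; this is precisely where the definition of ``incomplete'' must be used carefully. The other point requiring care is extracting the factor $(p_0(A)-1)$ rather than $p_0(A)$ from the leader-compression analysis: this is what makes the error term scale with $\card{\P^\MPX_t}-\card{\P_t}$ (an $O(\alpha)\cdot w(\MST(G))$ quantity after summing over levels) instead of with $\card{\P^\MPX_t}$ (which can be as large as $n$), and is exactly why the $2^{-\Omega(r)}$ suppression with $r=\Theta(\log\tfrac1\epsilon+\log\log n)$ suffices.
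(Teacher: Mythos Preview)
Your proposal is correct and follows essentially the same approach as the paper: you use the identical decomposition $\sum_t t(|\Ph_{t/\alpha}|-|\Ph_t|)\le\sum_t t(|\P_{t/\alpha}|-|\P_t|)+\sum_t t(|\P_t|-|\Ph_t|)$, the same structural observation that $\P_t$ refines $\Ph_t$ (complete components survive intact, incomplete ones are merged into one blob), and the same leader-compression shrinkage estimate $\E[N_A^{(r)}-1]\le(3/4)^r(N_A^{(0)}-1)$ to control the error term. The only minor difference is in the very last step: after reaching $\sum_t t\,\E[|\P^\MPX_t|-|\P_t|]$, the paper simply notes that $\P_{t/\alpha}$ refines $\P^\MPX_t$ (so $|\P^\MPX_t|\le|\P_{t/\alpha}|$) and directly reuses \cref{clm:P-apx-size}, whereas you detour through $|\P^\MPX_t|-|\C_t|$ and re-derive the MST edge-crossing bound from \cref{clm:Pmpx-hierarchy}; both routes are valid, but the paper's is one line shorter.
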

\begin{proof}
    Take a level $t = \alpha^k$.
    Let us compare $\Ph_t$ and $\P_t$ by examining step \ref{step:mst-leader-compression}.
    The leader compression starts with components $\P^\MPX_t$,
    call this $\P^{(0)}_t$.
    Then, after $r$ rounds of leader compression, a new set of components $\P^{(r)}_t$ is created.
    Finally, all the incomplete components of $\P^{(r)}_t$
    that are in the same set of $\P^\MPX_{\alpha t}$ are joined together to build $\Ph_t$.
    Observe, that any complete component after the leader compression is included in $\Ph_t$ unchanged. That is, a complete component is not merged with any other components.
    Also, since it is complete and does not have any outgoing edges of $G_{\leq t}[\P^\MPX_{\alpha t}]$,
    it also appears in $\P_t$ as is.
    All the incomplete components inside a set of $\P_{\alpha t}$
    are joined together by the algorithm.
    Whereas, in $\P_t$, only some of them are joined together,
    the ones that can be connected using the edges of $G_{\leq t}[\P^\MPX_{\alpha t}]$.
    Therefore, $\P_t$ is a refinement of $\Ph_t$, and we get:
    \begin{equation}
    \sum_t t \left(\card{\Ph_{t/\alpha}} - \card{\Ph_t}\right) \leq
    \sum_t t \left(\card{\P_{t}} - \card{\Ph_t}\right) +
    \sum_t t \left(\card{\P_{t/\alpha}} - \card{\P_t}\right),
    \label{eq:Ph-size}
    \end{equation}
    where the inequality holds because the terms involving $\P_t$ cancel out, and $\card{\Ph_{t/\alpha}}\leq\card{\P_{t/\alpha}}$. The second sum has already been bounded in \cref{clm:P-apx-size}.
    It remains to analyze the first.

    The difference in the number of components of $\P_t$ and $\Ph_t$ is only due to the arbitrary joining of the incomplete components at the end of step \ref{step:mst-leader-compression} since the partitions are the same otherwise.
    More precisely, $\card{\P_t} - \card{\Ph_t}$ is at most equal to the number of sets $A \in \P_t$ that are not completely discovered by leader compression. 
    We bound this number.
    
    Take a set $A \in \P_{t}$.
    Let $N_A^{(r)}$ be the number of sets of $\P^{(r)}_t$ inside $A$.
    We aim to bound the probability of $N_A^{(r)} \neq 1$, i.e.\ $A$ is not completely discovered.
    For any $r$ it holds that $$\E\left[N_A^{(r)} - 1\right]
    \leq \left(\frac{3}{4}\right)^r \left(N_A^{(0)} - 1\right).$$
    To see why, fix $N_A^{(r')}$ for any round $r'$.
    If $N_A^{(r')}$ is equal to $1$, then so is $N_A^{(r'+1)}$.
    Otherwise, during the next round, every component has at least one neighboring component that it can join, which happens with probability at least $\frac{1}{4}$.
    Therefore, conditioned on $N_A^{(r')} \neq 1$, the expectation of $N_A^{(r'+1)}$ is at most $\frac{3}{4}N_A^{(r')}$.
    Combining these two cases together, it always holds that
    $$
    \E\left[N_A^{(r' + 1)} - 1\right] \leq \frac{3}{4}\left(N_A^{(r')} - 1\right).
    $$
    Applying this inductively, gives $\E\left[N_A^{(r)} - 1\right]
    \leq \left(\frac{3}{4}\right)^r \left( N_A^{(0)} - 1\right)$.
    Also, note that since $N^{(r)}_A$ is a positive integral random variable, we have
    $$\Pr\left(N^{(r)}_A \neq 1\right) = \Pr\left(N^{(r)}_A > 1\right) \leq \E\left[N_A^{(r)} - 1\right].$$

    Putting all of this together, we get
    \begin{align}
        \E[\card{\P_t} - \card{\Ph_t}]
        &\leq \sum_{A \in \P_{t}} \Pr\left(N^{(r)}_A > 1\right) \nonumber\\
        &\leq \sum_{A \in \P_{t}} \E\left[N_A^{(r)} - 1\right] \nonumber\\
        &\leq \sum_{A \in \P_{t}} \left(\frac{3}{4}\right)^r\E\left[N_A^{(0)} - 1\right] \nonumber\\
        &= \left(\frac{3}{4}\right)^r \E\left[\card{\P^\MPX_t} - \card{\P_t}\right] \label{eq:1-1}\\
        &\leq \left(\frac{3}{4}\right)^r \E\left[\card{\P_{t/\alpha}} - \card{\P_t}\right] \label{eq:1-2}.
    \end{align}
    Here, \eqref{eq:1-1} follows from the fact that $N^{(0)}_A$ is the number of connected components of $\P^\MPX_t$ inside $A$,
    and \eqref{eq:1-2} holds because $\P_{t/\alpha}$ is a refinement of $\P^\MPX_t$. 
    
    Finally, we get
    \begin{equation}
    \E\left[\sum_t t \left(\card{\P_{t}} - \card{\Ph_t}\right)\right] 
    \leq \left(\frac{3}{4}\right)^r \E\left[\sum_t t \left(\card{\P_{t/\alpha}} - \card{\P_t}\right)\right] \leq \left(\frac{3}{4}\right)^r 2\alpha^2 \cdot w(\MST(G)), \label{eq:p-vs-ph}
    \end{equation}
    where the second inequality was proven in \cref{clm:P-apx-size}.
    Choosing a large enough constant for $r = \Theta\left(\log \frac{1}{\epsilon} + \log \log n\right)$, and plugging this back into \eqref{eq:Ph-size},
    gives the claim.
\end{proof}

\begin{claim} \label{clm:Ph-apx-mst}
    Let $\sPh$ be obtained as in \Cref{alg:main}.
    Then it holds that:
    $$
    \E[w(\MST(G_{\leq \alpha t} \mid \Ph_t))] \leq (1 + \epsilon) w(\MST(G)).
    $$
\end{claim}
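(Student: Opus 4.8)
The plan is to bootstrap from \cref{clm:P-apx-mst}. Set $T^{\P} := \MST(G_{\le t}\mid\P_t)$; by \cref{def:respecting-tree} it decomposes as $T^{\P}=\bigsqcup_t E^{\P}_t$ with $E^{\P}_t\subseteq G_{\le t}[\P_t]$ (weight at most $t$, not crossing $\P_t$) and $\P_t=\P_{t/\alpha}\oplus E^{\P}_t$, and by \cref{clm:P-apx-mst} applied with $\epsilon/2$ in place of $\epsilon$ we have $\E[w(T^{\P})]\le(1+\tfrac{\epsilon}{2})\,w(\MST(G))$. I would transform $T^{\P}$ into a spanning tree $T$ that respects $(G_{\le\alpha t}\mid\Ph_t)$ by correcting it level by level, paying only a small amount extra. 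Two structural facts are needed. First, every set of $\Ph_t$ has diameter at most $\alpha t$: by construction in \Cref{alg:main} every merge performed while forming $\Ph_t$ stays inside a single set of $\P^\MPX_{\alpha t}$ --- leader compression only uses edges of $G_{\le t}[\P^\MPX_{\alpha t}]$, which do not cross $\P^\MPX_{\alpha t}$, and the subsequent joining of incomplete components is done within each set of $\P^\MPX_{\alpha t}$ --- so $\Ph_t$ refines $\P^\MPX_{\alpha t}$ and \cref{clm:Pmpx-hierarchy} gives the bound; since $G$ is a complete metric graph, any two vertices in the same set of $\Ph_t$ are therefore joined by an edge of weight at most $\alpha t$. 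Second, $\P_t$ refines $\Ph_t$ at every level, which was already shown in the proof of \cref{clm:Ph-apx-size}; in particular $\Ph_{t/\alpha}$ is coarser than $\P_{t/\alpha}$, and an edge that does not cross $\P_t$ does not cross $\Ph_t$ either.

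The correction runs through the levels $1,\alpha,\alpha^2,\dots,W$ in order, maintaining the invariant that at the start of level $t$ the components of $T$ are exactly $\Ph_{t/\alpha}$. At level $t$: $(a)$ add to $T$ a maximal cycle-free subset $E'_t\subseteq E^{\P}_t$; since $\P_{t/\alpha}$ refines $\Ph_{t/\alpha}$, the partition $\P_t=\P_{t/\alpha}\oplus E^{\P}_t$ refines $\Ph_{t/\alpha}\oplus E^{\P}_t$, while $\Ph_{t/\alpha}$ refines $\Ph_t$ and $E^{\P}_t$ does not cross $\Ph_t$ (it does not cross $\P_t$, which refines $\Ph_t$), so $\Ph_{t/\alpha}\oplus E^{\P}_t$ refines $\Ph_t$; hence after $(a)$ the components of $T$ form a partition that refines $\Ph_t$ and is itself refined by $\P_t$. $(b)$ Within each set of $\Ph_t$, connect the remaining pieces of $T$ using ``extra'' edges of weight at most $\alpha t$ --- which exist by the first structural fact --- until every set of $\Ph_t$ is connected; by the previous sentence the component count of $T$ after $(a)$ lies between $|\Ph_t|$ and $|\P_t|$, so at most $|\P_t|-|\Ph_t|$ extra edges are used across the level, and afterwards the components of $T$ are exactly $\Ph_t$, restoring the invariant. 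The resulting $T$ is a spanning tree respecting $(G_{\le\alpha t}\mid\Ph_t)$, and since $E'_t\subseteq E^{\P}_t$ and weights are nonnegative,
\[
w(T)\ \le\ \sum_t w(E^{\P}_t)\ +\ \sum_t \alpha t\,\bigl(|\P_t|-|\Ph_t|\bigr)\ =\ w(T^{\P})\ +\ \alpha\sum_t t\,\bigl(|\P_t|-|\Ph_t|\bigr).
\]
Taking expectations and invoking \eqref{eq:p-vs-ph} from the proof of \cref{clm:Ph-apx-size}, the second term is at most $\alpha\cdot(3/4)^r\cdot 2\alpha^2\,w(\MST(G))$, which is at most $\tfrac{\epsilon}{2}\,w(\MST(G))$ once the constant hidden in $r=\Theta(\log\tfrac1\epsilon+\log\log n)$ is large enough --- precisely the condition that $2^r$ dominate $\alpha^3/\epsilon$. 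Combining the two bounds gives $\E[w(\MST(G_{\le\alpha t}\mid\Ph_t))]\le\E[w(T)]\le(1+\epsilon)\,w(\MST(G))$.

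The step I expect to need the most care is the level-by-level splice in $(a)$--$(b)$: one must verify that pushing the edges $E^{\P}_t$ --- which were chosen to merge the finer partition $\P_{t/\alpha}$ up to $\P_t$ --- through the coarser starting partition $\Ph_{t/\alpha}$ never over-merges past $\Ph_t$; that the number of ``extra'' edges at level $t$ is genuinely bounded by $|\P_t|-|\Ph_t|$ (this is exactly where ``$\P_t$ refines $\Ph_{t/\alpha}\oplus E^{\P}_t$'' is used); and that only merging edges are ever added, so that $T$ ends up a spanning tree with exactly $n-1$ edges rather than a graph with cycles. Once this bookkeeping is settled, the remaining steps --- invoking \cref{clm:P-apx-mst} and \eqref{eq:p-vs-ph}, and the arithmetic tying $r$ to $\alpha^3/\epsilon$ --- are routine.
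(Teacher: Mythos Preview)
Your proposal is correct and follows essentially the same approach as the paper: start from $\MST(G_{\le t}\mid\P_t)$, splice its level-$t$ edges into a tree respecting $(G_{\le\alpha t}\mid\Ph_t)$ level by level, patch the remaining gaps inside each set of $\Ph_t$ with edges of weight at most $\alpha t$, and bound the patching cost via \eqref{eq:p-vs-ph}. Your justification of the sandwich $|\Ph_t|\le|\Ph_{t/\alpha}\oplus E^{\P}_t|\le|\P_t|$ is in fact more careful than the paper's, which simply asserts the extra-edge count is ``exactly $|\P_t|-|\Ph_t|$'' (your ``at most'' is the right statement). One cosmetic point: you cannot literally ``apply \cref{clm:P-apx-mst} with $\epsilon/2$ in place of $\epsilon$,'' since $\epsilon$ there is the algorithm's fixed parameter tied to the choice of $\alpha$; the paper just concludes $(1+O(\epsilon))\,w(\MST(G))$ and absorbs the constant, which is what you should do as well.
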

\begin{proof}
    We construct a tree $T$ that respects $(G_{\leq \alpha t} \mid \Ph_t)$ as follows.
    Let $\Tt$ be the MST with respect to $(G_{\leq t} \mid \P_t)$,
    and let $\Tt = \bigcup_{t = \alpha^k} \Et_t$ such that $\P_t = \P_{t / \alpha} \oplus \Et_t$.
    Going through the levels one by one, for a level $t = \alpha^k$,
    we add an edge in $\Et_t$ to $T$ if it does not create any cycles.
    After processing the edges of $\Et_t$, some components in $\Ph_t$ might remain unconnected (recall the edges of $\Et_t$ only connect the sets of $\P_t$ which is a refinement of $\Ph_t$).
    To connect them, we use arbitrary edges. 
    The number of these edges is exactly equal to $\card{\P_t} - \card{\Ph_t}$, and they all have weight at most $\alpha t$ since $\Ph_t$ has diameter at most $\alpha t$ (since it is a refinement of $\P^\MPX_{\alpha t})$.

    To analyze the weight of $T$, take the two kinds of edges we have added to it.
    The edges from $\Tt$ that have total weight at most $w(\Tt)$,
    and the arbitrary edges used to connect the remaining components which have total weight at most $\sum_t \alpha t(\card{\P_t} - \card{\Ph_t})$. Therefore, we have
    \begin{align*}
        \E[w(\MST(G_{\leq \alpha t} \mid \Ph_t))]
        &\leq \E[w(T)] \\
        &\leq \E[w(\Tt)] + \E\left[\sum_t \alpha t(\card{\P_t} - \card{\Ph_t})\right] \\
        & \leq (1 + \epsilon)w(\MST(G)) + 
            \left(\frac{3}{4}\right)^r O(\alpha^3) w(\MST(G)) \tag{\cref{clm:P-apx-mst}, and inequality \eqref{eq:p-vs-ph}}\\
        &\leq (1 + O(\epsilon))w(\MST(G)). \tag*{\qedhere}  
    \end{align*}
\end{proof}

For a high-level proof of why $T = \cup_{k\geq 0} E_{\alpha_k}$ approximates the MST w.r.t.\ $(G_{\leq \alpha t} \mid \Ph_t$),
observe that if the modified \Boruvka's algorithm were executed till completion instead of stopping after $r = \Theta\left(\log\frac{1}{\epsilon} + \log \log n\right)$ rounds, then the exact MST w.r.t.\ $(G_{\leq \alpha t} \mid \Ph_t$) would have been recovered.
However, running the subroutine for $r$ rounds guarantees that at most a $\poly\left(\frac{\epsilon}{\log n}\right)$-fraction of the edges of the exact MST are \enquote{missed} because of the early termination. For the arbitrary edges added in their place, we incur a multiplicative error of $\poly(\alpha)$ which is dominated by $2^{\Omega(r)}\epsilon$. 
Therefore, overall we get a $(1 + \epsilon)$-approximation of the MST. Below is the formal proof.

\begin{proof}[Proof of \cref{lem:approx}]
    Let $\Th = \MST(G_{\leq \alpha t}, \mid \Ph_t)$. Recall that this is unique because of the tie-breaking based on IDs.
    Let $\Th = \bigcup_k \Eh_{\alpha^k}$ such that $\Ph_t = \Ph_{t/\alpha} \oplus \Eh_t$, for all $t = \alpha^k$. To prove the claim, we fix a level $t = \alpha^k$, and compare the weight $\Eh_t$ with the weight of $E_t$.
    
    Let $H^{(t)}$ be obtained from $G$ by contracting the vertices in the same set of $\Ph_{t/\alpha}$ (this is what the modified \Boruvka's algorithm starts with).
    Fix a set of vertices $A \in \Ph_{ t}$.
    Since $\sPh$ is a hierarchy, $A$ also corresponds to a set of vertices in $H^{(t)}$.
    Let $E_t[A]$ (resp.\ $\Eh_t[A]$) be the edges of $E_t$ (resp.\ $\Eh_t$) that are inside $A$.
    Similarly, let $H[A]$ be the induced subgraph of $H$ including the vertices in $A$.

    The edges of $E_t^1$ found in step \ref{step:mst-boruvka} of \Cref{alg:main}
    by the modified \Boruvka's algorithm are all in $\Eh_t$.
    Take any set of vertices $A \in \Ph_{ t}$.
    Both $E_t[A]$ and $\Eh_t[A]$ are spanning trees of $H[A]$.
    Observe that for an edge to be added by the modified \Boruvka's algorithm,
    it must have at some point been the minimum edge $e_C$ between a component $C$ and $H[A] \setminus C$.
    Assume for the sake of contradiction that $e_C$
    is not in $\Eh_t$.
    In that case, we can replace an edge in $\Eh_t$ with $e_C$ to create a spanning tree respecting $(G_{\leq \alpha t} \mid \Ph_t)$ that is smaller than $\Th$, which is a contradiction.
    More precisely,
    as $\Eh_t$ is a spanning tree on $H[A]$,
    it contains a path $P$ from one endpoint of $e_C$ to the other.
    Therefore, $P$ must cross between $C$ and $H[A] \setminus C$ using some edge $e'$.
    Then, $\Th \setminus \{e_C\} \cup \{e'\}$
    is a smaller spanning tree w.r.t.\ $(G_{\leq \alpha t} \mid \Ph_t)$,
    which contradicts the fact $\Th$ is the MST.

    Now, we analyze the edges $E_t^2$ found in step \ref{step:mst-join-arbitrarily}, i.e.\ the edges that join the remaining components inside a set of $\Ph_t$ that were not joined by step \ref{step:mst-boruvka}.
    Let $N^{(r')}_A$ be the number of surviving components after $r'$ rounds of the modified \Boruvka{} algorithm, where $0 \leq r' \leq r$.
    We say a component has survived a round of the modified \Boruvka{} algorithm if it does not join any other component.
    $N^{(0)}_A$ is simply the number of vertices in $H[A]$,
    and $N^{(r)}_A$ components survive in $A$ in the end.
    Using the same analysis as the one for the number of surviving connected components of leader compression (see the proof of \cref{clm:Ph-apx-size}),
    we can show that
    $$
    \E[N^{(r)}_A - 1] \leq 2^{-\Omega(r)} \big(\card{V(H[A])} - 1\big).
    $$

    The number of edges of $E_t^2$ added inside $A$,
    is $N^{(r)}_A - 1$,
    and each of them has weight at most $\alpha t$,
    since $\Ph_t$ has diameter at most $\alpha t$.
    Therefore, we can bound the total weight of the edges in $E_t^2$ as follows
    \begin{align}
        E[w(E_t^2)] &\leq
        \sum_{A \in \Ph_t} 2^{-\Omega(r)}(\card{V(H[A])} - 1) \cdot \alpha t \nonumber \\
        &=  2^{-\Omega(r)}\alpha t \sum_{A \in \Ph_t} (\card{V(H[A])} - 1) \nonumber \\
        &=  2^{-\Omega(r)}\alpha t \cdot \left(\card{\Ph_{t/\alpha}} - \card{\Ph_t}\right), \nonumber
    \end{align}
    where the last equality follows from $\sum_{A \in |\Ph_t|} 1 = |\Ph_t|$, and $\sum_{A \in |\Ph_t|} |V(H[A])| = |\Ph_{t/\alpha}|$.
    
    Combining this with the fact that the edges of $E_t^1$ have total weight at most $w(\Eh_t)$ and summing over $t$, we get
    \begin{align*}
    w(T) &= \sum_{t} w(E_t^1) + w(E_t^2) \\
    &\leq w(\Th) +  2^{-\Omega(r)} \sum_{t} \alpha t \cdot \left(\card{\Ph_{t/\alpha}} - \card{\Ph_t}\right)\\
    &\leq w(\Th) + 2^{-\Omega(r)} O(\alpha^3) w(\MST(G)) \\
    &\leq (1 + O(\epsilon)) \MST(G),
    \end{align*}
    where the next to last inequality is by \cref{clm:Ph-apx-size},
    and the last inequality by \cref{clm:Ph-apx-mst} and the choice of $r$.
    This concludes the proof.
\end{proof}

\subsection{Runtime Analysis and Implementation Details}

We prove the following:

\begin{lemma} \label{lem:implementation}
    \Cref{alg:main} can be implemented in $O\left(\log \frac{1}{\epsilon} + \log \log n\right)$ rounds of the MPC model, with $O(n^\delta)$ space per machine and $\Ot(n^2)$ total space, where $0 < \delta < 1$.
\end{lemma}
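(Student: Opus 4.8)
The plan is to bound the round complexity and space of each of the five stages of \Cref{alg:main}: building the MPX decompositions $\{\P^0_t\}$, intersecting them into the hierarchy $\sP^\MPX$, running leader compression to obtain $\sPh$, running the modified \Boruvka{} subroutine (\Cref{alg:boruvka}) on every level, and taking the union $T=\bigcup_k E_{\alpha^k}$. All $L=\log_\alpha W=O(\log n)$ levels $t=\alpha^k$ are processed concurrently, and the two iterative stages are each run for $r=\Theta(\log\frac1\epsilon+\log\log n)$ rounds; I will show that every non-iterative stage, as well as every single iteration of the two iterative stages, costs $O(1/\delta)$ MPC rounds, so the total is $O(r/\delta)=O(\log\frac1\epsilon+\log\log n)$ for constant $\delta$. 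Throughout I use the standard fact that sorting, prefix/segmented scans, and joins on a list of length $\ell$ take $O(\log_{n^\delta}\ell)=O(1/\delta)$ MPC rounds with $n^\delta$ space per machine, together with the fact that all working lists have length $\Ot(n^2)$. The delicate point --- and the part I expect to require the most care in the write-up --- is that the per-round cost of leader compression and modified \Boruvka{} must not grow with the round index as components merge and grow; this is handled by representing a component only by a flat, sorted $(\text{vertex},\text{leader})$-list and never as an explicit bounded-degree tree.

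For the MPX stage, because $G$ is a complete metric graph the shifted shortest-path tree of \citet{MillerPX13} rooted at the auxiliary source has depth at most two, so for each level $t$ the center $c(u)=\argmin_v\big(w(u,v)-\delta_v\big)$ is a single aggregation over row $u$ of the distance matrix; broadcasting the $n$ delays and performing these aggregations takes $O(1/\delta)$ rounds, with all $L$ levels run in parallel. To intersect the decompositions into $\sP^\MPX$, assign to vertex $u$ at level $k$ the rank of the tuple $(c_k(u),\dots,c_L(u))$ among all distinct level-$k$ tuples; since $u$ and $v$ have equal tuples exactly when $c_i(u)=c_i(v)$ for all $i\geq k$, this realizes the intersection, and the $nL$ tuples (total size $\Ot(n)$) can be sorted once to produce all the ranks in $O(1/\delta)$ rounds. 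We keep each partition as a $(\text{vertex},\text{cluster-id})$-list sorted by cluster-id, and we never materialize the edge sets $G_{\leq t}[\P^\MPX_{\alpha t}]$ or $G_{\leq\alpha t}[\Ph_t]$: an edge $(u,v)$ counts as present iff $w(u,v)$ is in the relevant range and $u,v$ carry equal cluster-ids, which each edge can test in $O(1/\delta)$ rounds by a sorting-based join against the cluster-id list.

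A single round of leader compression --- and, \emph{mutatis mutandis}, of modified \Boruvka{} --- is implemented as follows: each component's leader flips its coin, which reaches all vertices of the component by a segmented broadcast over the sorted $(\text{vertex},\text{leader})$-list; each vertex learns its neighbors' leaders by a join, selects its best admissible outgoing edge, and these are aggregated to the component leader; a component that should merge learns the target leader's id (and coin) along its chosen edge and re-points all its vertices by a segmented scan over its contiguous block followed by a re-sort. Each of these is a constant number of sorts, joins, and scans on lists of length $\Ot(n^2)$, hence $O(1/\delta)$ rounds, and this cost is independent of the round index, so $r$ rounds of either subroutine cost $O(r/\delta)$. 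The two cleanup steps --- joining the incomplete components within each cluster of $\P^\MPX_{\alpha t}$ at the end of leader compression, and \Cref{step:mst-join-arbitrarily}'s arbitrary connection of the surviving components within each cluster of $\Ph_t$ --- are each a single grouping: group surviving components by their enclosing cluster, elect the minimum-id component per group, and attach every other one to it; this is valid because the enclosing cluster has weighted diameter at most $\alpha t$ (\cref{clm:Pmpx-hierarchy}), so any attaching edge lies in $G_{\leq\alpha t}$, and it costs $O(1/\delta)$ rounds. Finally $T=\bigcup_k E_{\alpha^k}$ is a union of already-computed edge sets. For space, the only objects larger than $\Ot(n)$ are the distance matrix and the $O(L)$ concurrently-maintained edge lists, each with $O(n^2)$ edges, totalling $O(n^2\log n)=\Ot(n^2)$ words, which sorting preserves up to constant factors; with $S=n^\delta$ this uses $M=\Ot(n^{2-\delta})=\poly(n)$ machines, completing the proof.
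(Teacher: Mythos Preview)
Your proposal is correct and follows essentially the same approach as the paper: both break the implementation into the MPX, intersection, leader-compression, and \Boruvka{} stages, argue that each non-iterative stage and each single iteration of the iterative stages costs $O(1/\delta)$ rounds via sorting and broadcast/aggregation primitives, and run all $L=O(\log n)$ levels in parallel at the cost of an $O(\log n)$ blowup in total space. The only cosmetic differences are that the paper spells out explicit $(1/\delta)$-ary trees of machines per component where you invoke segmented scans over a sorted $(\text{vertex},\text{leader})$-list, and that your description of the leader-compression cleanup (``group surviving components \ldots\ and attach every other one'') should, strictly speaking, first filter to the \emph{incomplete} components---a one-aggregation step you clearly intend, since you name it correctly a line earlier.
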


We assume that the metric is given to us as a $n \times n$ matrix, where the minimum entry is at least $1$, the maximum entry is at most $W = \poly(n)$, and any weight fits in one word. We use two tools repeatedly in our implementation:

\begin{enumerate}
    \item \textbf{Sorting \cite{goodrich2011sorting}:}
    A sequence of $N$ elements can be sorted in $O\left(\frac{1}{\delta}\right)$ rounds using $O(N^\delta)$ space per machine and $O(N)$ total space.
    \item \textbf{\boldmath$\left(\frac{1}{\delta}\right)$-ary Trees:}
    A message of size $\Ot(1)$ that is stored in one machine can be communicated to $N^c$ machines in $O\left(\frac{1}{\delta}\right)$ rounds, where each machine has local memory $O(N^\delta)$.
\end{enumerate}
The latter can be used to compute simple operations on a large number of elements. For example, a mapping or filtering function $f$ that can be conveyed using $\Ot(1)$ words, can be computed on $O(N)$ elements using $O(N^{1 - \delta})$ machines and $O\left(\frac{1}{\delta}\right)$ as follows. First, the elements are stored in $O(N^{1-\delta})$ machines. Then, $f$ is communicated to them using a $\left(\frac{1}{\delta}\right)$-ary tree.

\begin{claim}
    The hierarchy $\sP^\MPX$ can be constructed in $O\left(\frac{1}{\delta}\right)$ rounds in the MPC model.
\end{claim}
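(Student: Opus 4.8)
The plan is to reduce the construction of $\sP^\MPX$ entirely to a constant number of calls to the sorting primitive together with segmented aggregation (group-wise $\argmin$ and group-wise broadcast of a canonical representative), each of which costs $O(1/\delta)$ rounds. The enabling observation is that an MPX decomposition of a \emph{metric} needs no iterative shortest-path computation at all: adding the auxiliary source and reading off its shortest-path tree just amounts to setting $c(u) = \argmin_v\bigl(w(u,v) - \delta_v\bigr)$ with the convention $w(u,u)=0$, i.e.\ a single $\argmin$ over the $u$-th row of the distance matrix (triangle inequality rules out useful paths of more than two hops). So nothing is inherently sequential; however, the $L = O(\log n)$ levels cannot be handled one at a time (that would cost $L \cdot O(1/\delta)$ rounds), so they must be fused into a single pass, and — the one point that needs genuine thought — the nested \emph{intersections} $\P^\MPX_{\alpha^k}$ must be represented compactly enough that computing them does not re-introduce a $\log n$ factor. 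I expect this last issue to be the main (and essentially only) obstacle; it is resolved by representing each $\P^\MPX_{\alpha^k}$ via the suffix $\bigl(c^{(k)}(u),\dots,c^{(L)}(u)\bigr)$ of the tuple of MPX centers of $u$, so that the nesting is automatic and identifying the partition is just a sort plus a group-labeling.

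Concretely I would proceed as follows. (i) For every level $t = \alpha^k$ and vertex $v$, sample an independent delay $\delta^{(k)}_v$ from the exponential distribution of mean $\Theta(t/\log n)$, discretized and truncated to $\Ot(1)$ words (which, by standard arguments, does not affect the guarantees of \cref{prp:mpx}); this uses only private randomness and, after an initial sort that assigns each vertex a home machine, is a local step. (ii) View the input as the list of triples $(u,v,w(u,v))$; sorting by the second coordinate and merging with the (sorted) list of delay records annotates each triple with the whole tuple $\bigl(\delta^{(0)}_v,\dots,\delta^{(L)}_v\bigr)$. This adds only $\Ot(1)$ words per triple, so the total space stays $\Ot(n^2)$, and it takes $O(1/\delta)$ rounds. (iii) Re-sort the annotated triples by the first coordinate; for fixed $u$ its $n$ triples are now contiguous, so a segmented reduction with a coordinate-wise ``$\argmin$'' operator computes $c^{(k)}(u) = \argmin_v\bigl(w(u,v)-\delta^{(k)}_v\bigr)$ for all $k$ simultaneously in $O(1/\delta)$ rounds. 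After this each vertex $u$ carries its center tuple $\bigl(c^{(0)}(u),\dots,c^{(L)}(u)\bigr)$, of size $\Ot(1)$ words.

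Finally I would turn the center tuples into the nested partitions. By the definition of $\sP^\MPX$, two vertices lie in the same set of $\P^\MPX_{\alpha^k}$ iff $c^{(i)}(u)=c^{(i)}(v)$ for all $i\ge k$, i.e.\ iff the suffixes $(c^{(k)}(\cdot),\dots,c^{(L)}(\cdot))$ of their center tuples coincide. So for each level $k$ emit the record $\bigl(k,(c^{(k)}(u),\dots,c^{(L)}(u)),\mathrm{id}(u)\bigr)$ for every $u$ — there are $O(nL)=\Ot(n)$ of them, each $\Ot(1)$ words — sort them, and within each $\bigl(k,\text{suffix}\bigr)$-group broadcast the minimum vertex id as the cluster identifier of every vertex in the group; this is one more sort and one more segmented broadcast, $O(1/\delta)$ rounds. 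The resulting maps (one per level $k$) from vertices to cluster identifiers \emph{are} $\sP^\MPX$, and by \cref{clm:Pmpx-hierarchy} they form a hierarchy with the stated diameter and crossing-probability guarantees. The only remaining thing to check is the routine space bookkeeping: every annotation is $\Ot(1)$ words per matrix entry or per vertex, so nothing exceeds $\Ot(n^2)$ total or $O(n^\delta)$ per machine (invoking the sorting primitive with parameter $\delta/2$ to absorb the fact that we sort $\Theta(n^2)$ items), and the total round count is $O(1/\delta)$.
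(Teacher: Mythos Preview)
Your proposal is correct and takes essentially the same approach as the paper: both exploit that in a metric the MPX center of $u$ is simply $\argmin_v\bigl(w(u,v)-\delta_v\bigr)$ (a single row-wise reduction), handle all $L=\Ot(1)$ levels in parallel via sorting, and represent $\P^\MPX_{\alpha^k}(u)$ by the suffix tuple $\bigl(c^{(k)}(u),\dots,c^{(L)}(u)\bigr)$ of centers, which is then compressed to a single-word label by one more sort. The only cosmetic difference is that the paper replicates the edge list once per level whereas you annotate each edge with the full $\Ot(1)$-word delay tuple and fuse all levels into one pass; both stay within $\Ot(n^2)$ total space and $O(1/\delta)$ rounds.
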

\begin{proof}
    First, we need to construct $\P^0$.
    The data is replicated $L$ times, one copy per level.
    Then, for each level $t = \alpha^k$, the partition $\P^0_t$ is built independently.
    A set of $O(n^{1-\delta})$ machines $M_u$ are created for each vertex $u$.
    Using sorting the edges $(u, v)$ for all other $v$ are transported to $M_u$.
    Then, a predesignated machine in $M_u$ draws the delay $\delta_u$ and communicates it to the rest of $M_u$.
    At this point, $M_u$ creates a tuple $\big(v,\ u,\ w(u, v) - \delta_u\big)$
    for each vertex $v$.
    This tuple represents $(\textnormal{vertex},\ \textnormal{potential center},\  \textnormal{shifted distance})$.
    Finally, using sorting, $M_u$ collects all the tuples $\big(u,\ v,\ w(u, v) - \delta_v\big)$ and identifies the vertex $v$ with the smallest shifted distance as its center.
    The component IDs in $\P^0_t$ are simply the vertex ID of the center for that component.

    To create $\sP^\MPX$, we need to intersect the partitions in $\sP^0$.
    Recall two vertices are in the same set of $\P^\MPX_{\alpha^k}$ if they are in the same set of $\P^0_{\alpha^i}$ for all $i \geq k$. That is, the component ID of a vertex $u$ in $\P^\MPX_{\alpha^k}$ can be thought of as a tuple $\P^\MPX(u) = \left(\P^0_{\alpha^k}(u),\ \P^0_{\alpha^{k + 1}}(u),\ \ldots,\ \P^0_W(u)\right)$,
    where $\P^0_t(u)$ is the component ID of $u$ in $\P^0_t$ that was computed in the previous step.

    Therefore, to compute $\P^\MPX$, we can simply create one machine $m_u$ for each vertex $u$.
    Then, $m_u$ collects the component IDs $\P^0_t(u)$ for all $t = \alpha^k$,
    and creates all the component IDs $\P^\MPX_t(u)$ as described above.
    After these IDs are created, we can use sorting to index the tuples on each level, and use one word for each component ID of $\P^\MPX_t$ rather than a tuple.
    For the purposes of the next steps of the algorithm, for each vertex $u$ and level $t$, instead of simply storing its component ID in $\P^\MPX_t$, we store the two-word tuple $\big(\P^\MPX_t(u),\ \P^\MPX_{\alpha t}(u)\big)$.
\end{proof}

\begin{claim}
    The hierarchy $\sPh$ can be computed in $O\left(\frac{1}{\delta} \left(\log \frac{1}{\epsilon} + \log \log n\right)\right)$ rounds of MPC.
\end{claim}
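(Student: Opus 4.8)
The plan is to reduce the construction of $\sPh$ to the two $O(1/\delta)$-round primitives already available (sorting and $(1/\delta)$-ary-tree broadcast): I will show that a single round of leader compression, and likewise the concluding step that joins incomplete components, can each be implemented in $O(1/\delta)$ MPC rounds, and then simply iterate leader compression $r = \Theta(\log\frac1\epsilon + \log\log n)$ times. All $L = \log_\alpha W = O(\log n)$ levels $t=\alpha^k$ are processed in parallel by replicating the $\Theta(n^2)$ edges once per level, which keeps total space at $\Ot(n^2)$ and, since the levels run independently, leaves the round count unchanged. Combined with the previous claim (which builds $\sP^\MPX$ in $O(1/\delta)$ rounds and leaves each vertex $u$ holding the two-word label $(\P^\MPX_t(u),\ \P^\MPX_{\alpha t}(u))$ at level $t$), this gives the stated bound.

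Fix a level $t$. Each vertex $u$ carries a current component ID $c(u)$, initialized to $\P^\MPX_t(u)$. By sorting the edge list twice --- once by each endpoint --- I attach to every edge record $(u,v,w(u,v))$ the $\P^\MPX$-labels of both endpoints, after which a machine decides locally whether the edge lies in $G_{\leq t}[\P^\MPX_{\alpha t}]$, i.e.\ whether $w(u,v)\le t$ and $\P^\MPX_{\alpha t}(u)=\P^\MPX_{\alpha t}(v)$, and all other edges are discarded. One leader-compression round is then: (i) assign each component $C$ a fair coin $x_C$, by sorting the surviving edges by endpoint component ID so that a designated machine per component draws and stores $x_C$, then joining this ``coin list'' back onto each edge record for both endpoints; (ii) keep only edges from a tail component ($x_C=0$) to a leader component ($x_{C'}=1$), oriented tail-to-leader; (iii) sort these by the tail's component ID so each tail component selects one incident leader (say of minimum ID) and records its new ID; (iv) sort vertices by $c(u)$ and join with this relabel list to update all IDs. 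The crucial structural point is that the merge graph of a leader-compression round is a disjoint union of stars --- leaders never move, each tail moves at most once --- so step (iv) is a single relabeling pass, and no pointer jumping or connectivity subroutine is needed. Each step is $O(1)$ sorts/broadcasts, so a round costs $O(1/\delta)$, and $r$ rounds cost $O\!\left(\tfrac1\delta\!\left(\log\tfrac1\epsilon + \log\log n\right)\right)$.

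To finish level $t$: a component is incomplete iff it still has an outgoing edge of $G_{\leq t}[\P^\MPX_{\alpha t}]$, which is one group-by-component aggregation over the (current) filtered edge list; then, within each set $B\in\P^\MPX_{\alpha t}$, relabel all incomplete components of $B$ to a common representative, obtained by sorting the incomplete components by their $\P^\MPX_{\alpha t}$-label and taking the minimum ID per group. Both steps are $O(1/\delta)$ rounds, and the result is exactly $\Ph_t$ with each vertex holding $\Ph_t(u)$ (one more sort also records $\Ph_{\alpha t}(u)$, which the \Boruvka{} phase needs). I expect the main obstacle to be precisely the accounting that keeps a leader-compression round at $O(1/\delta)$ rather than $\operatorname{poly}(\log\log n)$ rounds: one must use the depth-one-star structure to update component IDs in a single pass, must verify that every ``group and aggregate'' operation here --- attaching endpoint labels, picking a leader per tail component, detecting incomplete components, choosing a representative per $\P^\MPX_{\alpha t}$-set --- really is a constant number of invocations of the $O(1/\delta)$-round sorting primitive, and must check that the $L$-fold replication across levels together with the $O(1)$ transient sorted copies per round keeps total space $\Ot(n^2)$.
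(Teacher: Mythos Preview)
Your proposal is correct and takes essentially the same approach as the paper: process all levels in parallel via $L$-fold replication, implement each leader-compression round in $O(1/\delta)$ MPC rounds using the star structure of the merge graph, and finish with an $O(1/\delta)$-round pass to identify and merge incomplete components within each $\P^\MPX_{\alpha t}$-set. The only difference is presentational --- the paper organizes the same computation around explicit per-vertex and per-component machine groups connected by $(1/\delta)$-ary broadcast trees, whereas you phrase everything as sort/join/aggregate primitives --- but the underlying algorithm and round accounting are identical.
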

\begin{proof}
    For each level $t$, $\Ph_t$ is computed independently in parallel. A copy of all the weights is obtained for each level, and $\P^\MPX_t$ from the previous step is used.
    
    First, we show that each round of leader compression can be implemented in $O\left(\frac{1}{\delta}\right)$ rounds.
    Let the partition $\P^{(r)}_t$ be the result of $r$ rounds of leader compression.
    $\P^{(0)}_t = \P^\MPX_t$ is computed by the previous step.
    Given $\P^{(r)}_t$, we show how to compute $\P^{(r+1)}_t$.

    We designate a set of $O(n^{1-\delta})$ machines $M_u$ for each vertex $u$,
    and let a predesignated machine $m_u$ in $M_u$ be in charge of the whole group.
    All these machines are given the tuple $\P^\MPX_t(u)$, $\P^\MPX_{\alpha t}(u)$, and $\P^{(r)}_t(u)$.
    Using sorting, we collect all the edges of $u$ in $M_u$. Then, by communicating over the edges,
    for each edge, we attach to it the three component IDs in $\P^\MPX_t$, $\P^\MPX_{\alpha t}$, and $\P^{(r)}_t$ for both endpoints.
    Also, for each component in $C \in \P^{(r)}_t$, we create a group of machines $M_C$.
    These machines form a $O\left(\frac{1}{\delta}\right)$-ary tree where the leaves are the machines $m_u$ in charge of the vertices $u \in C$. 
    Also, one machine $m_C$ in $M_C$ is designated in charge of $C$.

    To begin the round, the machine $m_C$ draws a fair coin $x_C$, for each component $C$.
    Then, $x_C$ is sent down the tree in $M_C$ to reach all the machines $m_u$ for $u \in C$.
    From there, $x_C$ is sent to all the machines in $M_u$.
    Now, each machine in $M_u$ can communicate $x_C$ over its edges, and check whether it holds an edge in $G_{\leq t}[\P^\MPX_{\alpha t}]$ that exits $C$. That is, for an edge $e = (u, v)$, it must check:
    \begin{enumerate}
        \item it has weight at most $t$, i.e.\ $w(u, v) \leq t$,
        \item the endpoints are in different sets of $\P^{(r)}_t$, i.e.\ $\P^{(r)}_t(u) \neq \P^{(r)}_t(v)$,
        \item the edge does not cross $\P^\MPX_{\alpha t}$, i.e.\ $\P^\MPX_{\alpha t}(u) = \P^\MPX_{\alpha t}(v)$,
        \item the coin flips are in order, i.e.\ $x_{\P^{(r)}_t(u)} = 0$, and $x_{\P^{(r)}_t(v)} = 1$.
    \end{enumerate}
    If all these conditions are met, then $C$ can potentially join the component $C'$ on the other endpoint, and $e$ is sent up the tree to $m_C$.

    Possessing an edge $e_C$, the component $C$, joins the component at the other end. To do so, $m_C$ sends the ID down the tree to the machines $m_u$ for all $u \in C$, where $m_u$ creates $\P^{(r+1)}_t(u)$. Afterwards, the trees are dismantled, and the data is sorted to prepare for the next round. This concludes the description of a round of leader compression.

    To do the arbitrary joining after leader compression is finished, we need to identify the incomplete components. To do so, we employ a similar strategy to a round of leader compression. Except, when a component $C$ finds an edge $e_C$ that it can join over, instead of joining the other side, it simply marks itself as incomplete.
    Afterwards,
    for each vertex $u$ in an incomplete component, a tuple $\left(u, \P^{(r)}_t(u), \P^\MPX_{\alpha t}(u)\right)$ is created.
    The tuples are sorted by the last entry, and a tree is created over each group of tuples with the same component $A \in \P^\MPX_{\alpha t}$.
    Then, an arbitrary incomplete component of $\P^{(r)}_t$ inside $A$ is chosen and all the other incomplete components join it. 
\end{proof}

\begin{claim}
    The tree $T$ in \Cref{alg:main} can be computed from $\sPh$ in $O\left(\frac{1}{\delta}\left(\log\frac{1}{\epsilon} + \log\log n\right)\right)$ rounds in the MPC model.
\end{claim}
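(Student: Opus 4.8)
The plan is to observe that, once the hierarchy $\sPh$ of the previous claim is in place --- so that every vertex $u$ carries one-word labels $\Ph_{t/\alpha}(u)$ and $\Ph_t(u)$ for each level $t=\alpha^k$ --- the only remaining computation is the for-loop of \Cref{alg:main}, which is run independently and in parallel for all $L=\log_\alpha W=O(\log n)$ levels. I would show that a single level costs $O\!\left(\tfrac1\delta\left(\log\tfrac1\epsilon+\log\log n\right)\right)$ rounds; replicating the $n^2$ weights $L$ times (one copy per level) then keeps the total space $\Ot(n^2)$ and the machine count $O(L\cdot n^{2-\delta})=\poly(n)$.

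For a fixed level $t$, the dominant cost is the $r=\Theta\!\left(\log\tfrac1\epsilon+\log\log n\right)$ rounds of the modified \Boruvka{} algorithm (\Cref{alg:boruvka}) of \cref{step:mst-boruvka}, run with initial components $\Ph_{t/\alpha}$ over the edges $G_{\leq\alpha t}[\Ph_t]$, and the key step is to implement one such round in $O(1/\delta)$ MPC rounds. I would maintain, for every current \Boruvka{} component $C$, a group $M_C$ of machines holding all edge records incident to $C$, organized as a $(1/\delta)$-ary tree with a designated root $m_C$ and the per-vertex machine groups as leaves. Using the sorting primitive, each edge record $(u,v)$ is collocated with $w(u,v)$, with $\Ph_t(u),\Ph_t(v)$, and with the current \Boruvka{}-component labels of $u$ and $v$, so that a machine can locally decide whether the edge lies in $G_{\leq\alpha t}[\Ph_t]$ (i.e.\ $w(u,v)\le\alpha t$ and $\Ph_t(u)=\Ph_t(v)$) and leaves its component. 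A single minimum-aggregation up the tree of $M_C$ then delivers the cheapest outgoing edge $e_C$ to $m_C$ (ties broken by edge ID); $m_C$ flips the fair coin $x_C$; the (at most two) roots incident to each $e_C$ exchange coins and chosen edges, check whether $x_C=0$ and $x_{C'}=1$, and, when this holds, append $e_C$ to $E^1_t$ and broadcast the new component label down the trees so that every vertex of a merging component relabels itself; finally all edge records are re-sorted by the updated labels and the trees are rebuilt for the next round. Since each substep is a constant number of sorts and $(1/\delta)$-ary-tree broadcasts/aggregations, a round is $O(1/\delta)$ and the $r$ rounds are $O\!\left(\tfrac1\delta\left(\log\tfrac1\epsilon+\log\log n\right)\right)$.

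The remaining two pieces are short. For \cref{step:mst-join-arbitrarily} I would reuse, essentially verbatim, the arbitrary-joining step from the end of the leader-compression claim: emit for each surviving \Boruvka{} component a tuple $(u,\,C(u),\,\Ph_t(u))$ for one representative vertex $u$, sort by the $\Ph_t$-key, build a $(1/\delta)$-ary tree over each group sharing a $\Ph_t$-label, choose one component $C^*$ per group, and add to $E^2_t$ one edge from every other component to $C^*$ --- such an edge has weight at most $\alpha t$ because $\Ph_t$ refines $\P^\MPX_{\alpha t}$ and hence has diameter at most $\alpha t$ in the complete metric graph; this is $O(1/\delta)$ rounds. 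Finally, $T=\bigcup_k E_{\alpha^k}$ is a spanning tree, so the sets $E_{\alpha^k}$ contain only $n-1$ edges in total and one more sort collects them across the parallel levels in $O(1/\delta)$ rounds. Combining this with the two preceding claims yields \cref{lem:implementation}, and together with \cref{lem:approx} it proves \cref{thm:main}.

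I expect the main obstacle to be the bookkeeping forced by the size of the \Boruvka{} components: a single component can span $\Theta(n)$ vertices and be incident to $\Theta(n)$ edge records, far more than the $n^\delta$ words of one machine, so it must be spread over a balanced $(1/\delta)$-ary tree and --- crucially --- this tree must be rebuilt after every merge by re-sorting all $\Theta(n^2)$ edge records against the updated component labels. One must check that the per-machine message budget $n^\delta$ is respected at every level of these aggregation/broadcast trees (the standard guarantee of the $(1/\delta)$-ary-tree primitive), that the total machine count stays $\poly(n)$ after the $L$-fold replication across levels, and that the hand-offs between the MPX-hierarchy phase, the leader-compression phase, and the modified-\Boruvka{} phase preserve the invariant on which labels each machine currently holds.
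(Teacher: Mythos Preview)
Your proposal is correct and takes essentially the same approach as the paper: the paper's own proof is a one-liner observing that the implementation is ``essentially the same as that of leader compression,'' with the only differences being the edge set $G_{\leq\alpha t}[\Ph_t]$ and choosing the minimum-weight outgoing edge rather than an arbitrary one. Your write-up simply unpacks this reference in detail, using the same $(1/\delta)$-ary trees, sorting, and per-level parallelism as the previous claim.
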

\begin{proof}
    The implementation is essentially the same as that of leader compression. The main difference is that the valid edges to join over are $G_{\leq \alpha t}[\Ph_t]$, and instead of an arbitrary edge, the one with minimum weight is chosen.
\end{proof}
\section{A Lower Bound for Metric MST} \label{sec:lb}

In this section, we prove \cref{thm:lb}, restated below.
\thmlb*

First, we make some definitions.
In a graph $G$, the $r$-neighborhood of a vertex is the set of vertices that are at most $r$ edges away from it.
The $(1, 2)$-metric obtained from $G$, is defined as a metric where two vertices have distance $1$ if and only if there is an edge between them in $G$.
For the purposes of this section, we use \enquote{distance} to refer to their hop-distance in $G$. To refer to the distances in the metric, we explicitly mention it.

We also use a well-known operation in MPC algorithms, sometimes referred to as doubling:
\begin{definition}
    Given a graph $G$, a round of \emph{doubling} adds all the 2-hop edges to the graph.
    That is, the graph $G'$ obtained from a round of doubling on $G$, includes an edge $(u, v)$ if and only if $u$ and $v$ have distance at most $2$ in $G$.
\end{definition}

Given $\poly(n)$ total space, a round of doubling can be easily implemented in $O(1)$ rounds of the MPC model.

\thmlb*

\begin{proof}
    To prove the theorem, we show that the existence of an algorithm for $(1 + \epsilon)$-approximation of the MST in $(1, 2)$-metrics that runs in $R$ rounds, implies an algorithm that solves the \onetwocycle{} problem in $O(R \cdot \log_{1/\epsilon} n)$ rounds.
    Observe that if $R = o\left(\log \frac{1}{\epsilon}\right)$, this implies an $o(\log n)$ algorithm for \onetwocycle{}, which is a contradiction and concludes the proof.

    Throughout the proof, we assume that $\frac{1}{\epsilon} = O(n^\delta)$, where $O(n^\delta)$ is the local space on each machine and $0 < \delta < 1$.
    This is without loss of generality since any $(1 + \frac{1}{n})$-approximation is a $(1 + \frac{1}{n^\delta})$-approximation, and $\log n = \Theta(\log n^\delta)$.
    
    To make the reduction, we present an algorithm that given a cycle of length $n$,
    with high probability outputs the order in which the vertices appear in the cycle.
    Given any input graph, the output of the algorithm can be used to validate if the input is a cycle of length $n$.
    That is, given the output ordering $u_0, \ldots u_{n-1}$, one can simply check whether there are $n$ edges, one between $u_i$ and $u_{(i+1) \bmod n}$ for each $0\leq i\leq n-1$.

    To construct the ordering of the vertices, we repeatedly use two subroutines: one that for each vertex, discovers its $\frac{1}{\epsilon}$-neighborhood using the approximate MST algorithm, and one subroutine that uses the former to break down the cycle into smaller cycles.
    Executing the latter subroutine $k$ times on the graph, results in $\frac{1}{\epsilon^k}$ cycles of length $\epsilon^k n$. We argue later that without loss of generality, one can assume $\frac{1}{\epsilon}$ is an integer, and $n$ is a power of $\frac{1}{\epsilon}$, i.e.\ $\epsilon^k n$ is an integer.
    \begin{claim}\label{clm:neighborhood-detection}
        Given a graph $G$, consisting of $\frac{1}{\epsilon^k}$ cycles of length $\epsilon^k n$, we can find the $\frac{1}{\epsilon}$-neighborhood of every vertex with high probability, using $O(n^\delta)$ space per machine, $\poly(n)$ total space, and $O(R)$ rounds.
    \end{claim}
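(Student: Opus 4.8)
The idea is to use the given $R$-round $(1+\epsilon)$-approximate MST algorithm as a black box to decide, for every ordered pair of vertices $(u,v)$ of $G$, whether their hop-distance in $G$ is at most $\tfrac1\epsilon$, and then assemble the neighborhoods locally. The first step is to make the algorithm \emph{isomorphism-invariant}: compose it with a uniformly random relabeling of the vertex identifiers to obtain $\mathcal{A}'$. Since the MST objective is relabeling-invariant, $\mathcal{A}'$ is still a $(1+\epsilon)$-approximation in expectation, but now the probability that $\mathcal{A}'$ omits a given weight-$1$ edge from its output tree depends only on the isomorphism type of that edge; for a $(1,2)$-metric whose weight-$1$ graph is a disjoint union of cycles this is just the length of the cycle carrying the edge. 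This is exactly the replacement for the ``component stability'' assumption.

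\textbf{The proximity test.} Fix a pair $(u,v)$. For each of the $O(1)$ choices of an edge $e_u$ incident to $u$, an edge $e_v$ incident to $v$, and a reconnection, perform a two-switch on $e_u,e_v$ (cf. \cref{def:two-switch}, \cref{clm:two-switch}); each modified graph is again a disjoint union of cycles, and if $u,v$ lie on a common cycle at hop-distance $d$, one of these choices turns $(u,v)$ into an edge of a cycle of length exactly $d+1$ (splitting off the rest of that cycle), while the other choices place $(u,v)$ on a cycle of length at least $d+1$. Form the $(1,2)$-metric of every modified graph, run $\mathcal{A}'$ on all of them in parallel, each replicated $\poly(n)$ times with fresh randomness, and from the replicas estimate, to additive error $o(\epsilon)$, the per-edge exclusion probability of the cycle through $(u,v)$ in each instance — here even a single run yields a decent estimate, since all edges of a cycle are isomorphic, so the empirical fraction of that cycle's edges dropped is unbiased. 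Because $\tfrac1\epsilon = O(n^\delta)$, there are only $\poly(n)$ instances, each of size $n^2$ words, so the whole simulation fits in $\poly(n)$ total space and runs in $O(1)$ rounds on top of the $R$ rounds spent inside $\mathcal{A}'$.

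\textbf{Decision and reconstruction.} Declare $(u,v)$ \emph{near} if in some modified instance the estimated exclusion rate of the cycle carrying $(u,v)$ exceeds a threshold such as $5\epsilon$. Completeness is easy: if $d\le \tfrac1{10\epsilon}$, the ``good'' instance places $(u,v)$ on a cycle of length $\le \tfrac1{10\epsilon}+1$, which must lose at least one edge, so by symmetry every edge of it — in particular $(u,v)$ — is dropped with probability $\ge 5\epsilon$. Soundness — showing that $d>\tfrac1\epsilon$ keeps every instance's rate below $5\epsilon$ — is the delicate point, since an arbitrary $(1+\epsilon)$-approximation has $\Theta(\epsilon n)$ slack in weight-$2$ edges, which it could in principle pour entirely onto the one long cycle carrying $(u,v)$ when $G$ has many cycles. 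I would handle this by reducing to the single-long-cycle situation: first splice all cycles of $G$ into one cycle of length $n$ by a batch of arbitrary two-switches, keeping track of the splice points so that ``near in $G$'' can be read off ``near in the spliced cycle''; once there is a single long remainder cycle of length $n-O(\tfrac1\epsilon)$ and the total number of dropped weight-$1$ edges is $O(\epsilon n)$, that cycle's rate is forced below $2\epsilon$, restoring the $2\epsilon$ versus $5\epsilon$ gap. Finally, once the near-predicate is known for all pairs, each vertex's $\tfrac1\epsilon$-neighborhood is a set of $O(n^\delta)$ vertices that can be gathered onto one machine, where the subgraph $G$ induces on it is a path (or a short cycle) and can be laid out in the correct order; the whole subroutine costs $O(R)+O(1)=O(R)$ rounds.

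\textbf{Main obstacle.} The heart of the argument is the soundness direction when $G$ has many cycles: controlling how the $(1+\epsilon)$ slack of an \emph{arbitrary} approximate algorithm is distributed across cycles so that a far pair cannot be made to look near. The splicing step removes the ``few long cycles'' pathology, but performing a batch of up to $\tfrac1{\epsilon^k}$ merging two-switches in $O(1)$ rounds — without first running a connectivity computation to identify the cycles, and while retaining enough structure to translate neighborhoods back — is itself nontrivial and is the technically heaviest part; an alternative is to pad each instance with many auxiliary cycles of matched lengths so that isomorphism-invariance spreads the per-cycle excess thin, but then one must verify that the padding does not inflate $w(\MST)$ enough to re-enlarge the slack budget. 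Everything else — the random relabeling, the two-switch bookkeeping, the parallel simulation, and the concentration bounds for the estimates (with a union bound over the $O(n^2)$ pairs for the high-probability guarantee) — is routine.
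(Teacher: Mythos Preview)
Your skeleton is right and matches the paper's: random relabeling to force isomorphism-invariance, a two-switch at the candidate pair, running the approximate-MST oracle many times in parallel, and thresholding on exclusion probabilities. The gap is exactly where you flag it, but your proposed fixes do not close it.

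Your test looks only at the exclusion rate of the cycle through the new edge $(u_1,u_2)$. In the ``far'' case the two-switch can leave $(u_1,u_2)$ on a cycle of length just above $D=\tfrac{1}{20\epsilon}$ while the other $\tfrac{1}{\epsilon^k}-1$ cycles are untouched; an adversarial $(1+\epsilon)$-approximation is free to dump its entire $\Theta(\epsilon n)$ slack on that one cycle, pushing its rate far above any fixed threshold. Your splicing fix requires choosing one edge per cycle to cut, i.e.\ solving connectivity on $G$ --- the very thing the whole reduction is designed to avoid --- and you concede you do not know how to do it in $O(1)$ rounds. Your padding fix is also incomplete: to neutralize the adversary you would need many copies of every length $>D$ that could arise, but you must \emph{not} pad with lengths $\le D$ (or the ``near'' signature appears in the far case too), so a length-$(D{+}1)$ remainder cycle remains unique and targetable.

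The paper sidesteps all of this by never isolating the $(u_1,u_2)$-cycle. Instead it estimates the exclusion probability of \emph{every} edge of $G'$, groups edges whose estimates are within $\tfrac{\epsilon}{2}$, and reads off the case from group \emph{sizes}. The key anchor is the $\tfrac{1}{\epsilon^k}-1$ untouched cycles: they contribute $n-\epsilon^k n\ge n/2$ mutually isomorphic edges whose rate is provably $\le 6\epsilon$, regardless of what the adversary does elsewhere. Only the ``near'' case (your case~3) can produce a group of at most $D$ edges with rate $\ge 10\epsilon$. One ambiguity remains --- when the short cycle and the medium remainder happen to have equal rates and merge into a single group of size exactly $\epsilon^k n$ (case~3-b), this is indistinguishable from the far case~4-b at this stage --- so the paper runs a second stage: that group of $\epsilon^k n$ edges identifies the vertices of the original cycle containing $u_1,u_2$; rerun the two-switch and the oracle on that cycle \emph{alone}, where now there are only two cycles and the slack is $O(\epsilon\cdot \epsilon^k n)$, forcing the long remainder below $3\epsilon$ and separating the cases cleanly. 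Finally, this only certifies distance $\le D=\tfrac{1}{20\epsilon}$; the paper then spends $\lceil\log 20\rceil$ rounds of doubling to reach the full $\tfrac{1}{\epsilon}$-neighborhood.

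A smaller point: your aside that ``even a single run yields a decent estimate'' by averaging over the cycle's edges presumes you can identify which edges lie on that cycle, which again is connectivity. The paper (and implicitly your main line) instead estimates each edge's probability from $\Theta(\tfrac{\log n}{\epsilon^2})$ independent runs and applies Chernoff per edge.
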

    Observe, the above implies that for each vertex $u$, we can find the subpath of length $\frac{2}{\epsilon}$ that is centered at $u$ by gathering all the vertices in its $\frac{1}{\epsilon}$-neighborhood in one machine. A cycle of length $L$ can then be broken down into $\frac{1}{\epsilon}$ cycles of length $\epsilon L$ by removing all the edges, and connecting each vertex to the two vertices that were at distance exactly $\frac{1}{\epsilon}$ (\cref{fig:cycle-dividing}). More formally,
    
    \begin{claim} \label{clm:cycle-breakdown}
        Given a graph $G$, consisting of $\frac{1}{\epsilon^k}$ cycles $C_0, \ldots, C_{1/\epsilon^k-1}$ of length $\epsilon^k n$, and a list of vertices $u^{0}_0, \ldots, u^{1/\epsilon^k-1}_0$, such that the cycle $C_i$ contains vertices $u^{i}_0, \ldots, u^{i}_{\epsilon^k n - 1}$ (in that order),
        using $O(n^\delta)$ space per machine,
        $\poly(n)$ total space,
        and $O(R)$ rounds,
        we can construct a new graph $G'$ on the same vertex set, such that for each $i$,
        there are $\frac{1}{\epsilon}$ cycles $C^{i}_0, \ldots, C^{i}_{1/\epsilon - 1}$ in $G'$, such that $C^{i}_j$ consists of vertices $u^{i}_j, u^{i}_{j + 1/\epsilon}, u^{i}_{j + 2/\epsilon}, \ldots, u^{i}_{j + (\epsilon^{k+1}n - 1)/\epsilon}$ in that order.
        Furthermore, we can compute the list of vertices $u^{i}_j$, where $0 \leq i \leq 1/\epsilon^k -1$ and $0 \leq j \leq 1/\epsilon-1$.
        That is, one vertex $u^i_j$ per cycle $C^i_j$, such that for each $i$ they form a subpath in $C_i$.
    \end{claim}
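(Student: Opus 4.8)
\textbf{Proof strategy for \cref{clm:cycle-breakdown}.} The idea is to build $G'$ entirely from local information: for every vertex $v$ we must connect it to the two vertices lying exactly $\frac1\epsilon$ hops away from it along its cycle, and both of these are contained in the $\frac1\epsilon$-neighborhood of $v$. First I would invoke \cref{clm:neighborhood-detection} to compute, w.h.p.\ in $O(R)$ rounds, the $\frac1\epsilon$-neighborhood $N(v)$ of every vertex $v$. Since we may assume $\frac1\epsilon = O(n^\delta)$, each $N(v)$ has $O(n^\delta)$ vertices; using sorting I would gather onto a single machine $m_v$ the set $N(v)$ together with all edges of $G$ among its vertices (there are only $O(n^\delta)$ of them, as $G$ has maximum degree $2$), in $O(1/\delta)$ rounds. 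On $m_v$ the induced subgraph is a path --- the arc of the cycle $C_i \ni v$ of radius $\frac1\epsilon$ around $v$ (a proper arc, since this claim is only applied while the cycles are long enough; shorter cycles are handled by gathering each onto a single machine) --- so $m_v$ reconstructs the order along this arc and outputs the (at most two) vertices at hop-distance exactly $\frac1\epsilon$ from $v$ as its intended $G'$-neighbors. Collecting all such requests defines $G'$ via another $O(1/\delta)$-round sorting step.

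\emph{Decomposition of $G'$.} Fix a cycle $C_i = (u^i_0, \dots, u^i_{\ell-1})$ of length $\ell = \epsilon^k n$ and set $d = \frac1\epsilon$. By construction the $G'$-edges inside $V(C_i)$ are exactly $\{u^i_m, u^i_{m+d}\}$ over all $m$ (indices mod $\ell$), i.e.\ the circulant graph $C_\ell(d)$. It is a standard fact that $C_\ell(d)$ is the disjoint union of $\gcd(\ell,d)$ cycles, the one through $u^i_j$ being $(u^i_j, u^i_{j+d}, u^i_{j+2d}, \dots)$ and having length $\ell/\gcd(\ell,d)$. Because $n$ is a power of $\frac1\epsilon$ and $k+1 \le \log_{1/\epsilon} n$ (argued later), $d \mid \ell$, hence $\gcd(\ell,d) = d = \frac1\epsilon$ and each cycle has length $\ell/d = \epsilon^{k+1}n$. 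So $G'$ restricted to $C_i$ is precisely the $\frac1\epsilon$ cycles $C^i_0, \dots, C^i_{1/\epsilon-1}$ with $C^i_j = (u^i_j, u^i_{j+1/\epsilon}, \dots, u^i_{j+(\epsilon^{k+1}n-1)/\epsilon})$, and ranging over $i$ gives all $\frac1{\epsilon^{k+1}}$ new cycles with the claimed vertex membership.

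\emph{The new base vertices.} For each $i$, I would output $u^i_0, u^i_1, \dots, u^i_{1/\epsilon-1}$; these are distinct, each $u^i_j$ lies on $C^i_j$ (its index is $\equiv j \bmod \frac1\epsilon$), and together they form the subpath of length $\frac1\epsilon$ of $C_i$ starting at the given vertex $u^i_0$. All of them are within $\frac1\epsilon - 1$ hops of $u^i_0$, hence already present on the machine $m_{u^i_0}$ built above, from which they can be read off in order. The only point requiring care is orientation: from $u^i_0$ alone the two traversal directions of $C_i$ cannot be told apart, so the labels $u^i_1, u^i_2, \dots$ (and thus the labels of the $C^i_j$) are determined only up to reversal. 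I would fix this canonically, e.g.\ by letting $u^i_1$ be the $C_i$-neighbor of $u^i_0$ with the smaller ID; the statement then holds verbatim for this choice, and the choice is anyway immaterial in the reduction that uses this claim (a cycle admits two equally valid orderings).

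Altogether, every step other than the single call to \cref{clm:neighborhood-detection} is a bounded number of sorting rounds and $(1/\delta)$-ary-tree broadcasts, so the whole procedure runs in $O(R)$ rounds with $O(n^\delta)$ space per machine and $\poly(n)$ total space. I expect the main obstacle to be purely careful bookkeeping: checking that each $\frac1\epsilon$-neighborhood together with its induced edges fits on one machine (this is exactly where $\frac1\epsilon = O(n^\delta)$ is used) and fixing the orientation convention so that the ``in that order'' clause of the statement is well-defined; the structural assertion about $G'$ is just the elementary description of circulant graphs $C_\ell(d)$ with $d \mid \ell$.
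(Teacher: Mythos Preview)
Your proposal is correct and follows essentially the same approach as the paper: invoke \cref{clm:neighborhood-detection}, gather each $\frac{1}{\epsilon}$-neighborhood onto one machine, connect every vertex to its two distance-$\frac{1}{\epsilon}$ companions, and read the new base vertices $u^i_0,\dots,u^i_{1/\epsilon-1}$ off the local arc at $u^i_0$. Your write-up is in fact more careful than the paper's (the circulant decomposition and the orientation convention are points the paper leaves implicit), but the underlying argument is the same.
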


    \begin{figure}[h]
        \centering
        \begin{tikzpicture}[scale=2.5]

\def\n{15}

\definecolor{lightred}{RGB}{255, 182, 193}
\definecolor{lightblue}{RGB}{173, 216, 230}
\definecolor{lightgreen}{RGB}{144, 238, 144}
\definecolor{darkred}{RGB}{200, 0, 0}
\definecolor{darkblue}{RGB}{0, 0, 139}
\definecolor{darkgreen}{RGB}{0, 100, 0}

\begin{scope}[shift={(-0.5,0)}]
    \foreach \i in {0,1,...,\n} {
        \node[draw, circle, inner sep=1.5pt, fill=white, minimum size=15pt] (a\i) at ({360/\n * \i}:1) {};
    }

    \foreach \i in {1,...,\n} {
        \node at ({360/\n * \i}:1) {\scriptsize $v_{\i}$};
    }

    \foreach \i in {1,...,\n} {
        \pgfmathtruncatemacro{\j}{mod(\i+1,\n)}
        \draw[black, thick] (a\i) to (a\j);
    }

    \node at (0, -1.5) {\textbf{(a)}};
\end{scope}

\draw[dotted] (1, -1) -- (1, 1);

\begin{scope}[shift={(2.5,0)}]
    \foreach \i in {0,...,\n} {
        \node[draw, circle, inner sep=1.5pt, fill=white, minimum size=15pt] (b\i) at ({360/\n * \i}:1) {};
    }

    \foreach \i in {0,...,\n} {
        \node at ({360/\n * \i}:1) {\scriptsize $v_{\i}$};
    }

    \foreach \i in {1,4,...,\n} {
        \pgfmathtruncatemacro{\j}{mod(\i+3,\n)}
        \draw[darkred, thick, bend left=20] (b\i) to (b\j);
        \node[draw, circle, inner sep=1.5pt, fill=lightred, minimum size=15pt] (b\i) at ({360/\n * \i}:1) {};
        \node at ({360/\n * \i}:1) {\scriptsize $v_{\i}$};
    }

    \foreach \i in {2,5,...,\n} {
        \pgfmathtruncatemacro{\j}{mod(\i+3,\n)}
        \draw[darkblue, thick, bend left=20] (b\i) to (b\j);
        \node[draw, circle, inner sep=1.5pt, fill=lightblue, minimum size=15pt] (b\i) at ({360/\n * \i}:1) {};
        \node at ({360/\n * \i}:1) {\scriptsize $v_{\i}$};
    }

    \foreach \i in {3,6,...,\n} {
        \pgfmathtruncatemacro{\j}{mod(\i+3,\n)}
        \draw[darkgreen, thick, bend left=20] (b\i) to (b\j);
        \node[draw, circle, inner sep=1.5pt, fill=lightgreen, minimum size=15pt] (b\i) at ({360/\n * \i}:1) {};
        \node at ({360/\n * \i}:1) {\scriptsize $v_{\i}$};
    }

    \node at (0, -1.5) {\textbf{(b)}};
\end{scope}

\end{tikzpicture}
        \caption{An example of breaking down a cycle of length $L = 15$ into $\frac{1}{\epsilon} = 3$ cycles of length $\epsilon L = 5$. To break down the cycle (a), a new graph (b) is constructed on the same vertex set, where each vertex is connected to the two vertices at distance $\frac{1}{\epsilon}$, e.g.\ $v_2$ is connected to $v_{14}$ and $v_{5}$.
        Given the subpath $v_1$-$v_2$-$v_3$, and the order of the vertices in each cycle of (b), the order of the vertices in (a) can be constructed as follows.
        Start the order with the first vertex of each cycle order, i.e.\ $v_1, v_2, v_3$.
        Then, add the second vertex of each cycle to the order, i.e.\ $v_4, v_5, v_6$.
        Then, the third vertex of each cycle, and so on.}
        \label{fig:cycle-dividing}
    \end{figure}
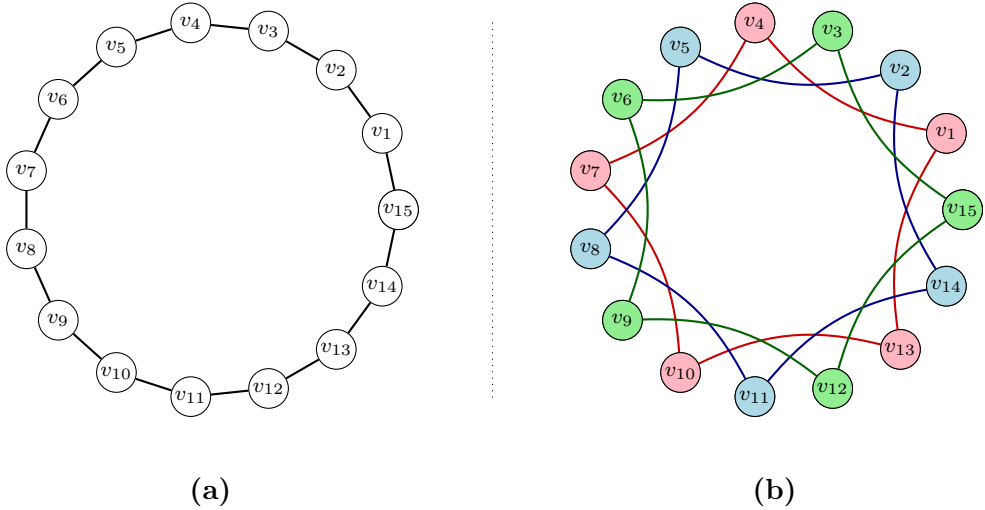

    First, we show how to utilize \cref{clm:cycle-breakdown} to find the order of the cycles.
    Assume for simplicity that $\frac{1}{\epsilon}$ is an integer. This can be done by taking a smaller parameter $\epsilon'$ such that $\frac{\epsilon}{2} \leq \epsilon' \leq \epsilon$.
    Also, we can assume that the number of vertices $n$ is a power of $\frac{1}{\epsilon}$.
    This is without loss of generality
    since we can round up $n$ to the nearest power of $\frac{1}{\epsilon}$, say $n'$, by adding $n' - n$ extra edges anywhere in the cycle.
    As a result, the number of vertices grows by at most a factor of $\frac{1}{\epsilon}$, and is still $\poly(n)$.

    Given a cycle of length $n = \frac{1}{\epsilon^N}$, we can find the order of the vertices by recursively applying \cref{clm:cycle-breakdown}.
    Take any vertex $u_0$, and let $u_0, \ldots, u_{n-1}$ be the order of the cycle.
    To start, we can feed the graph and $u_0$ to \cref{clm:cycle-breakdown}.
    We get a new graph consisting of the cycles $C_0, \ldots, C_{1/\epsilon-1}$ of length $\frac{1}{\epsilon^{N-1}}$. Note that we do not know what the cycles are. Rather, we know of only one vertex in each cycle $C_i$, namely $u_i$, where $u_0,\ldots,u_{1/\epsilon-1}$ form a subpath in the cycle.

    After computing the order of vertices in each cycle $C_i$ recursively, we can recover the order in the original cycle by \enquote{shuffling} the orders of $C_i$'s as follows. First, we add the first vertex in each cycle to the order, i.e.\ $u_0, \ldots, u_{1/\epsilon-1}$.
    Then we add the second vertex of each cycle to the order, these are the $\frac{1}{\epsilon}$ vertices that follow $u_0, \ldots, u_{1/\epsilon-1}$.
    Then, the third, and so on (\Cref{fig:cycle-dividing}).

    The base case is when the cycles have length $\frac{1}{\epsilon}$. We can simply use $O\left(\log \frac{1}{\epsilon}\right)$ rounds of doubling to gather the whole cycle in one machine, and then order the vertices locally.

    Every step of this recursion takes $O(R)$ rounds, recall $R$ is the number of rounds the approximate MST algorithm takes to finish. There are $N = \log_{1/\epsilon}n$
    levels of recursion. In conclusion, we have presented an algorithm that orders the vertices of a cycle of length $n$ in $O(R \cdot \log_{1/\epsilon}n) = O\left(R \cdot \frac{\log n}{\log \frac{1}{\epsilon}}\right)$ rounds.
    Therefore, an $R = o\left(\log \frac{1}{\epsilon}\right)$-round algorithm for MST approximation implies a $o(\log n)$ algorithm for ordering the vertices of a cycle, which in turn implies a $o(\log n)$ algorithm to solve the \onetwocycle{} problem.
    This concludes the proof.
\end{proof}

The remainder of this section is devoted to proving \cref{clm:neighborhood-detection,clm:cycle-breakdown}. We begin by presenting some auxiliary tools and claims used in the proofs.

\begin{definition} \label{def:two-switch}
    Given a graph $G$, doing a \emph{two-switch} on two of its edges $(u_1, v_1)$, $(u_2, v_2)$ is defined as removing them from the graph and inserting $(u_1, u_2)$ and $(v_1, v_2)$.
\end{definition}

\begin{claim} \label{clm:two-switch}
    Let $G'$ be obtained from $G$ by doing a two-switch on $(u_1, v_1)$, $(u_2, v_2)$, where $G$ is a disjoint union of cycles. Then, it holds:
    \begin{enumerate}
        \item If $u_1$ and $u_2$ are in different cycles in $G$, then the two cycles are joined in $G'$. \label{item:two-switch-two-cycles}
        \item If $u_1$ and $u_2$ are in the same cycle in $G$, and $v_1$ and $v_2$ lie on different paths from $u_1$ to $u_2$,
        then the vertices of the cycle still form a cycle in $G'$.
        \label{item:two-switch-one-cycle-different-paths}
        \item If $u_1$ and $u_2$ are in the same cycle in $G$, and $v_1$ and $v_2$ lie on the same path from $u_1$ to $u_2$,
        then the cycle is borken down into two cycles in $G'$, one containing $u_1$ and $u_2$, and another containing $v_1$ and $v_2$.
        \label{item:two-switch-one-cycle-same-paths}
    \end{enumerate}
\end{claim}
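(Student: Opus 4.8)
The plan is a degree-count followed by a short case analysis. First I would observe that a two-switch preserves every vertex degree: each of $u_1, v_1, u_2, v_2$ trades one incident edge for one newly inserted edge, and no other vertex is touched. Hence $G'$ is again $2$-regular, so it too is a disjoint union of cycles, and every cycle of $G$ that avoids $\{u_1, v_1, u_2, v_2\}$ survives unchanged. So it suffices to understand how the vertex set(s) of the cycle(s) of $G$ through $u_1$ and $u_2$ are redistributed among the cycles of $G'$.

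For part \ref{item:two-switch-two-cycles} I would let $C$ and $C'$ be the distinct cycles of $G$ containing $(u_1, v_1)$ and $(u_2, v_2)$ respectively. Deleting $(u_1, v_1)$ turns $C$ into a path with endpoints $u_1, v_1$ spanning $V(C)$, and deleting $(u_2, v_2)$ turns $C'$ into a path with endpoints $u_2, v_2$ spanning $V(C')$. Inserting $(u_1, u_2)$ concatenates these two paths into one path with endpoints $v_1, v_2$, and inserting $(v_1, v_2)$ closes it into a single cycle on $V(C) \cup V(C')$, which is the assertion.

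For parts \ref{item:two-switch-one-cycle-different-paths} and \ref{item:two-switch-one-cycle-same-paths} I would traverse the common cycle $C$ of $u_1$ and $u_2$ beginning at $u_1$ in the direction of $v_1$, so that $C$ reads $u_1, v_1, \dots$ and both $u_2$ and $v_2$ occur later, and then split into two cases by which of $u_2, v_2$ appears first. If the order along $C$ is $u_1, v_1, \dots, v_2, u_2, \dots$, then $v_1$ and $v_2$ both lie on the arc from $u_1$ to $u_2$ that leaves $u_1$ through $v_1$, which is exactly the hypothesis of part \ref{item:two-switch-one-cycle-same-paths}; deleting $(u_1, v_1)$ and $(u_2, v_2)$ leaves the path $v_1, \dots, v_2$ together with the complementary path $u_2, \dots, u_1$, and then $(v_1, v_2)$ closes the first into a cycle through $v_1, v_2$ while $(u_1, u_2)$ closes the second into a cycle through $u_1, u_2$. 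If instead the order is $u_1, v_1, \dots, u_2, v_2, \dots$, then $v_1$ and $v_2$ lie on the two different arcs from $u_1$ to $u_2$, matching part \ref{item:two-switch-one-cycle-different-paths}; deleting the two old edges leaves the paths $v_1, \dots, u_2$ and $v_2, \dots, u_1$, and inserting $(u_1, u_2)$ and then $(v_1, v_2)$ concatenates and closes them into one cycle on $V(C)$. In each subcase the claimed structure is just read off from consecutive adjacencies, and the paths used partition the relevant vertex sets, so no vertex is left unaccounted for.

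The only delicate point — and thus where I would be most careful — is degenerate configurations: some of $u_1, v_1, u_2, v_2$ coinciding, the two switched edges being equal, or a cycle of $G$ short enough that one of the ``paths'' above is a single vertex or the two inserted edges form a $2$-cycle. None of these occur in our application, where all four endpoints are distinct and the cycles are long; in general they are absorbed by the same reading-off argument (treating a two-vertex multigraph cycle as a legitimate cycle), but it is cleanest to state the intended non-degeneracy explicitly before beginning the case analysis.
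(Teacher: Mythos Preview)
Your proposal is correct and is essentially a written-out version of the paper's argument: the paper's entire proof reads ``Proof by picture: see \Cref{fig:two-switch-two-cycles,fig:two-switch},'' relying on two figures that depict exactly the three configurations you analyze. Your degree-preservation observation and explicit path-concatenation case analysis simply spell out what those pictures convey, and your remark on degeneracies is a nice addition the paper omits.
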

\begin{proof}
    Proof by picture: see \Cref{fig:two-switch-two-cycles,fig:two-switch}.
    \begin{figure}
        \begin{tikzpicture}[scale=0.5, transform shape, every node/.style={font=\Large}]
    \definecolor{lightblue}{RGB}{173, 216, 230}
    \definecolor{lightred}{RGB}{255, 182, 193}

    \foreach \i in {1,...,10}
    {
        \ifthenelse{\i=1 \OR \i=2}
            {\node[draw, circle, fill=lightblue, minimum size=1.5em] (v\i) at ({-18 + 360/10 * (\i - 1)}:3) {\Large $v_{\i}$};}
            {\node[draw, circle, minimum size=1.5em] (v\i) at ({-18 + 360/10 * (\i - 1)}:3) {\Large $v_{\i}$};}
    }
    
    \foreach \i in {1,...,9}
    {
        \pgfmathtruncatemacro\next{\i + 1}
        \draw (v\i) -- (v\next);
    }
    \draw (v10) -- (v1);
    
    \begin{scope}[xshift=8cm]
        \foreach \i [evaluate=\i as \j using int(\i+10)] in {1,...,10}
        {
            \ifthenelse{\j=15 \OR \j=16}
                {\node[draw, circle, fill=lightred, minimum size=1.5em] (v\j) at ({18 + 360/10 * (\i - 1)}:3) {\Large $v_{\j}$};}
                {\node[draw, circle, minimum size=1.5em] (v\j) at ({18 + 360/10 * (\i - 1)}:3) {\Large $v_{\j}$};}
        }
        
        \foreach \i in {11,...,19}
        {
            \pgfmathtruncatemacro\next{\i + 1}
            \draw (v\i) -- (v\next);
        }
        \draw (v20) -- (v11);
    \end{scope}
    
    \node at (4, -6) {\Large \textbf{(a)}};
    
    \draw[dotted] (12, -6) -- (12, 6);
    
    \begin{scope}[xshift=16cm]
        \foreach \i in {1,...,10}
        {
            \ifthenelse{\i=1 \OR \i=2}
                {\node[draw, circle, fill=lightblue, minimum size=1.5em] (v\i) at ({-18 + 360/10 * (\i - 1)}:3) {\Large $v_{\i}$};}
                {\node[draw, circle, minimum size=1.5em] (v\i) at ({-18 + 360/10 * (\i - 1)}:3) {\Large $v_{\i}$};}
        }
        
        \foreach \i in {1,...,9}
        {
            \pgfmathtruncatemacro\next{\i + 1}
            \ifthenelse{\i=1 \OR \i=2}
                {}
                {\draw (v\i) -- (v\next);}
        }
        \draw (v10) -- (v1);
        
        \begin{scope}[xshift=8cm]
            \foreach \i [evaluate=\i as \j using int(\i+10)] in {1,...,10}
            {
                \ifthenelse{\j=15 \OR \j=16}
                    {\node[draw, circle, fill=lightred, minimum size=1.5em] (v\j) at ({18 + 360/10 * (\i - 1)}:3) {\Large $v_{\j}$};}
                    {\node[draw, circle, minimum size=1.5em] (v\j) at ({18 + 360/10 * (\i - 1)}:3) {\Large $v_{\j}$};}
            }
            
            \foreach \i in {11,...,19}
            {
                \pgfmathtruncatemacro\next{\i + 1}
                \ifthenelse{\i=15 \OR \i=16}
                    {}
                    {\draw (v\i) -- (v\next);}
            }
            \draw (v20) -- (v11);
        \end{scope}
        
        \draw (v1) -- (v16);
        \draw (v2) -- (v15);
        \draw (v2) -- (v3);
        \draw (v16) -- (v17);
        
        \node at (4, -6) {\Large \textbf{(b)}};
    \end{scope}
\end{tikzpicture}
        \caption{A possible outcome of a two-switch.
        (b) is obtained from (a) by doing a two-switch on $(v_1, v_2)$ and $(v_{16}, v_{15})$.
        Note that $v_1$ and $v_{16}$ are on different cycles here (case \ref{item:two-switch-two-cycles} in \cref{clm:cycle-breakdown}).}
        \label{fig:two-switch-two-cycles}
    \end{figure}
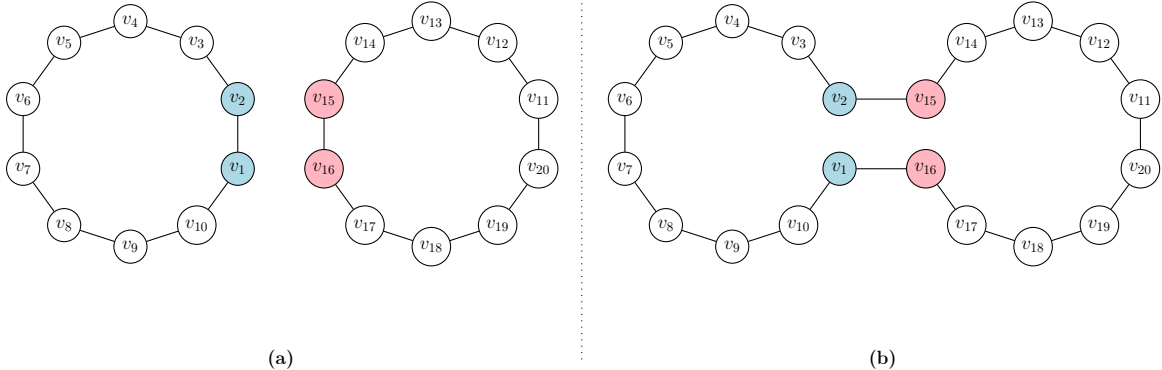
    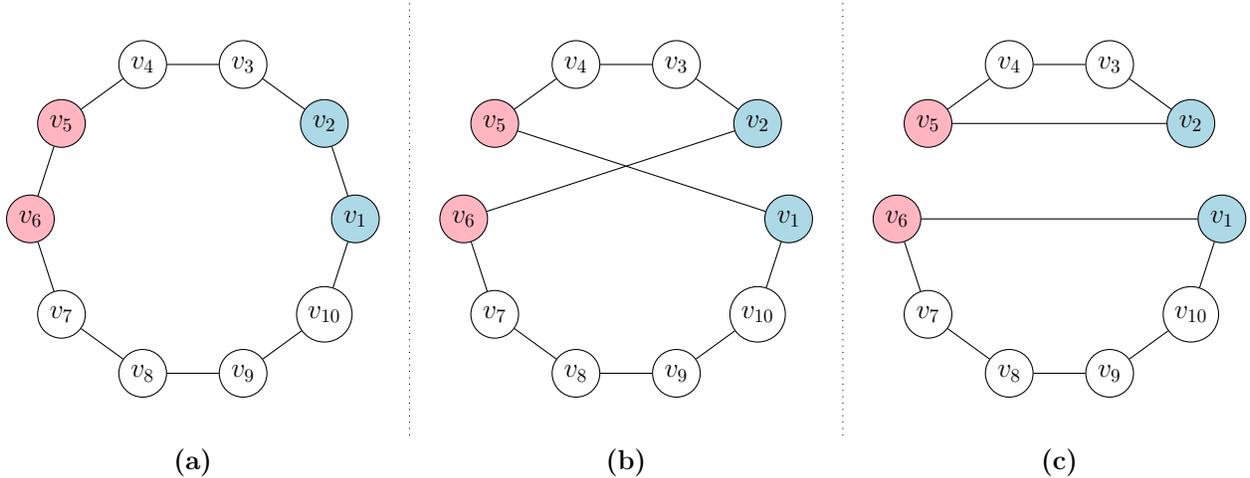
\begin{figure}
        \begin{tikzpicture}[scale=0.72, transform shape, every node/.style={font=\Large}]
    \definecolor{lightblue}{RGB}{173, 216, 230}
    \definecolor{lightred}{RGB}{255, 182, 193}

    \foreach \i in {1,...,10}
    {
        \ifthenelse{\i=1 \OR \i=2}
            {\node[draw, circle, fill=lightblue, minimum size=1.5em] (v\i) at ({360/10 * (\i - 1)}:3) {\Large $v_{\i}$};}
            {\ifthenelse{\i=5 \OR \i=6}
                {\node[draw, circle, fill=lightred, minimum size=1.5em] (v\i) at ({360/10 * (\i - 1)}:3) {\Large $v_{\i}$};}
                {\node[draw, circle, minimum size=1.5em] (v\i) at ({360/10 * (\i - 1)}:3) {\Large $v_{\i}$};}
            }
    }
    
    \foreach \i in {1,...,9}
    {
        \pgfmathtruncatemacro\next{\i + 1}
        \draw (v\i) -- (v\next);
    }
    \draw (v10) -- (v1);
    
    \node at (0, -4.5) {\Large \textbf{(a)}};
    
    \draw[dotted] (4, -4) -- (4, 4);
    
    \begin{scope}[xshift=8cm]
        \foreach \i in {1,...,10}
        {
            \ifthenelse{\i=1 \OR \i=2}
                {\node[draw, circle, fill=lightblue, minimum size=1.5em] (v\i) at ({360/10 * (\i - 1)}:3) {\Large $v_{\i}$};}
                {\ifthenelse{\i=5 \OR \i=6}
                    {\node[draw, circle, fill=lightred, minimum size=1.5em] (v\i) at ({360/10 * (\i - 1)}:3) {\Large $v_{\i}$};}
                    {\node[draw, circle, minimum size=1.5em] (v\i) at ({360/10 * (\i - 1)}:3) {\Large $v_{\i}$};}
                }
        }
        
        \foreach \i in {1,...,9}
        {
            \pgfmathtruncatemacro\next{\i + 1}
            \ifthenelse{\i=1 \OR \i=5}
                {}
                {\draw (v\i) -- (v\next);}
        }
        \draw (v10) -- (v1);
        
        \draw (v1) -- (v5);
        \draw (v2) -- (v6);
        
        \node at (0, -4.5) {\Large \textbf{(b)}};
    \end{scope}
    
    \draw[dotted] (12, -4) -- (12, 4);
    
    \begin{scope}[xshift=16cm]
        \foreach \i in {1,...,10}
        {
            \ifthenelse{\i=1 \OR \i=2}
                {\node[draw, circle, fill=lightblue, minimum size=1.5em] (v\i) at ({360/10 * (\i - 1)}:3) {\Large $v_{\i}$};}
                {\ifthenelse{\i=5 \OR \i=6}
                    {\node[draw, circle, fill=lightred, minimum size=1.5em] (v\i) at ({360/10 * (\i - 1)}:3) {\Large $v_{\i}$};}
                    {\node[draw, circle, minimum size=1.5em] (v\i) at ({360/10 * (\i - 1)}:3) {\Large $v_{\i}$};}
                }
        }
        
        \foreach \i in {1,...,9}
        {
            \pgfmathtruncatemacro\next{\i + 1}
            \ifthenelse{\i=1 \OR \i=5}
                {}
                {\draw (v\i) -- (v\next);}
        }
        \draw (v10) -- (v1);
        
        \draw (v1) -- (v6);
        \draw (v2) -- (v5);
        
        \node at (0, -4.5) {\Large \textbf{(c)}};
    \end{scope}
\end{tikzpicture}
        \caption{Two possible outcomes of a two-switch.
        (b) is obtained from (a) by doing a two-switch on $(v_1, v_2)$ and $(v_5, v_6)$. Note that $v_6$ and $v_2$ are on different paths from $v_1$ to $v_5$ (case \ref{item:two-switch-one-cycle-different-paths} in \cref{clm:cycle-breakdown}).
        (c) is obtained from (a) by doing a two-switch on $(v_1, v_2)$ and $(v_6, v_5)$. Observe that $v_5$ and $v_2$, are on the same path from $v_1$ to $v_6$ (case \ref{item:two-switch-one-cycle-same-paths} in \cref{clm:cycle-breakdown}).}
        \label{fig:two-switch}
    \end{figure}
\end{proof}

Given a graph $G$ with $n$ vertices that is a disjoint union of $c$ cycles, we relate it to MST approximation in $(1, 2)$-metrics as follows.
Let two points in the metric have distance $1$ if there is an edge between them in $G$, and distance $2$ otherwise.
An MST of this metric has weight $n + c - 2$,
as it uses $n - c$ edges of weight $1$, and $c - 1$ edges of weight $2$.
Alternatively, this can be viewed as turning $G$ into a forest by removing $c$ edges, and then connecting the components to build the MST. Similarly, for an approximate MST,
the problem can be viewed as turning $G$ into a forest by removing not many more than $c$ edges, and then connecting the components using edges of weight at most $2$.

\begin{definition} \label{def:removed-edges}
    Given a $(1, 2)$-metric, 
    we say a weight-$1$ edge is \emph{removed} by a spanning tree,
    if it is not included in it.
\end{definition}

\begin{claim} \label{clm:edge-removal-apx-mst}
    Let $G$ be a disjoint union of cycles, and let $T$ be a $(1 + \epsilon)$-approximate MST of the $(1, 2)$-metric obtained from $G$.
    Then, $T$ removes at most $c + \epsilon(n + c - 2)$ edges of $G$, where $n$ is the number of vertices in $G$, and $c$ is the number of cycles.
\end{claim}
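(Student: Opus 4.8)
The plan is a short counting argument driven entirely by the weight of $T$. Since $G$ is a disjoint union of $c$ cycles spanning all $n$ vertices, it contains exactly $n$ edges, and these are precisely the weight-$1$ edges of the $(1,2)$-metric; so ``removing'' a weight-$1$ edge just means not using one of these $n$ edges in $T$. As already observed right before the claim, the MST of the metric has weight exactly $n + c - 2$, hence the $(1+\epsilon)$-approximation guarantee gives $w(T) \le (1+\epsilon)(n+c-2)$.

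Next I would split $T$ according to edge weight: let $a$ be the number of weight-$1$ edges and $b$ the number of weight-$2$ edges that $T$ uses. Because $T$ is a spanning tree on the $n$ points of the metric, $a + b = n - 1$, and therefore $w(T) = a + 2b = (n-1) + b$. Plugging this into the inequality above isolates $b$:
\[
b \;=\; w(T) - (n-1) \;\le\; (1+\epsilon)(n+c-2) - (n-1) \;=\; (c-1) + \epsilon(n+c-2).
\]

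Finally, the number of weight-$1$ edges of $G$ not used by $T$ — i.e.\ the number removed — equals $n - a = n - (n-1-b) = b + 1$, which by the previous display is at most $c + \epsilon(n+c-2)$, exactly the claimed bound. I do not anticipate any real obstacle here: the argument is purely arithmetic once one records that $G$ has $n$ weight-$1$ edges and reuses the pre-established value $n+c-2$ for the optimal MST weight; the only mild point of care is noting that the two edge-type counts in a spanning tree satisfy $a+b=n-1$, which is immediate, and that (since $G$ is a disjoint union of cycles) there are no weight-$1$ edges between distinct cycles, so no subtlety arises in the classification of edges.
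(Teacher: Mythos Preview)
Your proposal is correct and is essentially the same counting argument as the paper's proof: the paper parametrizes directly by the number $k$ of removed weight-$1$ edges, notes that such a tree has weight $n+k-2$, and inverts the inequality $n+k-2\le(1+\epsilon)(n+c-2)$; you instead split into $a$ weight-$1$ and $b$ weight-$2$ edges and compute the number removed as $b+1$, which is the same computation in different variables.
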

\begin{proof}
Observe that any spanning tree that removes $k$ edges of $G$, contains $n - k$ weight-$1$ edges, and $k - 1$ weight-$2$ edges. Therefore, it has weight $n + k - 2$ edges.
As mentioned, the MST has weight $n + c - 2$ edges since it removes $c$ weight-$1$ edges, one from each cycle.
Hence, any $(1 + \epsilon)$-approximate MST has weight at most 
$$(1 + \epsilon)(n + c - 2) = n + \big(c + \epsilon(n + c - 2) \big) - 2,$$
and removes at most $c + \epsilon(n + c - 2)$ weight-$1$ edges.
\end{proof}

Given an MPC algorithm that finds a $(1 + \epsilon)$-approximation of the MST in $(1, 2)$-metric, we examine the probability of each weight-$1$ edge getting removed. This probability is used to determine the $\frac{1}{\epsilon}$-neighborhood of a vertex in \cref{clm:neighborhood-detection}. 
We keep referring to the subgraph of weight-$1$ edges as $G$.
We show that by a random reordering of the vertex IDs, any algorithm can be transformed such that for any two isomorphic edges, the probability of being removed by the algorithm is the same. 

\begin{definition}
    Given a graph $G$, two edges $(u_1, v_1)$ and $(u_2, v_2)$ are isomorphic if there is an automorphism of $G$ that maps one to the other.
    That is, if there is a bijection $\phi: V(G) \to V(G)$, such that $\phi(u_1) = u_2$, $\phi(v_1) = v_2$, and for any pair of vertices $u$ and $v$, there is an edge $(u, v)$ if and only if there an edge $(\phi(u), \phi(v))$.
\end{definition}

\begin{lemma} \label{lem:reordering}
    Given an MPC algorithm $\A$ that computes a $(1 + \epsilon)$-approximate MST with high probability, through a random reordeing of the vertex IDs, another algorithm $\A'$ can be obtained that solves the same problem using the same space and number of rounds,
    such that for any two isomorphic edges, the probability of being removed by $\A'$ is the same.
\end{lemma}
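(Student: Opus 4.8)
The plan is to have $\A'$ relabel the vertices uniformly at random before calling $\A$, and then translate the answer back. Formally, on input the $(1,2)$-metric with weight-$1$ graph $G$, algorithm $\A'$ draws a uniformly random bijection $\pi\colon V(G)\to V(G)$, forms the relabeled instance $\pi(G)$ (the metric in which $\pi(u)$ and $\pi(v)$ are at the same distance as $u$ and $v$), runs $\A$ on $\pi(G)$ to obtain a spanning tree $T'$, and outputs $T=\pi^{-1}(T')$. Sampling and applying a uniformly random permutation of $[n]$ — give each vertex an independent $\Theta(\log n)$-bit key and sort to read off ranks — together with relabeling the input entries and translating the output, is a routine sorting-based MPC primitive \cite{goodrich2011sorting} costing $O(1/\delta)=O(1)$ extra rounds and a constant-factor space overhead; thus $\A'$ matches $\A$ in asymptotic space and round complexity.

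Next I would verify correctness. Relabeling is an isometry, so $T=\pi^{-1}(T')$ is a spanning tree of $G$ with exactly the weight that $T'$ has as a spanning tree of $\pi(G)$; hence $T$ is a $(1+\epsilon)$-approximate MST of $G$ whenever $T'$ is one of $\pi(G)$. Since $\pi(G)$ has $n$ vertices, for each fixed $\pi$ the algorithm $\A$ succeeds with probability at least $1-n^{-c}$; averaging over $\pi$ and union-bounding against the $n^{-\Omega(1)}$ chance that the keys collide shows $\A'$ still succeeds with high probability.

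The main point is the symmetry. Fix two isomorphic edges $e_1=(u_1,v_1)$ and $e_2=(u_2,v_2)$ and an automorphism $\phi$ of $G$ with $\phi(u_1)=u_2$, $\phi(v_1)=v_2$; in particular $\phi(G)=G$ as a weighted graph, since every weight-$1$ edge has the same weight and the metric is determined by $G$. By construction $\A'$ removes $(u,v)$ iff $\A$ run on $\pi(G)$ does not output $(\pi(u),\pi(v))$, so
\[
\Pr[\A'\text{ removes }e_2]=\Pr_{\pi}\big[\A\text{ removes }(\pi(u_2),\pi(v_2))\text{ on }\pi(G)\big].
\]
Replacing $u_2=\phi(u_1)$, $v_2=\phi(v_1)$, and $G=\phi(G)$, the right-hand side becomes $\Pr_{\pi}\big[\A\text{ removes }((\pi\!\circ\!\phi)(u_1),(\pi\!\circ\!\phi)(v_1))\text{ on }(\pi\!\circ\!\phi)(G)\big]$. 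Now set $\rho:=\pi\circ\phi$: as $\pi$ ranges uniformly over all bijections of $V(G)$, so does $\rho$, because the map $\pi\mapsto\pi\circ\phi$ is a bijection of the set of all permutations of $V(G)$, with inverse $\rho\mapsto\rho\circ\phi^{-1}$. Therefore this probability equals $\Pr_{\rho}\big[\A\text{ removes }(\rho(u_1),\rho(v_1))\text{ on }\rho(G)\big]=\Pr[\A'\text{ removes }e_1]$. (The internal coins of $\A$ sit in the same probability space, independent of $\pi$, so the substitution $\pi\mapsto\pi\circ\phi$ touches only $\pi$.)

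I expect the only real difficulty to be bookkeeping precision: tracking that $\A'$'s output edge $(u,v)$ corresponds to $\A$'s output edge $(\pi(u),\pi(v))$ and that ``removed'' is invariant under this correspondence, and confirming $\phi(G)=G$ at the level of the weighted graph rather than merely the unweighted one. The MPC implementation of the random relabeling is routine given the sorting primitive the paper already uses, and introduces no new bottleneck.
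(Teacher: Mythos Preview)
Your proposal is correct and follows essentially the same approach as the paper: construct $\A'$ by applying a uniformly random permutation to the vertex IDs before invoking $\A$, then show equality of removal probabilities for isomorphic edges via the change of variables $\pi\mapsto\pi\circ\phi$ (the paper writes it as $\pi'=\pi\circ\phi^{-1}$, going in the other direction). You add more detail than the paper on the MPC implementation of the random relabeling and on why the output remains a $(1+\epsilon)$-approximate MST, but the core symmetry argument is identical.
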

\begin{proof}
    The algorithm $\A'$ works as follows.
    It draws a permutation $\pi: V(G) \to V(G)$ uniformly at random.
    Then, it gives each vertex $u$ the  label $\pi(u)$, and runs $\A$ on the resulting graph, i.e.\ these labels are the new vertex IDs that $\A$ uses.
    
    Take any two isomorphic edges $e = (u, v)$ and $e' = (u', v')$, with automorphism $\phi: V(G) \to V(G)$, that is $u' = \phi(u)$ and $v' = \phi(v)$.
    Observe that any vertex $\phi(u)$ is labeled $\pi(u)$ if we relabel with $\pi \circ \phi^{-1}$.
    Therefore, since $e$ and $e'$ are isomorphic,
    the output of $\A'$ on $e$ when the permutation $\pi$ is drawn is the same as its output on $e'$ when permutation $\pi \circ \phi^{-1}$ is drawn,
    where by output we mean whether the edge is removed by the algorithm.
    Consequently, it holds that:
    \begin{align*}
        \Pr(\A' \textnormal{ removes } e)
        &= \sum_{\pi} \Pr(\pi \textnormal{ is drawn})\Pr(\A \textnormal{ removes } e \textnormal{ after relabling with } \pi) \\
        &= \sum_{\pi} \Pr(\pi \textnormal{ is drawn})\Pr(\A \textnormal{ removes } e' \textnormal{ after relabling with } \pi \circ \phi^{-1}) \\
        &= \sum_{\pi'} \Pr(\pi' \textnormal{ is drawn})\Pr(\A \textnormal{ removes } e' \textnormal{ after relabling with } \pi') \\
        &= \Pr(\A' \textnormal{ removes } e').
    \end{align*}
    Here, the first and last equalities hold by definition.
    The second holds because $e$ and $e'$ are isomorphic under $\phi$,
    and the third holds by the change of variable $\pi' = \pi \circ \phi^{-1}$. 
    This concludes the proof.
\end{proof}

With that, we are ready to prove \cref{clm:neighborhood-detection}.
\begin{proof}[Proof of \cref{clm:neighborhood-detection}]
Given $\frac{1}{\epsilon^k}$ cycles of length $\epsilon^k n$,
the goal is to find the $\frac{1}{\epsilon}$-neighborhood of each vertex.
We are given an algorithm $\A$ that computes a $(1 + \epsilon)$-approximate MST for $(1, 2)$-metrics with high probability, using $O(n^\delta)$ space per machine and $\poly(n)$ total space, in $R$ rounds.
Due to \cref{lem:reordering},
we can assume that $\A$ removes isomorphic edges with the same probability. Specifically, in a graph that is a disjoint union of cycles, the length of a cycle determines the removal probability of its edges.

As a warm-up, we show how to compute the $\frac{1}{\epsilon}$-neighborhoods assuming we can calculate the exact probability of each edge being removed. Later, we show how to remove this assumption by approximating the probabilities through sampling.

Given two vertices $u_1$ and $u_2$, we show how to determine whether they are within distance $D = \frac{1}{20\epsilon}$ of each other. We take a neighbor $v_1$ of $u_1$, and a neighbor $v_2$ of $u_2$, and obtain a new graph $G'$ by performing a two-switch on $(u_1, v_1)$ and $(u_2, v_2)$.
We do this for all the four possible ways of choosing the neighbors (each vertex has two possible neighbors),
and study the removal probabilities in each case when the algorithm $\A$ is run on $G'$.

The goal is to distinguish the case where $u_1$ and $u_2$ are at most $D$ edges apart, from the other cases. Roughly, if $u_1$ and $u_2$ are close, then doing the two-switch with the appropriate neighbors creates a small cycle. The edges of a small cycle have a larger removal probability than an average edge, and hence can be identified.
More formally, we examine the removal probabilities of all the edges after the two-switch, and use them to assert if $u_1$ and $u_2$ have distance at most $D$. A summary of the different scenarios appears in \Cref{tab:cases}.

Consider the case where $u_1$ and $u_2$ are in different cycles \textbf{(case 1)}. Doing the two-switch for any choice of neighbors will join the two cycles in $G'$.
That is, $G'$ contains $\frac{1}{\epsilon^k} - 2$ cycles of length $\epsilon^kn$, call their edges $E_1$, and one cycle of length $2\epsilon^k n$, call its edges $E_2$.
Observe that all the edges in $E_1$ are isomorphic, therefore the probability of each of them being removed is the same.
The same holds for $E_2$.
The removal probability in $E_1$ and $E_2$ may or may not be the same.
That is, either all the edges have the same removal probability \textbf{(case 1-a)}, or there are two groups with different probabilities: one of size $2\epsilon^k n$, and another of size $n - 2\epsilon^k n$ \textbf{(case 1-b)}.

Consider the case where $u_1$ and $u_2$ are in the same cycle, and $v_1$ and $v_2$ lie on different paths from $u_1$ to $u_2$ \textbf{(case 2)}.
In this case, after the two-switch, the graph still consists of $\frac{1}{\epsilon^k}$ cycles of length $\epsilon^k n$.
Therefore, since all the edges are isomorphic,
all the removal probabilities will be the same.

Consider the case where $u_1$ and $u_2$ are in the same cycle, and $v_1$ and $v_2$ lie on the same path from $u_1$ and $u_2$.
Doing the two-switch will break the cycle into two cycles, one containing $(u_1, u_2)$, and the other containing $(v_1, v_2)$.
First, we examine the situation where one of the two cycles has length at most $D$ \textbf{(case 3)}, say the one containing $(u_1, u_2)$ (the other scenario can be handled similarly).
Then, $G'$ consists of $\frac{1}{\epsilon^k} - 1$ cycles of length $\epsilon^k n$, call the edges $E_1$, one cycle of length at most $D$, call the edges $E_2$, and another cycle of length at least $\epsilon^k n - D$, call the edges $E_3$.
We analyze the removal probabilities in each of these edge sets.

The edges in $E_1$ are all isomorphic, and there are $n - \epsilon^k n \geq \frac{n}{2}$ of them. $\A$ computes a $(1 + \epsilon)$-approximation of the MST. Therefore, by \Cref{clm:edge-removal-apx-mst}, with high probability the number of edges it removes is at most (here, $c$ is the number of cycles in $G'$)
$$
c + \epsilon(n + c - 2)
= \frac{1}{\epsilon^k} + \epsilon\left( n + \frac{1}{\epsilon^k} - 2\right)
\leq \epsilon n + \epsilon(n + \epsilon n - 1)
\leq 3\epsilon n.
$$
Therefore, even if all these edges are removed from $E_1$, their removal probability is at most:
$$
\frac{3\epsilon n}{\card{E_1}} \leq \frac{3\epsilon n}{n / 2} \leq 6\epsilon.
$$
 
The edges of $E_2$ are isomorphic to each other, and there are at most $D = \frac{1}{20\epsilon}$ of them.
Since $\A$ computes a tree with high probability,
and $E_2$ is a cycle,
there must be at least one edge removed from $E_2$ with high probability.
Therefore, the removal probability in $E_2$ is at least:
$$
\frac{1/2}{D} \geq 10\epsilon.
$$

The edges of $E_3$ are all isomorphic. Their removal probability might be the same as $E_1$ \textbf{(case 3-a)}, the same as $E_2$ \textbf{(case 3-b)}, or different from both \textbf{(case 3-c)}.
As long as it is different from the removal probability in $E_2$, we can recognize that $u_1$ and $u_2$ are close seeing as that there is a group of at most $D$ edges that have a higher than $10 \epsilon$ removal probability.
If $E_3$ has the same removal probability as $E_2$, we need to investigate further.
In this case, the edges of $E_2$ and $E_3$ can be identified since they form a group of $\epsilon^k n$ edges with removal probability higher than $10 \epsilon$, and the rest of the edges have probability at most $6\epsilon$.
Putting together the vertices of $E_2$ and $E_3$, we can identify all the vertices that are in the same cycle of $G$ as $u_1$ and $u_2$.

Having identified the vertices of the cycle $C$ in $G$ that contains $u_1$ and $u_2$.
We obtain $G''$ by doing the same two-switch on $C$, i.e.\ on the edges $(u_1, v_1)$ and $(u_2, v_2)$. Note that the other vertices of $G$ are not included. Let $E'_2$ be the edges of the cycle in $G''$ that contains $u_1$ and $u_2$, and let $E'_1$ be the edges of the other cycle.
The key thing to notice is that this time since there are no other edges this time, $E'_2$ and $E'_3$ can be distinguished.

It can be shown using the same proof that the removal probability of $E'_2$ is still at least $10\epsilon$.
However, since $G''$ has $\epsilon^k n$ vertices and only two cycles,
the number of edges the algorithm removes from the $G''$ is at most (here, $c$ is the number of cycles in $G''$):
$$
c + \epsilon(\card{V(G'')} + c - 2) \leq 2 + \epsilon(\epsilon^k n + 2 - 2)
\leq 2 + \epsilon^{k+1}n.
$$
The number of edges in $E_2$ is at least $\epsilon^kn - \frac{1}{\epsilon} \geq
\frac{\epsilon^k n}{2}$, where the inequality holds because we are not in the base case and the length of the cycle is at least $\epsilon^kn \geq \frac{1}{\epsilon^2}$.
Therefore, the removal probability in $E'_2$ is at most:
$$
\frac{2 + \epsilon^{k+1}n}{\epsilon^k n / 2} \leq 3\epsilon.
$$
As a result, this case is also distinguishable, since there is one group of at most $D$ edges $E'_2$ with removal probability at least $10\epsilon$, and a group of at least $\epsilon^kn$ edges with removal probability at most $3\epsilon$.

The last case is where $u_1$ and $u_2$ are in the same cycle, $v_1$ and $v_2$ lie on the same path from $u_1$ to $u_2$, and both the cycles created after the two-switch have length larger than $D$ \textbf{(case 4)}. We define $E_1$, $E_2$, and $E_3$ similarly.
That is, $E_2$ is the edges of the cycle in $G'$ that contains $(u_1, u_2)$,
$E_3$ is the edges of the cycle in $G'$ that contains $(v_1, v_2)$,
and $E_1$ is the rest of the edges.
Using the same argument, one can show that the removal probability in $E_1$ is at most $6\epsilon$.
If $E_2$ and $E_3$ both have the same removal probability as $E_1$ \textbf{(case 4-a)}, then we can recognize $u_1$ and $u_2$ are further than $D$ apart, since all the edges have the same removal probability.
If $E_2$ and $E_3$ have the same removal probability as each other but different from $E_1$ \textbf{(case 4-b)}, then we can identify the vertices of the cycle in $G$ that contains $u_1$ and $u_2$, and treat it as in case 3-b.
Otherwise, if $E_2$ and $E_3$ have different removal probabilities \textbf{(case 4-c)}, we can recognize that $u_1$ and $u_2$ have distance more than $D$, since there is no group of fewer than $D$ edges that have the same removal probability.

\begin{table}
    \centering
    \resizebox{\textwidth}{!}{%
    \begin{tabular}{|c|c|c|l|}
\hline
 {\scriptsize  Same cycle} & {\scriptsize  Same path} & {\scriptsize  Length $\leq \frac{1}{20\epsilon}$} & {\scriptsize  Possible removal probabilities} \\
\hline
No & - & - & 
\begin{tabular}{l}
     All the edges have the same probability. \textbf{(1-a)} \\ Two distinguishable groups with sizes $2\epsilon^kn$ and $n - 2\epsilon^kn$. \textbf{(1-b)} 
\end{tabular} \\
\hline
 Yes & No & - &  
 \begin{tabular}{l}
 All the edges have the same probability.
 \textbf{(2)}
 \end{tabular}
 \\
\hline
 Yes & Yes & Yes &  \begin{tabular}{l}
    There are $n - \epsilon^k n$ edges with probability $\leq 6\epsilon$, \\
    $\leq \frac{1}{20\epsilon}$ edges with probability $\geq 10\epsilon$, \\
    and all the remaining edges have probabilities \\
    same as the first group \textbf{(3-a)}, \\
    same as the second group \textbf{(3-b)}, or \\
    different from both \textbf{(3-c)}.
 \end{tabular}\\
\hline
 Yes & Yes & No &  \begin{tabular}{l}
      All the edges have the same probabilities. \textbf{(4-a)}\\
      Two distinguishable groups of size $n - \epsilon^k n$ and $\epsilon^k n$. \textbf{(4-b)} \\
      Three distinguishable groups, one with $n - \epsilon^k n$ edges,\\ and the other two with $>\frac{1}{20\epsilon}$ edges. \textbf{(4-c)}
 \end{tabular}\\
\hline
\end{tabular}
}

    \caption{A summary of the possible scenarios. The columns denote: (1) whether $u_1$ and $u_2$ are in the same cycle in $G$;
    (2) if $v_1$ and $v_2$ lie on the same path from $u_1$ to $u_2$;
    (3) denotes whether one of the resulting cycles after the two-switch has length at most $\frac{1}{20\epsilon}$;
    and (4) the possible removal probabilities of the edges in that case.}
    \label{tab:cases}
\end{table}

Given two vertices $u_1$ and $u_2$, we can use these characterizations to detect if their distance is at most $D = \frac{1}{20\epsilon}$.
We do this for every pair of vertices, and then use $\ceil{\log 20}$ rounds of doubling to find the $\frac{1}{\epsilon}$-neighborhood of each vertex.

\paragraph{Approximating the removal probabilities:}
So far, we had assumed that given a graph $G'$,
we have access to the removal probability of each edge when $\A$ is run on the $(1, 2)$-metric derived from $G'$.
We lift that assumption here by approximating the probabilities through repeated sampling.

For each $G'$ obtained by doing a two-switch on $(u_1, v_1)$ and $(u_2, v_2)$,
we create $S = \Theta(\frac{\log n}{\epsilon^2})$ copies, and execute $\A$ independently on each copy in parallel.
For an edge $e \in G'$, let $p_e$ be the probability of $\A$ removing $e$, and let $\ph_e$ be the fraction of times $e$ was removed in the copies we executed. That is, $\ph_e$ is an approximation of $p_e$.
The additive version of the Chernoff bound gives:
\begin{equation}
\Pr(\card{\ph_e - p_e} > \epsilon/4) \leq 2e^{-32S\epsilon^2} \leq n^{-c}, \label{eq:prob-apx}
\end{equation}
where we can take $c$ to be an arbitrarily large constant.

Now, we group the edges based on their approximated removal probabilities. We call two edges $e_1$ and $e_2$ \emph{close} if $\card{\ph_{e_1} - \ph_{e_2}} < \frac{\epsilon}{2}$. 
We define the groups as the transitive closure of closeness.
That is, we let $e$ and $e'$ be in the same group if there is a chain of edges $e_0, e_1, \ldots, e_l$,
such that $e_0 = e$, $e_l = e'$, and $\card{\ph_{e_i} - \ph_{e_{i+1}}} \leq\frac{\epsilon}{2}$ for $0 \leq i \leq l-1$.
These groups can be computed in $O(1)$ rounds of MPC, by sorting the edges based on their approximated removal probabilities.

Observe that due to \eqref{eq:prob-apx}, if two edges have the same probability, then their approximated probability will differ by at most $\frac{\epsilon}{2}$ with high probability, in which case they will be grouped together.
Furthermore, if the probability of two edges differs by more than $\epsilon$, then they will not be close with high probability (however, they can still be grouped together through a chain of edges).
Consequently, since there are at most three distinct probabilities in all the cases,
if the probability of two edges differs by more than $2\epsilon$, then they will not be joined.
Specifically,
the edges with removal probability at least $10\epsilon$ will not be grouped with the edges that have removal probability at most $6\epsilon$, with high probability.

A pair of vertices $u_1$ and $u_2$ are within distance $\frac{1}{20\epsilon}$ of each other if and only if for some choice of neighbors, $v_1$ and $v_2$, case 3 happens.
Given the characterizations of the groups, and the different cases (\Cref{tab:cases}), we can tell when $(u_1, v_1)$ and $(u_2, v_2)$ are in case 3 (i.e.\ 3-a, 3-b, or 3-c) as follows.

\begin{enumerate}
    \item Given $(u_1, v_1)$ and $(u_2, v_2)$, obtain $G'$ by doing the two-switch on them.
    Then, obtain the approximated probabilities $\ph_e$ by running the $\A$ on $S = \Theta\left(\frac{\log n}{\epsilon^2}\right)$ instances of $G'$ independently. Group the edges as described above, and for each group, make note of the number of edges and the range of their approximated probabilities. \label{step:small-cycle}
    \item If there is a group of fewer than $\frac{1}{20\epsilon}$ edges, such that at least one of them has approximated probability larger than $9\epsilon$, report that this is case 3.\footnote{There is one corner case that has to be handled separately. If this group of edges with probability larger than $9\epsilon$ contains $(v_1, v_2)$, and it includes exactly $\frac{1}{20\epsilon}$ edges, then $u_1$ and $u_2$ have distance $\frac{1}{20\epsilon} + 1$ from each other.}
    \item Otherwise, if there is a group of exactly $\epsilon^k n$ edges such that at least one of them has approximated probability larger than $9\epsilon$, call their vertices $C$ (these vertices form a cycle in $G$).
    Obtain graph $G''$ by applying the same two-switch, i.e.\ on $(u_1, v_1)$ and $(u_2, v_2)$ to the graph $G[C]$.
    If $G''$ has a group of fewer than $\frac{1}{20\epsilon}$ edges, such that at least one of them has approximated probability larger than $9\epsilon$, report that this is case 3. \label{step:big-cycle}
    \item Otherwise, report that this is not case 3.
\end{enumerate}

This correctly distinguishes case $3$. For cases 3-a and 3-c, step \ref{step:small-cycle} reports that case 3 has happened.
Otherwise, step \ref{step:big-cycle}, further investigates the two cases 3-b and 4-b, and distinguishes between them.
All the other cases go to the last step, where it is reported case $3$ has not happened.
As explained, one can use this to tell whether $u_1$ and $u_2$ are within distance $\frac{1}{20\epsilon}$ of each other, which concludes the proof.
\end{proof}

\cref{clm:cycle-breakdown} follows easily from \cref{clm:neighborhood-detection}.
\begin{proof}[Proof of \cref{clm:cycle-breakdown}]
    We are given a graph $G$, comprising $\frac{1}{\epsilon^k}$ cycles $C_0, \ldots, C_{1/\epsilon^k-1}$ of length $\epsilon^k n$, and one vertex $u^i_0$ per cycle.
    Using \cref{clm:neighborhood-detection}, we compute for each vertex $u$, its $\frac{1}{\epsilon}$-neighborhood.
    This neighborhood can be collected in one machine, i.e.\ the subpath of length $\frac{2}{\epsilon}$ centered at the vertex is retrieved.

    To create $G'$, on the same vertex set, we connect each vertex to the two vertices at distance $\frac{1}{\epsilon}$ from it in $G$.
    Since the length of the cycles is divisible by $\frac{1}{\epsilon}$, each cycle $C_i$ is broken down into $\frac{1}{\epsilon}$ cycles $C^i_{0}, \ldots, C^i_{1/\epsilon - 1}$ of length $\epsilon^{k+1}n$.
    Also, $u^i_j$ is already computed for $0 \leq i \leq 1/\epsilon^k - 1$ and $0 \leq j \leq 1/\epsilon - 1$ using \cref{clm:neighborhood-detection}. It is the subpath starting at $u^i_0$.
\end{proof}

\section*{Acknowledgments}
Part of this work was conducted while the first and second named authors were visiting the Simons Institute for the Theory of Computing as participants in the Sublinear Algorithms program.

\printbibliography

\clearpage
\appendix
\section{Lower Bounds (LOCAL and Component-Stable MPC)}

In this section, we present lower bounds for MST approximation on $(1, 2)$-metrics.
First, we prove a lower bound for a slightly easier problem in the LOCAL model with shared randomness.
Then, we use the framework of \cite{GhaffariKU19} and \cite{CzumajDP21a} to lift it and get a $\Omega\left(\log \frac{1}{\epsilon}\right)$-round lower bound for component-stable MPC algorithms that compute a $(1 + \epsilon)$-approximate MST in $(1, 2)$-metrics.

We note that while the lower bound proved in this section is weaker than the lower bound of \cref{thm:lb} which does not require the component stability assumption, we state it nonetheless to contrast lower bounds obtained from this framework with our lower bound of \cref{thm:lb}.

\subsection{LOCAL Model}\label{subsec:lb-local-random}
\paragraph{The LOCAL Model:}
In the distributed LOCAL model, given a graph $G$,
there is a machine on each vertex.
Each machine has infinite memory and computation power, and it is given its adjacency list in the beginning.
The computation proceeds in synchronous rounds.
Each round has a processing phase, where each machine can do arbitrary computation on its data,
and a communication phase,
where each machine can send a message (of arbitrary length) to each of its neighbors.
In the end, each machine reports the part of the output that pertains to it.
In the LOCAL model with shared randomness, all machines can access an infinite random tape that is shared between them.

We restate some of the claims and definitions from \cref{sec:lb}.
Given a graph $G$, the $(1, 2)$-metric obtained from $G$ is such that for any two vertices $u$ and $v$, their distance in the metric is $1$ if and only if there is an edge between them in $G$.
We say a spanning tree $T$ of the metric removes an edge $e \in G$ if $e \notin T$ (\cref{def:removed-edges}). That is, $T$ is constructed by removing some edges from $G$, turning it into a forest, and then connecting the resulting components with weight-$2$ edges.
When $G$ is a disjoint union of cycles, a $(1 + \epsilon)$-approximate MST of this metric removes at least one edge from each cycle, and at most $c + \epsilon(n + c - 2)$ edges in total, where $n$ is the number of vertices, and $c$ is the number of cycles (\cref{clm:edge-removal-apx-mst}).

To model MST approximation on $(1, 2)$-metrics in the LOCAL setting, we relax the problem.
Given a graph $G$ that is a disjoint union of cycles, the approximate MST of the $(1, 2)$-metric obtained from $G$ is of interest.
The machines are allowed to communicate over $G$.
The LOCAL algorithm is required only to decide which edges of $G$ appear in the approximate MST and which are removed, i.e.\ it only needs to compute the weight-$1$ edges that appear in the approximate MST.
Formally, we define:
\begin{problem} \label{prob:almost-spanning-forest}
    In the $\epsilon$-\emph{almost-spanning forest} problem, we are given a parameter $0 < \epsilon < 1$, and a graph $G$ with $n$ vertices such that every component is either a cycle or a path. Let $c$ be the number of the cycles. The goal is to remove some of the edges such that
    \begin{enumerate}
        \item at least one edge in each cycle is removed, and
        \item at most $c + \epsilon n$ edges are removed.
    \end{enumerate}
    The set of remaining edges is called an $\epsilon$-almost-spanning forest.
\end{problem}
Observe that the second requirement is a relaxation of the original $c + \epsilon(n + c - 2)$ upper bound, up to a rescaling of $\epsilon$.
Also, the family of the graphs over which the problem is defined includes paths as components.
This is for the benefit of the next section. For now, one can think of the input graph as a disjoint union of cycles.

We prove the following lower bound:

\begin{theorem} \label{thm:lb-local-rand}
    Any LOCAL algorithm with shared randomness that solves \cref{prob:almost-spanning-forest} with high probability, takes at least $\Omega\left(\frac{1}{\epsilon}\right)$ rounds to finish.\footnote{The lower bound is still true even if the second requirement of \cref{prob:almost-spanning-forest} holds in expectation.}
\end{theorem}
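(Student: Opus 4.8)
The plan is a contradiction argument based on \emph{local indistinguishability}: within $o(1/\epsilon)$ rounds, an edge lying deep inside a \emph{short} cycle (of length $\Theta(1/\epsilon)$) cannot be distinguished from an edge lying deep inside a very long cycle, yet the former must be removed at rate $\Omega(\epsilon)$ while the latter may be removed at rate only $\epsilon + o(1)$. Intuitively, detecting that one sits inside a short cycle requires seeing $\Omega(1/\epsilon)$ hops of it.

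Suppose $\A$ is a LOCAL algorithm with shared randomness solving \cref{prob:almost-spanning-forest} in $r$ rounds and succeeding with probability $\ge 1 - 1/\poly(n)$; assume toward a contradiction that $r \le \frac{1}{100\epsilon}$. (It suffices to treat $\epsilon$ below a fixed constant, since otherwise the claimed bound is $O(1)$.) Let $L := \lfloor \tfrac{1}{10\epsilon}\rfloor$, so that $2r+2 < L$, and consider the input $G := C_L \sqcup C_{n-L}$, the disjoint union of a cycle on $L$ vertices and a cycle on $n-L$ vertices, with the vertex identifiers drawn from $\{1,\dots,n\}$ by a uniformly random bijection, independent of $\A$'s random tape $\rho$. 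Since $2r+2 < L \le n-L$, the $r$-ball around any edge of $G$ — on either cycle — is a simple path on $2r+2$ vertices; and because the identifiers restricted to any $2r+2$ cyclically-consecutive positions form a uniformly random sequence of distinct elements of $\{1,\dots,n\}$, the \emph{distribution} of this labeled ball is one and the same for every edge of $G$, regardless of which cycle it lies on.

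The first step is the routine observation that whether $\A$ removes an edge $e$ is a deterministic function $f(B_e,\rho)$ of the labeled $r$-ball $B_e$ around $e$ and the shared tape $\rho$ (it depends on the outputs of $e$'s two endpoints, each determined by that endpoint's $r$-ball together with $\rho$, and both endpoint-balls are read off from $B_e$). Hence $\Pr[\A \text{ removes } e]$ depends only on the law of $B_e$, which by the previous paragraph is identical for all $n$ edges of $G$; call this common value $p$. Now combine the two requirements of \cref{prob:almost-spanning-forest}. Requirement~(1) forces $\A$ to remove at least one edge of $C_L$ with high probability, so $L\cdot p = \E[\#\{\text{removed edges of }C_L\}] \ge 1 - 1/\poly(n)$, whence $p \ge (1-o(1))/L \ge 9\epsilon$ for large $n$. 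Requirement~(2) (needed only in expectation) says the total number of removed edges is at most $c + \epsilon n = 2 + \epsilon n$ with high probability, so $n\cdot p = \E[\#\{\text{removed edges of }G\}] \le 2 + \epsilon n + o(1)$, whence $p \le \epsilon + o(1) \le 2\epsilon$ for large $n$. Together, $9\epsilon \le p \le 2\epsilon$, a contradiction; therefore any such $\A$ needs $r = \Omega(1/\epsilon)$ rounds.

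The steps I expect to be mechanical are verifying that $B_e$ has the same law for short- and long-cycle edges, that ``removed'' is a function of the endpoints' $r$-balls, and the two elementary estimates on $p$. The one point that genuinely needs care — and the crux — is guaranteeing that a deep short-cycle edge is truly indistinguishable from a deep long-cycle edge; this is exactly why we (i) assign identifiers by a uniformly random bijection, so that no local view leaks information about the global cycle it belongs to, and (ii) take $L > 2r+2$, so the $r$-ball never wraps around the short cycle and hence never exposes its length. Without the random relabeling, $\A$ could in principle read the cycle length off the identifier pattern, so the random embedding is essential to make the ``deep short-cycle edge $\equiv$ deep long-cycle edge'' coupling valid.
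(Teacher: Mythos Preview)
Your proof is correct and follows essentially the same approach as the paper's: a two-cycle instance with one cycle of length $\Theta(1/\epsilon)$, random vertex IDs to make all $r$-balls identically distributed, and the resulting tension between the per-edge removal probability $p$ forced by the short cycle versus the global budget. The only cosmetic difference is that the paper uses a union bound to argue the short cycle is missed with constant probability when $p$ is small, whereas you use linearity of expectation in both directions; the constants differ accordingly but the idea is identical.
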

\begin{proof}
    We make some definitions.
For a vertex $u$ in $G$, the $r$-neighborhood of $u$, denoted by $N_r(u)$, is the subgraph consisting of the vertices with distance at most $r$ from $u$ (identified by their vertex IDs), and the edges with at least one endpoint with distance at most $r-1$. Observe this is slightly different than the definition of $r$-neighborhood in \cref{sec:lb}. 

In the LOCAL setting, after $r$ rounds, at any vertex $u$, the knowledge of the graph is limited to $N_r(u)$.
Therefore, the output of any $r$-round LOCAL algorithm with shared randomness at a vertex $u$ can be expressed as a function of its $r$-neighborhood $N_r(u)$ and the shared random tape $\pi$, i.e.\ $f(N_r(u), \pi)$.\footnote{The proof still works if we allow a dependence on $n$, which is necessary for lifting the lower bound in the next section.}
The output of each node is whether each of its adjacent edges should be removed, and adjacent nodes must agree on whether the edge between them should be removed.

Let $G$ consist of two cycles: a cycle of length $\frac{1}{4\epsilon}$, and a cycle of length $n-\frac{1}{4\epsilon}$.
Intuitively, to solve the problem, each vertex must be able to recognize whether it is in a smaller cycle.
Because (approximately) a $4 \epsilon$ fraction of the edges in the smaller cycle must be removed, whereas only a $\epsilon$ fraction of the edges in the larger cycle can be removed. The claim is that $\Omega\left(\frac{1}{\epsilon}\right)$ rounds are needed to do so.

Take any algorithm that runs in $r = \frac{1}{10\epsilon}$ rounds.
Let $I$ be the set of vertex IDs in $G$,
and let $I_u$ be the ID of a vertex $u$.
Let $G'$ a graph obtained from $G$ by applying a uniformly random permutation $\pi: I \to I$  to the vertex IDs.
That is, vertex $u$ in $G'$ gets a new ID $\pi(I_u)$.
This creates a distribution $\D$ over the input graphs. We study the probability of the algorithm succeeding on a graph $G'$ drawn from $\D$.

For any two vertices $u$ and $v$, and edges $e_u$ and $e_v$ adjacent to $u$ and $v$ respectively,
the probability that $u$ removes $e_u$ in $G'$,
is the same as the probability that $v$ removes $e_v$.
Recall that the output at $u$ is a function of $N_r(u)$ and $\pi$, and the distribution of $(N_r(u), \pi)$ is the same as the distribution of $(N_r(v), \pi)$.
Specifically, for both vertices,
the neighborhood is a path of length $2r$ where each vertex has a random ID from $I$.

Take this probability to be $p$.
If $p \geq 2\epsilon$, then the algorithm does not satisfy the second requirement of \cref{prob:almost-spanning-forest}, that is does not remove at most $2 + \epsilon n$ edges with high probability.
If $p < 2\epsilon$, then the algorithm does not satisfy the first requirement,
since the smaller cycle has length $\frac{1}{4\epsilon}$ and removing the edges with probability $2\epsilon$ would mean that no edges are removed with probability $1 - 2\epsilon \cdot \frac{1}{4\epsilon} = \frac{1}{2}$. Observe, that if the algorithm solved the problem on any graph with high probability, it would have done so also on this distribution.
Therefore, any algorithm that runs in $\frac{1}{10\epsilon}$ rounds does not solve \cref{prob:almost-spanning-forest}
with high probability.
\end{proof}

\subsection{MPC Lower Bound}\label{sec:lift-to-MPC}
In this subsection, we prove a lower bound for \cref{prob:almost-spanning-forest} in the MPC model by using the framework of \citet*{GhaffariKU19} and \citet*{CzumajDP21a} to lift the lower bound in \cref{subsec:lb-local-random}. 

\subsubsection*{Background}
Here, we review the definitions and the main result of \cite{CzumajDP21a}. For our application, we have removed the dependence on the maximum degree from the statements.

Briefly explained, their framework studies labeling problems defined on normal graph families (closed under vertex deletion and disjoint union). Then, assuming the \onetwocycle{} conjecture, a randomized LOCAL lower bound of $\Omega(T(n))$ rounds implies a component-stable MPC lower bound of $\Omega(\log T(n))$ rounds, given that $T$ is a \enquote{well-behaved} (constrained) function.

\begin{definition}
    A \emph{vertex-labeling problem} is characterized by a possible set of labels $\Sigma$, and a set of valid labelings $\L_G$ for each graph $G$, where a labeling $L: V(G) \to \Sigma$ is a choice of label $L(u)$ for each vertex $u$. Given a graph $G$, the goal is to find a valid labeling $L \in \L_G$.
\end{definition}

They introduce the notion of component-stability which captures many natural MPC algorithms.
For a vertex $u$, let $CC(u)$ denote the topology of its component and the vertex-ID assignment on it.
Then, component stability is formally defined as follows:

\begin{definition}
    A randomized MPC algorithm is \emph{component-stable}
    if at any vertex $u$, the output is a function $n$ (the number of vertices), $CC(u)$ (the topology and the IDs of $u$'s component), and $\pi$ (the random tape shared between all machines).
\end{definition}

For the definition of component-stability to make sense,
for each vertex, a \emph{name} and an \emph{ID} is defined.
The names are unique, but the IDs are required to be unique only within each component.
The output of a component-stable algorithm may depend only on the IDs, while the names are used to distinguish between the vertices in the output.

The other definitions are rather technical, and the motivations behind them lie within the proof of \cref{thm:local-to-mpc} (the main theorem of \cite{CzumajDP21a}).

\begin{definition}
    Let $G$ be a graph such that $\card{V(G)} \geq 2$. Let $\Gamma_G$ be a graph with at least $\card{V(G)}^R$ copies of $G$ with identical vertex-ID assignment, and at most $\card{V(G)}$ isolated vertices with the same ID. Let $L$ be a labeling of $\Gamma_G$ such that all the copies of $G$ have the same labeling, and all the isolated vertices have the same labeling.
    Let $L'$ be the restriction of $L$ to a copy of $G$.
    
    A labeling problem $\P$ is \emph{$R$-replicable} if for any $G$, $\Gamma_G$, $L$, and $L'$ as defined above, whenever $L$ is a valid labeling for $\Gamma_G$,
    then $L'$ is a valid labeling for $G$. 
\end{definition}

\begin{definition}
    A family of graphs $\H$ is \emph{normal} if it is closed under vertex deletion and disjoint union.
\end{definition}

\begin{definition}
    A function $T : \mathbb{N} \to \mathbb{N}$ is constrained if $T = O(\log^\gamma(n))$ for some $\gamma \in (0, 1)$, and $T(n^c) \leq c \cdot T(n)$.\footnote{an alternative definition could be $T(n)^{-T(n)} = n^{-o(1)}$.}
\end{definition}

\begin{theorem} [Theorem 14 of \cite{CzumajDP21a}] \label{thm:local-to-mpc}
    Let $\P$ be an $O(1)$-replicable labeling problem that has $T(n)$ lower bound in the randomized LOCAL model with shared randomness, where $T$ is a constrained function, $n$ is the number of vertices, and $\P$ is defined on a normal family of graphs. Then, under the \onetwocycle{} conjecture,\footnote{equivalently, connectivity conjecutre}
    any component-stable MPC algorithm that solves $\P$ with success probability $1 - \frac{1}{n}$ must take $\Omega(\log T(n))$ rounds to finish.
\end{theorem}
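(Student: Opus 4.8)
The plan is to argue the contrapositive: assuming the \onetwocycle{} conjecture, any component-stable MPC algorithm $\A$ for $\P$ that uses $n^{1-\Omega(1)}$ local space, $\poly(n)$ total space, and runs in $R = o(\log T(n))$ rounds with failure probability at most $1/n$ can be turned into a shared-randomness LOCAL algorithm solving $\P$ in $o(T(n))$ rounds, contradicting the assumed $\Omega(T(n))$ LOCAL lower bound. Two facts carry the argument. First, component stability already puts $\A$ in ``LOCAL shape'': the output at a vertex $u$ is a function of $CC(u)$, the vertex count, and the shared tape $\pi$. Second -- and this is where the conjecture enters -- an $R$-round MPC algorithm can only probe the $2^{O(R)}$-radius ball of $u$ inside its component, because propagating information further on the bounded-degree hard instances (cycles and paths) amounts to a long-range connectivity task that the conjecture forbids in $o(\log(\cdot))$ rounds. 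Composing the two, $\A$ becomes a LOCAL algorithm of radius $2^{O(R)} = T(n)^{o(1)} = o(T(n))$.

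To set this up cleanly I would work with a fixed hard instance. Let $H$ be a graph in the normal family $\H$ on which no shared-randomness LOCAL algorithm solves $\P$ in $o(T(n_H))$ rounds, $n_H = |V(H)|$, and fix an ID-assignment $\iota$ on $H$. Form $\Gamma_H$ as in the definition of $O(1)$-replicability: $n_H^{\Theta(R)}$ disjoint copies of $H$, each carrying the same IDs $\iota$ (IDs need only be unique within a component), plus at most $n_H$ isolated vertices with a common ID; set $N := |V(\Gamma_H)|$, and note $\Gamma_H \in \H$ by normality. Run $\A$ on $\Gamma_H$. By component stability the labeling it outputs is the same on every copy of $H$ -- call this common restriction $L'$ -- and likewise constant on the isolated vertices, so the full labeling $L$ is exactly of the form covered by the replicability definition. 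If, for some draw of $\pi$, $L'$ is not a valid labeling of $H$, then $L$ is invalid on \emph{every} copy and hence on $\Gamma_H$; therefore $\Pr_\pi[L'\text{ invalid for }H] \le \Pr[\A\text{ fails on }\Gamma_H] \le 1/N \le 1/n_H$. By $O(1)$-replicability, validity of $L$ on $\Gamma_H$ implies validity of $L'$ on $H$, so up to this $1/n_H$ error $\A$ yields valid labelings of $H$; this hypothesis is also precisely what stops the padding from destroying any ``global'' constraint of $\P$.

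It remains to localize $\A$. By component stability, $L'(u)$ is a function of $(CC_{\Gamma_H}(u), N, \pi) = (H\text{-with-}\iota,\, N,\, \pi)$, which a priori ``knows'' all of $H$. The content of the framework is that, conditioned on the \onetwocycle{} conjecture, $R$ rounds of MPC cannot exploit this: if $L'(u)$ depended on the part of $H$ outside its $2^{\Theta(R)}$-ball, one could construct two inputs that agree on that ball but on which $\A$ must output differently, and use $\A$ as a black box to distinguish the \onetwocycle{} instances of size $\Theta(m)$ for $m = 2^{\Theta(R)}$ in $O(R) = o(\log m)$ rounds -- a contradiction. Hence $L'$ is computable by a shared-randomness LOCAL algorithm on $H$ of radius $2^{O(R)}$ (up to lower-order overhead), with success probability $\ge 1 - 1/n_H$. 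Because $R = o(\log T(n))$ and $N = n_H^{\Theta(R)}$ while $T$ is constrained -- so $T(N) = O(R\cdot T(n_H))$ and $2^{o(\log T)} = T^{o(1)} = o(T)$ -- this LOCAL algorithm runs in $o(T(n_H))$ rounds, contradicting the lower bound for $H$. Therefore $R = \Omega(\log T(n))$.

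I expect the main obstacle to be making the localization step of the previous paragraph fully rigorous: formalizing ``$R$ MPC rounds $\Rightarrow$ dependence only on the $2^{O(R)}$-ball'' as an honest reduction to \onetwocycle{}, while correctly handling the shared random tape, the bounded-degree structure of $\H$, the local- and total-space budgets, and -- most delicately -- the blow-up from $n_H$ to $N = n_H^{\Theta(R)}$, which is exactly where the constrained-ness of $T$ and $O(1)$-replicability are consumed. Everything else (the replication construction, the probability bookkeeping, checking $\Gamma_H \in \H$) is routine once this core lemma is in hand; indeed the whole argument is a specialization of the proof of Theorem~14 of \cite{CzumajDP21a}, which is what we invoke.
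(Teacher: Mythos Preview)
This theorem is not proved in the paper at all: it is quoted verbatim as Theorem~14 of \cite{CzumajDP21a} in the background subsection and used as a black box to lift the LOCAL lower bound of \cref{thm:lb-local-rand} to the component-stable MPC setting. The paper's only ``proof'' is the citation, so there is nothing to compare your attempt against.

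That said, your sketch is a reasonable high-level summary of the ideas in \cite{CzumajDP21a}, and you correctly identify the roles of component stability (making the output depend only on the component, $n$, and the shared tape), replicability (surviving the padding to $\Gamma_H$), normality (keeping $\Gamma_H$ in the family), and constrainedness of $T$ (controlling the blow-up from $n_H$ to $N$). You are also honest that the crux---turning ``$R$ MPC rounds $\Rightarrow$ dependence only on a $2^{O(R)}$-ball'' into an actual reduction to \onetwocycle{}---is the part you have not made rigorous; that is indeed where essentially all the technical work of \cite{CzumajDP21a} lies, and your paragraph does not supply it. For the purposes of this paper, however, no proof is needed: simply cite the result, as the authors do.
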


\subsubsection*{Our Lower Bound}

We prove a lower bound for the $\epsilon$-almost-spanning forest problem in the (component-stable) MPC model. 

\begin{theorem} \label{thm:lb-mpc}
Let $\delta \in (0, 1)$ and $\epsilon = \Omega\left(\frac{1}{\poly \log n}\right)$.
Under the \onetwocycle{} conjecture, any component-stable MPC algorithm with $O(n^\delta)$ space per machine and total space $\poly(n)$ that solves the $\epsilon$-almost-spanning forest problem (\cref{prob:almost-spanning-forest}) with  probability $1 - \frac{1}{n}$ must take $\Omega\left(\log \frac{1}{\epsilon}\right)$ rounds to complete.
\end{theorem}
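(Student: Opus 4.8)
The plan is to instantiate the lifting machinery of \cref{thm:local-to-mpc} with $\P$ equal to the $\epsilon$-almost-spanning forest problem (\cref{prob:almost-spanning-forest}) and $T(n) = \Theta(1/\epsilon)$. Three things must be checked before \cref{thm:local-to-mpc} applies: (i) that the randomized LOCAL lower bound of $\Omega(1/\epsilon)$ holds for $\P$ on a normal family of graphs; (ii) that $\P$ is $O(1)$-replicable; and (iii) that the function $T$ we are using is \emph{constrained} in the sense of the definition above. Point (i) is exactly \cref{thm:lb-local-rand}, whose statement is about graphs each of whose components is a cycle or a path; this family is closed under vertex deletion (deleting a vertex from a cycle yields a path, deleting from a path yields paths) and under disjoint union, so it is normal.

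For replicability, I would argue directly from the definition. Suppose $\Gamma_G$ consists of $\geq \card{V(G)}^R$ vertex-ID-identical copies of $G$ together with $\leq \card{V(G)}$ isolated vertices, that $L$ is a valid labeling of $\Gamma_G$ in which every copy of $G$ carries the same labeling $L'$ and every isolated vertex the same label, and that $L$ is valid — i.e., $L$ removes at least one edge from every cycle of $\Gamma_G$ and removes at most $c_\Gamma + \epsilon\,\card{V(\Gamma_G)}$ edges, where $c_\Gamma$ is the number of cycles of $\Gamma_G$. The isolated vertices contribute no edges and no cycles. If $G$ has $n_0 = \card{V(G)}$ vertices and $c_0$ cycles, and there are exactly $m \geq n_0^R$ copies of $G$ (ignoring the isolated vertices which only inflate $\card{V(\Gamma_G)}$), then $c_\Gamma = m\,c_0$ and the number of removed edges in $\Gamma_G$ is $m\cdot k$ where $k$ is the number of edges $L'$ removes in one copy. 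Validity of $L$ forces $m k \le m c_0 + \epsilon\,(m n_0 + (\text{isolated vertices})) \le m c_0 + \epsilon\, m n_0 \cdot (1 + 1/n_0^{R-1})$, hence dividing by $m$ gives $k \le c_0 + \epsilon n_0(1 + o(1))$; a mild rescaling of $\epsilon$ (absorbed into the $\Theta(\cdot)$, which the statement allows via $\epsilon = \Omega(1/\poly\log n)$) yields $k \le c_0 + \epsilon n_0$. Also $L$ removing at least one edge from each cycle of $\Gamma_G$ implies $L'$ removes at least one edge from each cycle of $G$, since every cycle of $G$ appears as a cycle in $\Gamma_G$ and all copies are labeled identically. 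Thus $L'$ is valid for $G$, so $\P$ is $O(1)$-replicable. (The slight arithmetic slack is why \cref{prob:almost-spanning-forest} uses the clean bound $c + \epsilon n$ rather than $c + \epsilon(n+c-2)$.)

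For the constrained-function condition: we are only claiming the theorem for $\epsilon = \Omega(1/\poly\log n)$, so $T(n) = \Theta(1/\epsilon) = O(\log^\gamma n)$ for a suitable $\gamma \in (0,1)$, matching the first requirement. For the second requirement, $T(n^c) \le c\, T(n)$, note that treating $\epsilon$ as a fixed function of $n$ we have $T(n^c) = \Theta(1/\epsilon(n^c))$; since we may take $\epsilon$ to be a fixed power of $1/\log n$, say $\epsilon(n) = (\log n)^{-\gamma}$, then $\epsilon(n^c) = (c\log n)^{-\gamma} = c^{-\gamma}\epsilon(n)$, so $T(n^c) = c^{\gamma} T(n) \le c\, T(n)$ for $c \ge 1$ and $\gamma \le 1$. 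Hence $T$ is constrained.

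Putting these together: \cref{thm:lb-local-rand} gives the $\Omega(1/\epsilon)$ randomized LOCAL lower bound with shared randomness for an $O(1)$-replicable labeling problem on a normal graph family, with constrained $T(n)=\Theta(1/\epsilon)$, so \cref{thm:local-to-mpc} yields that, under the \onetwocycle{} conjecture, every component-stable MPC algorithm with $n^\delta$ space per machine and $\poly(n)$ total space that solves $\P$ with probability $1 - 1/n$ needs $\Omega(\log T(n)) = \Omega(\log \frac1\epsilon)$ rounds. I expect the main obstacle to be not any single deep step but rather the bookkeeping in the replicability argument — specifically making sure the additive slack introduced by the $\le \card{V(G)}$ isolated vertices and by the $\Theta(\cdot)$ in $T$ can be charged to a harmless rescaling of $\epsilon$, and double-checking that \cref{prob:almost-spanning-forest}'s counting bound $c + \epsilon n$ is exactly what survives the copy-averaging. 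A secondary point to verify carefully is that \cref{prob:almost-spanning-forest} genuinely fits the ``vertex-labeling problem'' format of \cite{CzumajDP21a}: the output is per-edge, but each edge-removal decision can be encoded as a label on its endpoints with a consistency constraint, which is a standard reformulation and is already implicit in how \cref{thm:lb-local-rand} is phrased.
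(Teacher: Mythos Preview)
Your overall approach matches the paper's exactly: check normality, replicability, the labeling-problem encoding, and the constrained-function hypothesis, then invoke \cref{thm:local-to-mpc} with the LOCAL lower bound of \cref{thm:lb-local-rand}. The replicability and normality arguments are essentially the paper's (the paper states $1$-replicability and gets $k \le c_0 + 2\epsilon n_0$, absorbing the factor $2$ the same way you do).

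There is, however, a genuine gap in your treatment of the constrained-function condition. You write that since $\epsilon = \Omega(1/\poly\log n)$ we have $T(n) = 1/\epsilon = O(\log^\gamma n)$ for some $\gamma \in (0,1)$. That is false in general: the theorem statement allows, say, $\epsilon = 1/\log^5 n$, in which case $T(n) = \log^5 n$ is not $O(\log^\gamma n)$ for any $\gamma < 1$, so $T$ is \emph{not} constrained and \cref{thm:local-to-mpc} does not apply directly. Your subsequent check of $T(n^c)\le cT(n)$ also silently assumes $\gamma \le 1$. The paper handles this by splitting into two regimes. When $\epsilon = \Omega(1/\log^\gamma n)$ for some $\gamma \in (0,1)$, your argument goes through verbatim. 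When $\epsilon = \Theta(1/\log^a n)$ for $a \ge 1$, the paper observes that any $\epsilon$-almost-spanning forest is in particular a $(1/\sqrt{\log n})$-almost-spanning forest (since $\epsilon \le 1/\sqrt{\log n}$), so the already-established lower bound for the parameter $\epsilon' = 1/\sqrt{\log n}$ applies, yielding $\Omega(\log\sqrt{\log n}) = \Omega(\log\log n) = \Omega(\log(1/\epsilon))$ rounds. You need this second case (or an equivalent reduction) to cover the full range $\epsilon = \Omega(1/\poly\log n)$.
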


By a simple reduction, we can derive the following lower bound for MST approximation in $(1, 2)$-metrics:

\begin{theorem} \label{thm:lb-mst-mpc}
Let $\delta \in (0, 1)$ and $\epsilon = \Omega\left(\frac{1}{\poly \log n}\right)$.
Assuming the \onetwocycle{} conjecture, any MPC algorithm with $O(n^\delta)$ space per machine and total space $\poly(n)$ that computes a $(1 + \epsilon)$-approximation of MST in $(1, 2)$-metrics with  probability $1 - \frac{1}{n}$, and is component-stable on the weight-$1$ edges, must take $\Omega\left(\log \frac{1}{\epsilon}\right)$ rounds to complete.
\end{theorem}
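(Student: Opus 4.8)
The plan is to obtain \cref{thm:lb-mst-mpc} by a simple reduction from the $\epsilon$-almost-spanning forest problem (\cref{prob:almost-spanning-forest}), whose component-stable MPC lower bound is exactly \cref{thm:lb-mpc}. Concretely, suppose $\A$ is an MPC algorithm with $O(n^\delta)$ space per machine and $\poly(n)$ total space that, with probability $1-\tfrac1n$, outputs a $(1+\epsilon)$-approximate MST of any $(1,2)$-metric, and whose output restricted to the weight-$1$ edges is component-stable with respect to the weight-$1$ components. I would build from $\A$ an algorithm $\A'$ that solves the $2\epsilon$-almost-spanning forest problem with probability $1-\tfrac1n$, using $O(n^\delta)$ space per machine, $O(n^2)=\poly(n)$ total space, and $R+O(\tfrac1\delta)$ rounds, where $R$ is the round complexity of $\A$, and such that $\A'$ is component-stable. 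Plugging $\A'$ into \cref{thm:lb-mpc} with parameter $2\epsilon=\Omega(1/\poly\log n)$ then forces $R+O(\tfrac1\delta)=\Omega\!\left(\log\tfrac1{2\epsilon}\right)=\Omega\!\left(\log\tfrac1\epsilon\right)$ (the last equality being trivial once $\epsilon$ is below a fixed constant, and the bound being vacuous otherwise), which is the claimed lower bound.

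The algorithm $\A'$ is the following. Given an input graph $G$ on $n$ vertices in which every component is a cycle or a path, it first materializes the $(1,2)$-metric of $G$: using sorting and $\left(\tfrac1\delta\right)$-ary tree broadcasts it records, for every pair $(u,v)$, weight $1$ if $uv\in E(G)$ and weight $2$ otherwise; this takes $O(\tfrac1\delta)$ rounds and $O(n^2)$ total space. It then runs $\A$ on this metric to obtain a spanning tree $T$ and outputs as \enquote{removed} precisely the weight-$1$ edges (i.e.\ the edges of $G$) that are not in $T$. The per-machine space, total space, and round bounds claimed for $\A'$ are immediate from those of $\A$ together with the $O(\tfrac1\delta)$-round preprocessing, and the number of vertices is unchanged, so the $1-\tfrac1n$ success probability and the $O(n^\delta)$ space carry over verbatim.

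For correctness I would use the straightforward extension of \cref{clm:edge-removal-apx-mst} to inputs that are disjoint unions of $c$ cycles and $p$ paths: such a $G$ has $n-p$ edges, its MST in the $(1,2)$-metric has weight $n+c+p-2$, and a spanning tree that omits $k$ edges of $G$ has weight $n+p+k-2$. Hence whenever $T$ is a genuine $(1+\epsilon)$-approximate MST (which happens with probability $\ge 1-\tfrac1n$), the number $k$ of removed weight-$1$ edges satisfies $n+p+k-2\le(1+\epsilon)(n+c+p-2)$, i.e.\ $k\le c+\epsilon(n+c+p-2)\le c+2\epsilon n$, using $c+p-2\le n$; moreover, since $T$ is acyclic it must omit at least one edge of every cycle. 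Thus the removed set is a valid $2\epsilon$-almost-spanning forest, so $\A'$ solves \cref{prob:almost-spanning-forest} (with parameter $2\epsilon$) with probability $\ge 1-\tfrac1n$.

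The one point that genuinely needs care—and the only real subtlety of the proof—is the component stability of $\A'$. Here I would argue that in the instance fed to $\A$, the weight-$1$ component of any vertex $u$ is exactly its component in $G$, and that the weight-$2$ edges are a fixed, deterministic function of $G$ (they are simply the non-edges), so they carry no information coupling distinct weight-$1$ components. Since by hypothesis $\A$'s decision on which weight-$1$ edges incident to $u$ belong to $T$ is a function only of $n$, the topology and IDs of $u$'s weight-$1$ component, and the shared random tape, the output of $\A'$ at $u$ (which of $u$'s $G$-edges are removed) is a function of the same data; the metric-construction preprocessing is itself local to $G$ and deterministic, hence does not disturb this. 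Therefore $\A'$ is component-stable in the sense required by \cref{thm:local-to-mpc} and \cref{thm:lb-mpc}, and the reduction goes through. I expect this bookkeeping—reconciling the hypothesis \enquote{component-stable on the weight-$1$ edges} with the formal notion of component stability used in \cref{thm:lb-mpc}, and checking that neither the added weight-$2$ edges nor the preprocessing can be exploited to break stability—to be the main (and essentially only) obstacle; everything else is routine.
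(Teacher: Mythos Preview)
Your proposal is correct and follows essentially the same reduction as the paper: build the $(1,2)$-metric from $G$, run the approximate MST algorithm, and declare removed the weight-$1$ edges not in the output tree, then invoke \cref{thm:lb-mpc}. Your treatment is in fact a bit more careful than the paper's (you explicitly handle path components in the edge count and spell out why component stability of $\A$ on the weight-$1$ edges transfers to $\A'$), but the approach and all key steps coincide.
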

\begin{proof}
    We give a reduction from the $\epsilon$-almost-spanning forest problem (\cref{prob:almost-spanning-forest}) to the $(1+\epsilon)$-approximation of MST in $(1, 2)$-metrics.
    Let $\A$ be an MPC algorithm as described in the statement that computes a $(1 + \epsilon/2)$-approximate MST. We derive an algorithm $\A'$ for \cref{prob:almost-spanning-forest}.
    Take an instance $G$ for \cref{prob:almost-spanning-forest}.
    Construct a $(1, 2)$-metric by letting all the pairs in $E(G)$ have weight $1$, and letting the rest of the pairs have weight $2$.
    Let $T$ be the output of $\A$ on the metric.
    We let the output of $\A'$ be $F$, the set of weight-$1$ edges in $T$, i.e.\ $F = T \cap G$.
    We claim that if $T$ is a $(1+\epsilon/2)$-approximate MST, then $F$ is an $\epsilon$-almost-spanning forest.
    
    First, observe that if $T$ is a $(1 + \epsilon/2)$-approximate MST, the first requirement of \cref{prob:almost-spanning-forest} is satisfied because $F$ is a subset of a tree, therefore, a forest.
    
    To see why the second requirement is satisfied, note that the weight of the minimum spanning tree in the metric is $2(n-1) - (\card{E(G)} - c)$, where $c$ is the number of cycles in $G$.
    Hence, $T$ has weight at most 
    \begin{align}
        w(T) 
        &\leq \left(1 + \frac{\epsilon}{2}\right)\Big(2(n-1) - (\card{E(G)} - c)\Big) \nonumber\\ 
        &= 2(n-1) - (\card{E(G)} - c) + \epsilon(n-1) - \frac{\epsilon}{2}\card{E(G)} + \frac{\epsilon}{2} c \nonumber\\
        &\leq 2(n-1) - (\card{E(G)} - c) + \epsilon(n-1) \label{eq:c-leq-E}\\
        &= 2(n-1) - \Big(\card{E(G)} - (c + \epsilon(n-1))\Big).\nonumber
    \end{align}
    Here, \eqref{eq:c-leq-E} follows from $c \leq \card{E(G)}$.
    This implies that $F$, in the notion of \cref{prob:almost-spanning-forest}, removes at most $c + \epsilon(n-1)$ weight-$1$ edges, i.e.\ $\card{G \setminus F} \leq c + \epsilon(n-1)$, and the second requirement is satisfied.

    In conclusion, this reduction provides an algorithm $\A'$ for \cref{prob:almost-spanning-forest}. Because whenever algorithm $\A$ outputs a $(1 + \epsilon/2)$-approximate $T$,
    then $\A'$ outputs an $\epsilon$-almost-spanning forest. Also, note that since $\A$ is component-stable on the weight-$1$ edges, so is $\A'$. This completes the reduction, and the lower bound is implied from \cref{thm:lb-mpc}.
\end{proof}

The remainder of this subsection is devoted to the proof of \cref{thm:lb-mpc}. We show that \cref{thm:local-to-mpc} applies to \cref{prob:almost-spanning-forest} and \cref{thm:lb-local-rand} through a series of claims.

\begin{claim} \label{clm:is-labeling}
    \cref{prob:almost-spanning-forest} can be expressed as a labeling problem.
\end{claim}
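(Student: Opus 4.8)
The plan is to encode a choice of removed edges through per-vertex labels, which is the standard device for casting edge problems in the labeling framework. Since every component of an input graph for \cref{prob:almost-spanning-forest} is a path or a cycle, every vertex has degree at most two. I would take the constant-size alphabet $\Sigma = \{0,1\}^2$. For a vertex $u$, list its incident edges in increasing order of the ID of the other endpoint (this is well defined since IDs are distinct within a component), padding with dummy slots when $\deg(u) < 2$. A label $L(u) = (b_1, b_2)$ is then interpreted as: $b_i = 1$ means $u$ declares its $i$-th incident edge removed and $b_i = 0$ means kept, with dummy slots forced to $0$.

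Next I would define the valid-labeling sets. Call a labeling $L$ \emph{consistent} if, for every edge $e = (x,y)$, the slot of $L(x)$ referring to $e$ agrees with the slot of $L(y)$ referring to $e$; a consistent $L$ then determines a well-defined removed set $X(L) \subseteq E(G)$. Put $L \in \L_G$ precisely when $L$ is consistent, $X(L)$ contains at least one edge of every cycle component of $G$, and $|X(L)| \le c + \epsilon n$, where $c$ is the number of cycle components and $n = |V(G)|$ (recall $\epsilon = \epsilon(n)$ is fixed, so this is a property of $G$ alone). Note that the definition of a vertex-labeling problem imposes no conditions on $\L_G$ --- in particular it need not be locally checkable --- so the global cardinality bound is permitted.

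It then remains to verify the two directions of the correspondence, which I expect to be immediate: from $L \in \L_G$ one reads off the $\epsilon$-almost-spanning forest $F = E(G) \setminus X(L)$, whose two defining properties hold by construction; conversely, any $\epsilon$-almost-spanning forest $F$ induces the consistent labeling in which each vertex marks exactly its incident edges lying in $E(G) \setminus F$ as removed, and this labeling belongs to $\L_G$. I do not anticipate a genuine obstacle here; the only points requiring care are that a vertex can canonically name its incident edges (which follows from within-component ID uniqueness, the only exception being the degenerate two-edge cycle on two vertices, excluded by taking the problem's graph family to consist of simple paths and cycles) and that $\L_G$ is allowed to encode the global constraint $|X(L)| \le c + \epsilon n$.
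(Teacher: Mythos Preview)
Your proposal is correct and follows essentially the same approach as the paper: encode each vertex's two potential incident edges with a label in $\{0,1\}^2$, require consistency across endpoints, and declare a labeling valid exactly when the induced removed set satisfies the two requirements of \cref{prob:almost-spanning-forest}. The only cosmetic differences are that the paper forces dummy slots to ``removed'' rather than ``kept'', and omits your explicit discussion of canonical edge ordering and the two-vertex multigraph caveat; neither affects the argument.
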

\begin{proof}
    We let the set of possible labels be $\{0, 1\}^2$, indicating whether each edge of the vertex is removed or not.
    Then, a labeling is valid if:
    \begin{enumerate}
        \item when an edge is absent, i.e.\ the vertex has only one or zero edges, then the label indicates the edge as removed,
        \item adjacent vertices agree on whether the edge between them is removed, and
        \item the two requirements of \cref{prob:almost-spanning-forest} are satisfied. \qedhere
    \end{enumerate}
\end{proof}

\begin{claim} \label{clm:is-normal}
    \cref{prob:almost-spanning-forest} is defined on a normal family of graphs.
\end{claim}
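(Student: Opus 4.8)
The plan is to verify directly that the family of graphs on which \cref{prob:almost-spanning-forest} is defined---call it $\H$, the set of all graphs each of whose connected components is either a cycle or a path---satisfies the two closure conditions in the definition of a normal family: closure under vertex deletion and closure under disjoint union.

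First I would handle disjoint union, which is immediate: the connected components of $G_1 \sqcup G_2$ are exactly the connected components of $G_1$ together with those of $G_2$, so if every component of $G_1$ and every component of $G_2$ is a cycle or a path, the same is true of $G_1 \sqcup G_2$. Hence $G_1, G_2 \in \H$ implies $G_1 \sqcup G_2 \in \H$.

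Next I would handle vertex deletion. It suffices to check what happens to the single component of $G$ containing the deleted vertex $v$, since all other components are untouched. There are two cases. If $v$ lies on a path component $P$, then $P - v$ is either empty (if $P$ was a single vertex), a single path (if $v$ was an endpoint), or two disjoint paths (if $v$ was an internal vertex)---in every subcase the result is a disjoint union of paths. If $v$ lies on a cycle component $C$ on $k$ vertices, then $C - v$ is a path on $k-1$ vertices (or empty, in the degenerate case $k=1$). In all cases, deleting $v$ replaces a cycle-or-path component by a disjoint union of cycles and paths, so $G - v \in \H$.

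There is essentially no obstacle here; the argument is a routine case analysis. The only point that needs a sentence of care is the vertex-deletion case within a single component---observing that an internal vertex of a path splits it into two paths, and that removing a vertex from a cycle yields a path---after which one concludes that $\H$ is normal, so that \cref{thm:local-to-mpc} is applicable to \cref{prob:almost-spanning-forest}.
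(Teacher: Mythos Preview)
Your proposal is correct and follows essentially the same approach as the paper: both verify closure under disjoint union by noting the components of the union are just the components of the parts, and closure under vertex deletion by the two-case analysis that removing a vertex from a cycle yields a path and removing a vertex from a path yields at most two paths. There is nothing missing.
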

\begin{proof}
    The family of graphs $\H$ on which the problem is defined, consists of graphs such that all the components are either cycles or paths.
    To see why $\H$ is normal,
    observe that after taking the disjoint union of two of these graphs, all the components are still cycles or paths.
    Therefore, $\H$ is closed under disjoint union.
    Also, deleting a vertex from a cycle turns it into a path,
    and deleting a vertex from a path
    turns it into at most two paths. Consequently, $\H$ is also closed under vertex deletion, which concludes the proof.
\end{proof}

\begin{claim} \label{clm:is-replicable}
    Finding an $O(\epsilon)$-almost-spanning forest is $1$-replicable.
\end{claim}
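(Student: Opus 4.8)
The plan is to unwind the definition of $1$-replicability directly. Recall the setup: $\Gamma_G$ consists of $N \ge \card{V(G)}$ vertex-disjoint copies of $G$, all carrying the same vertex-ID assignment, together with $m \le \card{V(G)}$ isolated vertices all sharing one ID; the labeling $L$ assigns the same labeling to every copy of $G$ and the same label to every isolated vertex, and $L'$ is the restriction of $L$ to a single copy of $G$. Writing $n = \card{V(G)}$ and letting $c$ be the number of cycle-components of $G$, I would assume that $L$ is a valid labeling for the $\epsilon$-almost-spanning forest problem on $\Gamma_G$ and show that $L'$ is a valid labeling for the $2\epsilon$-almost-spanning forest problem on $G$; since $2\epsilon = O(\epsilon)$, this is exactly the claimed $1$-replicability.

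For the two local validity conditions from \cref{clm:is-labeling} there is essentially nothing to do: each copy of $G$ inside $\Gamma_G$ is isomorphic to $G$ as a graph, so a vertex has the same degree in $G$ as in the copy, and hence condition (1) (absent edges are marked as removed) transfers to $L'$, while condition (2) (the two endpoints of an edge agree on whether it is removed) is inherited verbatim by restricting $L$ to a single copy. For the first requirement of \cref{prob:almost-spanning-forest}, every cycle of $G$ occurs as a cycle inside each copy in $\Gamma_G$, so validity of $L$ forces at least one of its edges to be removed there; because all copies receive the same labeling, that same edge is removed by $L'$ in $G$.

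The only quantitative point is the edge-count bound, and this is where a factor-$2$ slack (hence the $O(\epsilon)$ in the statement) enters. The isolated vertices of $\Gamma_G$ contribute no cycle-components and, by condition (1), no removed edges, so $\Gamma_G$ has exactly $Nc$ cycle-components and $n_\Gamma = Nn + m \le Nn + n \le 2Nn$ vertices (using $N \ge 1$ and $m \le n$). Validity of $L$ at parameter $\epsilon$ therefore gives that $L$ removes at most $Nc + \epsilon\, n_\Gamma \le Nc + 2\epsilon N n = N(c + 2\epsilon n)$ edges of $\Gamma_G$. Since all $N$ copies of $G$ carry identical labelings and the isolated vertices remove nothing, the edges removed by $L$ are exactly $N$ vertex-disjoint translates of those removed by $L'$, so $L'$ removes at most $c + 2\epsilon n$ edges of $G$ — precisely the bound required of a $2\epsilon$-almost-spanning forest of $G$.

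I do not expect a real obstacle here, since the argument is bookkeeping, but two points deserve care: the isolated vertices must be treated as trivial path-components (so that they add neither cycles nor removed edges to $\Gamma_G$), and the conclusion is only a $2\epsilon$-almost-spanning forest rather than an $\epsilon$-one, which is why both this claim and the subsequent application of \cref{thm:local-to-mpc} in the proof of \cref{thm:lb-mpc} are phrased in terms of $O(\epsilon)$-almost-spanning forests.
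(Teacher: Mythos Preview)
Your proof is correct and follows essentially the same approach as the paper's: verify the cycle condition by pulling back from a single copy, and for the edge-count bound use that $\Gamma_G$ has $Nc$ cycles and at most $(N+1)n$ vertices, then divide by $N$ to get the per-copy bound $c + 2\epsilon n$. The paper writes the vertex count as $(k+1)n$ and divides to get $(1+1/k)\epsilon n \le 2\epsilon n$, while you bound $Nn + m \le 2Nn$ upfront; these are the same computation. You are slightly more thorough in also checking the local labeling conditions (1) and (2) from \cref{clm:is-labeling}, which the paper leaves implicit.
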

\begin{proof}
    Let $G \in \H$ be a graph with $n$ vertices an $c$ cycles.
    Let $\Gamma_G$ be a graph with $k \geq n$ copies of $G$ and at most $n$ extra isolated vertices.
    Let $L$ be valid labeling of $\Gamma_G$ such that every copy of $G$ has the same labeling, and let $L'$ be the restriction of $L$ to a copy of $G$. To prove the claim, we need to show that $L'$ is valid for $G$. To do so, we check the two requirements of \cref{prob:almost-spanning-forest}.

    The first requirement is satisfied because otherwise there is a cycle in $G$ where $L'$ does not remove any edges. Therefore, $L$ does not remove any edges from the copies of the cycle either, and $L$ is invalid, which is a contradiction.

    To see why the second requirement is satisfied, observe that $L$ removes at most $kc + (k+1)\epsilon n$ edges from $\Gamma_G$.
    Therefore, the number of edges $L'$ removes from $G$, which is the number of edges $L$ removes from one copy of $G$, is at most:
    $$
    \frac{ck + (k+1)\epsilon n}{k} = c + \left(1 + \frac{1}{k}\right)\epsilon n \leq c + 2 \epsilon n.
    $$
    Hence, the second requirement holds but with a parameter that is a constant factor larger, namely $2\epsilon$. This is sufficient for the purposes of \cref{thm:local-to-mpc} (see the proof of Lemma 25 in \cite{CzumajDP21a} where this property is utilized, or Lemma 11 in the same paper for another example where the parameter is larger by a constant factor), and the problem is $1$-replicable.
\end{proof}

Putting these together gives \cref{thm:lb-mpc}.

\begin{proof}[Proof of \cref{thm:lb-mpc}]
First, we prove the theorem for $\epsilon = \Omega\left(\frac{1}{\log^\gamma n}\right)$, where $\gamma \in (0, 1)$. Note that $\frac{1}{\epsilon}$ is constrained. 
Due to \cref{thm:lb-local-rand}, we have a $\Omega\left(\frac{1}{\epsilon}\right)$ round lower bound for \cref{prob:almost-spanning-forest} in the randomized LOCAL model with shared randomness. Therefore, by \cref{clm:is-labeling,clm:is-normal,clm:is-replicable}, we can apply \cref{thm:local-to-mpc} and get a $\Omega\left(\log \frac{1}{\epsilon}\right)$ round lower bound for component-stable MPC algorithms, which concludes the proof.

For the cases with $\epsilon = \Theta\left(\frac{1}{\log^a n}\right)$, where $a \geq 1$, observe that any $\left(\frac{1}{\log^a n}\right)$-almost-spanning forest is a $\left(\frac{1}{\sqrt{\log n}}\right)$-almost-spanning forest.
Therefore, we get a $\Omega\left(\log \sqrt{\log n}\right)$, which is the same as $\Omega\left(\log \log^a n\right)$ and concludes the proof.
\end{proof}

\section{Implications for TSP}
\label{sec:tsp}

In the \emph{traveling salesman problem (TSP)}, given a weighted graph $G$, the goal is to find the shortest cycle that includes all the vertices (exactly once). When combined with the techniques of \cite{jayaram2024massively}, \cref{thm:main} leads to the following corollary for TSP.

\begin{theorem}
Given a metric, for any fixed $\delta \in (0, 1)$ and any $\epsilon > 0$, a $(2 + \epsilon)$-approximate TSP can be computed in $O\left(\log \frac{1}{\epsilon} + \log \log n\right)$ rounds of the MPC model, with $O(n^\delta)$ space per machine and $\widetilde{\Theta}(n^2)$ total space.
\end{theorem}

\begin{proof}
In metrics, any closed walk can be converted to a closed cycle with the same vertex set and at most the same weight.
To do so, it suffices to go over the vertices of the walk one by one and append each vertex to the output sequence if it has not already been added.
As the distances between the vertices satisfy the triangle inequality,
skipping over the repeated vertices does not increase the cost.
Therefore, the weight of the resulting cycle is not larger than the weight of the original walk.
This can be implemented in the MPC model in $O(1)$ rounds.

Since $\MST \leq \TSP$, any Eulerian tour of a $(1+\epsilon)$-approximate MST provides a $(2 + 2\epsilon)$-approximate TSP,
where an Eulerian tour of a tree $T$ is a closed walk on the edges of $T$ that includes each edge exactly twice.
\citet*{jayaram2024massively} showed that the Eulerian tour of a tree can be computed in few rounds of the MPC model provided that the tree satisfies certain hierarchical properties:

\begin{lemma}[Theorem 16 of \cite{jayaram2024massively}]
\label{lem:jmnz-tsp}
    Given a set of vertices $V,$ let $\P_0, \P_1, \ldots, \P_L$ be a hierarchy of partitions such that $\P_0 = \{ \{u\} \mid u \in V\}$ and $\P_L = \{V\}$.
    Let $T = \bigcup_{1 \leq i \leq L} E_i$ be a tree such that $\P_i = \P_{i-1} \oplus E_i$ and $\card{\P_{i-1}} - \card{\P_i} = \card{E_i}$.
    For each level $i$, consider the forest $F_i$ obtained by taking $E_i$ and contracting all the vertex sets of $\P_{i - 1}$, and let $\Lambda$ be an upperbound for the unweighted diameter of the components in $F_i$. 
    There exists an MPC algorithm that computes an Eulerian tour of $T$ in $O(\log L + \log \Lambda)$ rounds, using total space $O(nL + n^{1 + \delta})$ and space $O(n^\delta)$ per machine, where $\delta \in (0, 1)$ is an arbitrary constant.
\end{lemma}

The hierarchy $\{\Ph_t\}_{t = \alpha^k}$ and the edge sets $\{E_t\}_{t = \alpha^k}$, computed by \Cref{alg:main}, satisfy the hierarchical requirements.
The number of levels is $L = \log_{\alpha}n = O(\log n)$.
To bound $\Lambda$, observe that for each level $t = \alpha^k$, $E_t$ is obtained by performing $O\left(\log \frac{1}{\epsilon} + \log \log n\right)$ rounds of the modified \Boruvka{} algorithm where each round consists of computing and contracting a set of stars (step \ref{step:mst-boruvka}), and the edges computed in step \ref{step:mst-join-arbitrarily} can be taken to be a star.
Therefore, $\Lambda = \exp\left( O\left(\log \frac{1}{\epsilon} + \log \log n\right) \right)$.

To conclude the proof,
we can first compute the $(1 + \epsilon)$-approximate MST along with a hierarchy of partitions by utilizing \Cref{alg:main}, in $O\left(\log \frac{1}{\epsilon} + \log \log n\right)$ rounds, using total space $\Ot(n^2)$.
Then, by applying \cref{lem:jmnz-tsp}, we can obtain the $(2 + \epsilon)$-approximate TSP, in $O\left(\log \frac{1}{\epsilon} + \log \log n\right)$ rounds, using total space $O(n^{1 + \delta})$.
That is $O\left(\log \frac{1}{\epsilon} + \log \log n\right)$ rounds and $\Ot(n^2)$ total space overall, as desired. 
\end{proof}

\end{document}